 	\definecolor{carblue}{rgb}{0.2, 0.30, 5}
\newcommand{\dif}{\mathrm{d}}
\newcommand{\lr}[1]{\left( #1 \right)}
\newcommand{\lrbrace}[1]{\left\lbrace #1 \right\rbrace}
\newcommand{\lrbrkt}[1]{\left[ #1 \right]}
\newcommand{\restr}[2]{\left.{#1}\right\rvert_{#2}}
\newcommand{\skwend}[1]{\mathrm{SkewEnd}(#1)} 
\newcommand{\m}{{p_a}}
\newcommand{\mink}[1]{{\mathbb{M}^{#1}}} 
\newcommand{\til}{\widetilde} 
\newcommand{\tilg}{\widetilde{g}}
\newcommand{\Om}{\Omega}
\newcommand{\obs}{\mathcal{O}}
\newcommand{\nor}{u}
\newcommand{\tlt}{\mathring{t}}
\newcommand{\scri}{\mathscr{I}}
\newcommand{\con}{\kappa}
\def\Y{\xi} 
\def\Yv{\xi}
\def\Yf{{\bm \Yv}}
\def\a{\mathfrak{a}}
\def\b{\mathfrak{b}}
\def\Am{\mathcal{F}}
\def\Q{S}
\def\tcar{\tau}
\def\cc{\sigma}
\def\H{\mathcal{H}}
\def\y{y}
\def\h{h}
\def\L{L}
\def\w{w}
\def\mlim{\beta}
\def\ar{\mathrm{a}}
\def\br{\mathrm{b}}
\def\m{\alpha}
\def\param{\zeta}  
\def\ckill{\mathrm{CKill}}
\def\nabscr{\nabla^{(\gamma)}}
\def\conf{\mathrm{Conf}}
\def\confloc{\mathrm{ConfLoc}}
\def\limY{\Y'}
\def\man{\mathcal{M}}
\def\barg{\overline{g}}
\def\hatg{\widehat{g}}
 \def\neigh{\mathcal{U}}
 \def\neighp{\mathcal{V}}
    \def\ia{a}
   \def\pc{\chi}
   \def\psf{\psi}
\newcommand{\spn}[1]{\mathrm{span}\{ #1 \}}
\newtheorem{theorem}{Theorem}[section]
\newtheorem{proposition}{Proposition}[section]
\newtheorem{corollary}{Corollary}[theorem]
\newtheorem{lemma}{Lemma}[section]
\newtheorem{remark}{Remark}[section]
\newtheorem{definition}{Definition}[section]
\def\0{{\bf 0}}
\title{ Classification of Kerr-de Sitter-like spacetimes with conformally flat $\scri$ in all dimensions}
\author{
  Marc Mars and Carlos Pe\'on-Nieto\\
  Instituto de F\'{\i}sica  Fundamental y Matem\'aticas, Universidad de Salamanca \\
Plaza de la Merced s/n 37008, Salamanca, Spain}
\begin{document}

\maketitle 


\begin{abstract}
Using asymptotic characterization results of spacetimes at conformal infinity, we prove that Kerr-Schild-de Sitter spacetimes are in one-to-one correspondence with spacetimes in the Kerr-de Sitter-like class with conformally flat $\scri$. Kerr-Schild-de Sitter are spacetimes of Kerr-Schild form with de Sitter background that solve the $(\Lambda>0)$-vacuum Einstein equations and admit a smooth conformal compactification sharing $\scri$ with the background metric. Kerr-de Sitter-like metrics with conformally flat $\scri$ are a generalization of the Kerr-de Sitter metrics, defined originally in four spacetime dimensions \cite{Kdslike} and extended here to all dimensions in terms of their initial data at null infinity. We explicitly construct all metrics in this class as limits or analytic extensions of Kerr-de Sitter. The structure of limits is inferred from corresponding limits of the asymptotic data, which appear to be hard to guess from the spacetime metrics.
\end{abstract}

\section{Introduction}

    The study of asymptotic properties of gravitational fields in general relativity by means of conformal geometry dates back to the early works by Penrose on the gravitational radiation \cite{Penrose63}, \cite{Penrose64}, \cite{Penrose65}. These works give a precise formulation of asymptoticity in general relativity by introducing the concept of {\it conformal infinity} (also {\it null infinity}), which is a hypersurface $\scri$ given by the zero level set of a sufficiently differentiable function $\Om$, that determines a conformal scaling $g = \Om^2 \tilg$ of the physical metric $\tilg$ in such a way that $g$ regularly  extends to $\{\Om = 0 \}$. Aiming to resolve certain controversy raised by some of the asumptions in Penrose's work, Friedrich formulates his celebrated conformal field equations \cite{friedrich81bis,friedrich81} in four spacetime dimensions. These equations require a careful choice of variables and have the interesting property that they remain regular at $\scri$. 
    Remarkably, the asymptotic initial value problem with positive cosmological constant of the Friedrich conformal field equations is well-posed \cite{Fried86geodcomp}, which gives a method for characterization of spacetimes in terms of asymptotic initial data. In dimensions higher than four, one can also pose an asymptotic initial value problem of the $(\Lambda>0)$-vacuum Einstein equations in all even spacetime dimensions \cite{Anderson2005} and also in odd spacetime dimensions provided that the initial data are analytic \cite{kichenassamy03}. The formalism which allows for this is due to Fefferman and Graham \cite{FeffGrah85,ambientmetric}. The asymptotic Cauchy problem in the Fefferman-Graham picture is formulated in terms of asymptotic formal series expansions (i.e. ``near'' $\scri$) of conformally Einstein metrics in a particular conformal gauge. We will give more details of this in Section \ref{secFG}.

    The geometric characterizations of black hole spacetimes are of special interest in general relativity because of their relation with the famous uniqueness theorems of stationary black holes. More specifically, the no-hair conjecture asserts, roughly speaking, that every stationary electrovacuum black hole solution is entirely characterized by its (suitably geometrically defined) mass, angular momentum and electric charge. This conjecture has been extensively studied in the zero cosmological constant setting (see e.g. \cite{chrusciel12,mazur} and references therein). Under a few technical conditions, whose complete removal is still an open problem, the no-hair theorem singles our the Kerr spacetime as the unique asymptotically flat vacuum black hole spacetime. 
    

A local characterization of the Kerr and Kerr-NUT metrics among spacetimes with one Killing vector field can be given in terms of the vanishing of the so-called Mars-Simon tensor \cite{mars99,simon84}. 
In the non-zero cosmological constant case, the vanishing of the Mars-Simon tensor also characterizes \cite{marsseno15} the Kerr-NUT-(A)de Sitter metrics and related spacetimes. Recall that the latter generalize Kerr-NUT to the arbitrary cosmological constant setting, so they are also important from a physical perspective.  Particularly, in the case of positive cosmological constant, the geometric characterizations of Kerr-de Sitter are interesting in view of a potential extension of a uniqueness theorem of Kerr-de Sitter among $(\Lambda>0)$-vacuum stationary black hole spacetimes, or perhaps among stationary spacetimes admiting a regular cosmological horizon.  
       
 The results in \cite{marsseno15} are used in \cite{Kdslike,KdSnullinfty} to provide a characterization in terms of asymptotic initial data of Kerr-NUT-de Sitter metrics and related spacetimes, which altogether define the so-called {\it Kerr-de Sitter-like class} (see also \cite{gasperinkroon17,gaspthesis} for a similar characterization of Kerr-de Sitter and Schwarszchild-de Sitter with spinorial techniques). The asymptotic data of this class happen to be determined by the (conformal) geometry of $\scri$ as well as a conformal class of conformal Killing vector fields (CKVFs) of $\scri$. The latter is an equivalence class generated by a CKVF $\Y$ of $\scri$ up to (local) conformal diffeomorphisms of $\scri$ (cf. Appendix \ref{seclocconflat}). Interestingly, $\Y$ singles out a particular symmetry of the spacetime that  could be used to define asymptotic charges, which might lead to definitions of mass and angular momenta in the positive cosmological setting (see the arguments in \cite{marspeon21}).
 
 The Mars-Simon tensor is only defined in four spacetime dimensions, so it cannot be used for characterizations in higher dimensions. However, the asymptotic data of Kerr-de Sitter \cite{Kdslike,KdSnullinfty} have been proven to naturally extend to higher dimensions in \cite{marspeondata21}. The arbitrary dimensional Kerr-de Sitter metrics are due to Gibbons et al. \cite{Gibbons2005} and it is remarkable that their characterization in terms of asymptotic data in \cite{marspeondata21} gives a clear and strong argument why these metrics are indeed the natural extension of the Kerr-de Sitter metric to higher dimension,  beyond the original heuristic construction. Moreover, the asymptotic data of arbitrary dimensional Kerr-de Sitter happens to admit a further generalization. This is based on the fact that, just like in the four dimensional case, Kerr-de Sitter is asymptotically characterized by a conformally flat $\scri$ and a conformal class of CKVFs $[\Y]$, exclusively determined by the mass and rotation parameters of Kerr-de Sitter. By allowing $[\Y]$ to be an arbitrary conformal class, one generates a class of data which naturally extends to higher dimensions the Kerr-de Sitte-like class with conformally flat
 $\scri$.
 
 The starting point of this paper is the extension of the Kerr-de Sitter-like class with conformally flat $\scri$ outlined above, which has already been anticipated in \cite{marspeondata21}. The definition in terms of asymptotic data, however, lacks of an explicit spacetime metric. In this paper we shall construct all the metrics in this class by characterizing them as the so-called  {\it Kerr-Schild-de Sitter} metrics (cf. Definition \ref{defKSdS}), which are $(\Lambda >0)$-vacuum Kerr-Schild metrics which admit a locally conformally flat $\scri$ and satisfy a natural asymptotic decay condition. First, we prove that the Kerr-Schild-de Sitter metrics are contained in the Kerr-de Sitter-like class (with conformally flat $\scri$). For this part, we make essential use of the result in \cite{malek11} that all $\Lambda$-vacuum Kerr-Schild spacetimes are algebraically special is essential. We shall also prove the converse inclusion by taking advantage of the well-posedness of the asymptotic Cauchy problem. Namely, since the initial data of the Kerr-de Sitter-like class (with conformally flat $\scri$) are determined by a class of CKVFs $[\Y]$, we use the limits of classes of CKVFs studied in \cite{marspeon21} to determine limits of asymptotic data corresponding to Kerr-de Sitter, whose spacetime metric is known a priori. From the limits of data, we are able to infere spacetime limits, thus obtaining explicit metrics non-isometric to any member of the original Kerr-de Sitter family.
 It is worth to stress that the structure of limits in the even and odd spacetime dimensions is remarkably different. In the latter case, the data corresponding to Kerr-de Sitter is dense in the quotient topology (of the Lie algebra of CKVFs modulo the Lie group of conformal transformations), which allows to exhaust the Kerr-de Sitter-like class in terms of limits. When the spacetime dimension is even, there exists one family of spacetimes, different from Kerr-de Sitter, whose asymptotic data span an open domain of the quotient topology. This is constructed by an analytic extension of Kerr-de Sitter, which together with the limit spacetimes, exhaust the Kerr-de Sitter-like class.

 This paper is organized as follows. Section \ref{secpre} expands on the formal aspects of the problem presented above and gives a precise version of our results, including the definition of Kerr-Schild-de Sitter spacetimes, of Kerr-de Sitter-like class with conformally flat $\scri$ in all dimensions, our equivalence theorem between the two classes (Theorem \ref{theoKSKdS}) and the explicit form of the metrics (Theorem \ref{theoKSKdSmetrics}). Then, the basics of the Fefferman-Graham formalism and a list of some useful results in \cite{marspeondata21} are given in Section \ref{secFG}. The rest of the paper is devoted to the proof of both Theorems.  Section \ref{secKSsubKdS} proves the inclusion of Kerr-Schild-de Sitter metrics into the Kerr-de Sitter-like class. The proof of the converse inclusion along with the explicit construction of the metrics is provided in Section \ref{secKSsupKdS}.
 In Appendix \ref{seclocconflat} we address the problem of how to define conformal classes of CKVFs. The difficulty stems from the fact that we work locally on the manifold, so we need a suitable version of conformal equivalence defined in terms of local conformal diffeomorphisms. This difficulty is solved by showing that, in the locally conformally flat case, the local equivalence of CKVFs have a global correspondence in the $n$-sphere.

\section{General setting and main result}\label{secpre}
       
In this section we describe the general setting of the paper and state our main results (cf. Theorems \ref{theoKSKdS} and \ref{theoKSKdSmetrics}). Our analysis is based on the study of initial data at conformal infinity, so we begin by stating some fundamental facts on conformal geometry and conformal infinity.    
  
The spacetimes in this paper are $(n+1)$-dimensional with $n \geq 3$ and of Lorentzian signature $\{-1,+1,\cdots,+1 \}$. The metrics with tilde $\tilg$ denote smooth solutions of the $\Lambda$-positive Einstein equations 
\begin{equation}\label{EFE}
\widetilde R_{\alpha \beta}  = n \lambda \tilg_{\alpha \beta}, \quad\quad \lambda:= \frac{2 \Lambda}{n(n-1)}\quad(>0).
\end{equation}  
A spacetime manifold $(\man,g)$ with boundary $\partial \man$ is said to be a {\it conformal extension } of $(\widetilde \man, \tilg)$ if $\widetilde \man =  \mathrm{Int}(\man)$ and in addition there exists a function $\Om \in C^\infty (\man)$ positive on $\widetilde \man$ such that 
\begin{equation}
g = \Om^2 \tilg \quad\quad \mbox{and}\quad\quad \partial \man = \{\Om = 0 \cap \dif \Om \neq 0 \}.
\end{equation} 
In this section, $g$ need not to be smooth up to the boundary, it has however finite differentiability at $\partial \man$ (cf. equation \eqref{eqFGexpevsec}). Later on, we will assume that the induced geometry at $\partial \man$ is conformally flat, which, as we will discuss in section \ref{secFG}, implies that $g$ is smooth up to the boundary. 
The submanifold $\scri := (\partial \man, g\mid_{\partial \man})$ is called "conformal infinity" and it is well-known to be spacelike when $\tilg$ satisfies \eqref{EFE}. Since $\Omega$ can be multiplied by any smooth positive function of $\partial \man$ without affecting the definition, any smooth positive function $\omega$ of $\partial \man$ yields a different conformal extension such that $\scri = (\partial \man, \omega^2 g \mid_{\partial \man})$. Hence one usually considers $\scri$ as $\partial \man$ equipped with the whole conformal class of metrics $[g\mid_{\partial \man}] = \omega^2 g\mid_{\partial \man},~\forall \omega \in C^\infty(\partial \man),~\omega>0$. A metric admitting a conformal extension is said to be conformally extendable.

As mentioned in the introduction, the Friedrich conformal field equations \cite{Fried86geodcomp} allow to define a well-posed asymptotic Cauchy problem  for metrics $\tilg$ satisfying \eqref{EFE}. 
The initial data are a Riemannian $3$-manifold $(\Sigma,\gamma)$, which prescribes the (conformal) geometry of $\scri$, and a traceless and transverse (i.e. with zero divergence) symmetric two-tensor $D$, or TT tensor for short, which prescribes the electric part of the rescaled Weyl tensor at $\scri$. Recall that the electric part of the Weyl tensor and its rescaled version (for arbitrary $n$) are
\begin{equation} \label{eqelecresc}
 (C_\perp)_{\alpha \beta} := {C^\mu}_{\alpha \nu \beta} \nor_\mu \nor^\nu,\quad\quad (c_\perp)_{\alpha \beta} := \Om^{2-n} {C^\mu}_{\alpha \nu \beta} \nor_\mu \nor^\nu,
\end{equation}
where $\nor$ is a timelike unit vector defined in a open neighbourhood of $\partial \man$ and normal to $\scri$. Generically, $c_\perp$  diverges at $\scri$ in dimension $n  \geq 4$. However, as we shall discuss later, it is finite under suitable restrictions on the conformal class $[\gamma]$, in particular, whenever it is locally conformally flat. For such cases, we define its value at $\scri$
\begin{equation} \label{eqelecrescscri}
  D_{\alpha\beta}:= \restr{(c_\perp)_{\alpha\beta}}{\scri}.
\end{equation}

In higher dimensions, the asymptotic Cauchy problem of the Einstein equations is based on a different approach, namely, the Fefferman-Graham formalism \cite{FeffGrah85,ambientmetric} (see also Section \ref{secFG}). In the $n$ odd case, a theorem due to Anderson \cite{Anderson2005} associates a unique Einstein metric to every asymptotic data set $(\Sigma,\gamma, \widehat D)$, with $(\Sigma,\gamma)$ a Riemannian $n$-manifold $\gamma$ and $\widehat D$ a symmetric 2-tensor TT w.r.t. $\gamma$. Although the core idea
appears for the first time in \cite{Anderson2005}, neither this paper nor \cite{andersonchrusciel05}, which attempts to give
a detailed proof, are fully correct. The mistakes in those papers have recently been
identified in \cite{kaminski21}, where a complete proof of the existence results has been provided. The $n$ even case is more limited. The existing result by Kichenassamy \cite{kichenassamy03} associates a unique Einstein metric to every asymptotic data set $(\Sigma, \gamma, \widehat D)$, provided $(\Sigma,\gamma)$ is a Riemannian \underline{analytic} $n$-manifold and $\widehat D$ a symmetric \underline{analytic} 2-tensor, whose trace and divergence (generically non-zero) are determined exclusively by $\gamma$. The tensor $\widehat D$ does not in general prescribe the electric part of the rescaled Weyl tensor at $\scri$. However it does so whenever $\gamma$ is locally conformally flat (cf. \cite{marspeondata21,Holl05}). In such a case $\widehat D$ is obviously TT. For simplicity, we shall use ``conformally flat'' to mean ``locally conformally flat'' from now on. 

In the next theorem we summarize the existence and uniqueness theorems discussed so far. All the above asymptotic data sets are denoted generically $(\Sigma,\gamma,\widehat D)$ and we shall specify when it is known that $\widehat D$ prescribes $D$. In those cases we shall use $(\Sigma,\gamma,D)$ for the asymptotic data.

\begin{theorem}\label{theoexistencedata}
 Let $(\Sigma,\gamma,\widehat D)$ be an $n$-dimensional asymptotic data set. Then, if $n$ is odd \cite{Anderson2005,andersonchrusciel05,kaminski21}
 or if $\gamma,\widehat D$ are analytic \cite{kichenassamy03}, there exists a unique metric $\tilg$, which solves \eqref{EFE} and admits a conformal extension such that $\scri$ which can be identified with $(\Sigma,\gamma)$. Moreover, in $n=3$ \cite{Fried86geodcomp} or if $n>3$ and $(\Sigma,\gamma)$ is conformally flat \cite{marspeondata21}, the tensor $\widehat D$ is TT and prescribes $D$, the electric part of the rescaled Weyl tensor of $\tilg$ evaluated at $\scri$.
\end{theorem}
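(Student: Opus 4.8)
The plan is to reduce the statement to the known existence, uniqueness and regularity results for the asymptotic Cauchy problem, organized through the Fefferman--Graham formalism to be recalled in Section \ref{secFG}. The starting point is to place the metric in a Fefferman--Graham gauge adapted to $\scri$: one chooses a geodesic defining function $\Om$ and writes $g=\Om^2\tilg$ so that, near $\partial\man$, the rescaled metric takes a normal form whose expansion in powers of $\Om$ is governed by \eqref{EFE}. Substituting this expansion into the Einstein equations yields a recursive hierarchy in which the leading coefficient is the conformal representative $\gamma$ and the first coefficient not determined algebraically by $\gamma$ encodes the free momentum data $\widehat D$ (appearing at order $\Om^n$). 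The content of the theorem is then that this formal data determines a genuine solution, and that under extra hypotheses the free coefficient coincides with the rescaled electric Weyl datum $D$ of \eqref{eqelecrescscri}.

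For the existence and uniqueness assertion I would simply invoke the corresponding theorems according to the parity of $n$. When $n$ is odd the recursion closes with no obstruction term and no logarithmic contributions; the resulting formal series defines a unique smooth Einstein metric inducing the prescribed $\scri$, as established in \cite{Anderson2005,andersonchrusciel05} and rigorously completed in \cite{kaminski21}. When instead $\gamma$ and $\widehat D$ are analytic, the Fefferman--Graham system can be cast as a Fuchsian/Cauchy--Kovalevskaya problem and solved by the result of Kichenassamy \cite{kichenassamy03}, yielding a unique analytic Einstein metric realizing $(\Sigma,\gamma)$ at conformal infinity; here the trace and divergence of $\widehat D$ are fixed by $\gamma$ through the same recursion, so $\widehat D$ is not free, but its transverse-traceless part is.

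It remains to identify $\widehat D$ with the geometric datum $D$. The approach is to compute the electric part of the rescaled Weyl tensor $c_\perp$ directly from the Fefferman--Graham expansion, evaluate it at $\scri$, and compare it with the coefficient $\widehat D$ entering the recursion. In the $n=3$ case this identification is classical and follows from Friedrich's conformal field equations \cite{Fried86geodcomp}. For $n>3$ the two objects differ in general by curvature corrections built from $\gamma$ and its derivatives; however, when $\gamma$ is conformally flat these corrections vanish, the generically nonzero trace and divergence of $\widehat D$ drop out, and $\widehat D$ becomes TT and equal to $D$. This is precisely the computation carried out in \cite{marspeondata21} (see also \cite{Holl05}).

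The main obstacle is the even-dimensional case. There the recursion encounters the conformal obstruction at order $\Om^{n}$, producing logarithmic terms and a nonlocal obstruction tensor that preclude a smooth formal series for general $\gamma$; this is why analyticity must be imposed, and why the clean interpretation of the free data as $D$ is available only in the conformally flat subclass, where the obstruction tensor vanishes identically. Beyond assembling the cited inputs, the only genuine verification is this final point --- that conformal flatness of $\gamma$ simultaneously annihilates the obstruction and the trace/divergence terms, leaving $\widehat D=D$ transverse-traceless --- which is the technical heart borrowed from \cite{marspeondata21}.
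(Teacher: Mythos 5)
Your proposal is correct and follows essentially the same route as the paper: Theorem \ref{theoexistencedata} is presented there as a summary of the literature, with no independent proof beyond assembling the odd-$n$ well-posedness results \cite{Anderson2005,andersonchrusciel05,kaminski21}, the analytic Fuchsian result \cite{kichenassamy03}, Friedrich's $n=3$ theory \cite{Fried86geodcomp}, and the geometric identification of the free Fefferman--Graham coefficient with the electric part of the rescaled Weyl tensor in the conformally flat case \cite{marspeondata21,Holl05}, organized through the Fefferman--Graham normal form exactly as you describe. The only small imprecision is your claim that conformal flatness makes the curvature corrections \emph{vanish}: for $n=4$ the correction $\bar D$ proportional to $P^2$ survives even for conformally flat $\gamma$ (cf.\ Remark \ref{remarkTTg4}), so the identification requires the splitting $\widehat D=\bar D+D$ with $\bar D$ determined by $\gamma$ rather than an outright vanishing --- though your subsequent remark that the trace and divergence terms ``drop out'' effectively captures this, so the argument stands.
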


\begin{remark}\label{remarkequivdata}
It is clear that the asymptotic data must posses a large conformal gauge freedom, arising from the fact that the conformal factor $\Om$ which extends the metric to $\scri$ has no physical relevance. Whenever the data are $(\Sigma,\gamma,D)$, $D$ prescribing the electric part of the rescaled Weyl tensor at $\scri$, any other set of equivalent data  $(\Sigma,\gamma',D')$ must be (see Section \ref{secFG} for an explicit verification)
\begin{equation}\label{eqequivdata}
 \gamma' = \omega^2 \gamma 
 ,\quad\quad
 D' = \omega^{2-n} D 
 ,\quad\quad \forall (0<)\omega \in C^\infty(\Sigma). 
\end{equation}
If $\omega$ is a locally defined function, the equivalence holds restricting $\Sigma$ to the domain of definition of $\omega$. In all odd $n$ cases, where the data are $(\Sigma,\gamma,\widehat D)$ and $\widehat D$ is TT (generically different from $D$), one expects the same conformal transformation law for $\widehat D$ than in \eqref{eqequivdata}. For $n$ even, the transformation law is more involved because $\widehat D$ has non-zero trace and divergence. We can bypass this complication in the conformally flat $\scri$ case, by decomposing $\widehat D = \bar D + D$, where $\bar D$ is totally determined by $\gamma$ if $n=4$ or zero otherwise, and $D$ prescribes the electric part of the rescaled Weyl tensor

\end{remark}

 Theorem \ref{theoexistencedata} is used in \cite{marspeondata21} to give a geometric characterization of the Kerr-de Sitter metrics in all dimensions \cite{Gibbons2005} by means of their initial data, which consists of a conformally flat Riemannian metric $\gamma$ together with a TT tensor of the form 
\begin{equation}\label{eqTTDY}
  D =\kappa D_\Y, \quad \quad D_\Y := \frac{\kappa}{|\Y|^{n+2}_\gamma} \lr{\Yf \otimes \Yf - \frac{|\Y|_\gamma^2}{n}\gamma},\quad\quad \kappa \in \mathbb{R},
 \end{equation}
where $\Yf := \gamma (\Y,\cdot)$ for $\Y$ a specific CKVF\footnote{When using index-free notation, it will be convenient to distinguish the CKVFs of $\scri$ from their metricallly associated one-forms. We keep the notation employed here, namely, the one-forms are specified with boldface font. } of $\gamma$, that depends on the physical (i.e. mass and rotation) parameters of the Kerr-de Sitter metric. Nevertheless, the tensor $D_{\til\Y}$ is well defined (away from fixed points of $\til\Y$) for any
conformal Killing vector field $\Y$ on any $n$-dimensional manifold $(\Sigma,\gamma)$. After removing the fixed points from $\Sigma$, one easily checks that $D_{\til\Y}$ is a TT tensor (see the proof for $n=3$ in \cite{KdSnullinfty}, which readily extends to arbitrary $n$). When $\gamma$ is conformally flat, then both $\gamma$ and $D_{\til\Y}$ are analytic on $\Sigma$. Consequently, by Theorem 1.1 there exists a spacetime corresponding to the data $(\Sigma, \gamma, \kappa D_{\til\Y})$ for any constant $\kappa \in \mathbb{R}$ and regardless of the parity of $n$.
 The spacetime corresponding to data $(\Sigma, \gamma,\kappa D_{\til\Y})$ is not in general Kerr-de Sitter if $\til\Y \neq \Y$. In addition, combining the conformal equivalence of data in Remark \ref{remarkequivdata} and the diffeomorphism equivalence of data, it is proven in  \cite{marspeondata21} that two asymptotic data sets $(\Sigma, \gamma,\kappa D_{\Y})$ and 
$(\Sigma, \gamma,\kappa D_{\til\Y})$ are equivalent if and only if $\Y$ and $\til\Y$ are in the same conformal class (see Appendix \ref{seclocconflat} for a precise definition). We observe that since the conformal transformations $\phi \in \confloc(\Sigma,\gamma)$ are in general locally defined in open neighbourhoods $\neigh \subset \Sigma$, the equivalence of data between $(\Sigma, \gamma,\kappa D_{\Y})$ and 
$(\Sigma, \gamma,\kappa D_{\til\Y})$ with $\til\Y = \phi_\star(\Y)$ only makes sense restricted to non-empty sets of the form $\phi(\neigh) \cap \neigh$. We shall implicitly assume this restriction when dealing with local equivalence of data.

\begin{lemma}[\hspace{-0.025cm}\cite{marspeondata21}]\label{lemmpropKdSl} 
 Let $(\Sigma, \gamma, \kappa  D_\Y)$ be initial data with $\gamma$ conformally flat and $D_\Y$ of the form \eqref{eqTTDY}, with $\Y$ any CKVF of $\gamma$. Then for each CKVF $\til\Y$ in the conformal class of $\Y$ the set of data $(\Sigma, \gamma, \kappa  D_\Y)$ and $(\Sigma, \gamma, \kappa  D_{\til\Y})$ are equivalent.
\end{lemma}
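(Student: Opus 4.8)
The plan is to exploit the two independent sources of gauge freedom in the asymptotic data that have already been recorded above — invariance under diffeomorphisms and the conformal rescaling \eqref{eqequivdata} of Remark \ref{remarkequivdata} — together with the observation that $D_\Y$ is a \emph{natural} object, built algebraically (without derivatives) out of the pair $(\gamma,\Y)$. By the definition of the conformal class (Appendix \ref{seclocconflat}), saying that $\til\Y$ lies in the class of $\Y$ amounts to the existence of a, possibly only locally defined, conformal diffeomorphism $\phi\in\confloc(\Sigma,\gamma)$, with $\phi^\star\gamma=\omega^2\gamma$ on its domain $\neigh$, such that $\til\Y=\phi_\star(\Y)$, equivalently $\phi^\star\til\Y=\Y$. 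The idea is to transport the data $(\Sigma,\gamma,\kappa D_{\til\Y})$ through $\phi$ and then undo the resulting conformal factor by the gauge freedom, landing back on $(\Sigma,\gamma,\kappa D_\Y)$.

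The key computational step is to track how $D_\Y$ responds to these two operations. First, since $\Yf=\gamma(\Y,\cdot)$ and $|\Y|_\gamma$ are formed pointwise from $\gamma$ and $\Y$, the assignment $(\gamma,\Y)\mapsto D_\Y$ is diffeomorphism covariant, so $\phi^\star D_{\til\Y}$ is the tensor \eqref{eqTTDY} built from $(\phi^\star\gamma,\phi^\star\til\Y)=(\omega^2\gamma,\Y)$. Second, substituting $\Yf\mapsto\omega^2\Yf$, $|\Y|_\gamma^2\mapsto\omega^2|\Y|_\gamma^2$ and $|\Y|_\gamma^{n+2}\mapsto\omega^{n+2}|\Y|_\gamma^{n+2}$ into \eqref{eqTTDY} shows that the bracket $\lr{\Yf\otimes\Yf-\tfrac{|\Y|_\gamma^2}{n}\gamma}$ scales by $\omega^{4}$ while the prefactor $|\Y|_\gamma^{-(n+2)}$ scales by $\omega^{-(n+2)}$. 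Combining the two,
\begin{equation}
 \phi^\star D_{\til\Y}=\omega^{\,2-n}\,D_\Y .
\end{equation}
It is worth stressing that the exponent $2-n$ is exactly the conformal weight prescribed by \eqref{eqequivdata}; this matching is the structural reason the argument closes rather than an accident.

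It then remains only to chain the equivalences. Diffeomorphism invariance gives $(\Sigma,\gamma,\kappa D_{\til\Y})\sim(\Sigma,\phi^\star\gamma,\kappa\,\phi^\star D_{\til\Y})=(\Sigma,\omega^2\gamma,\kappa\,\omega^{2-n}D_\Y)$, and the conformal gauge freedom \eqref{eqequivdata} with factor $\omega$ identifies this last set with $(\Sigma,\gamma,\kappa D_\Y)$, which proves the claim. I expect the genuine obstacle to lie not in the (entirely routine) algebra but in the bookkeeping of domains: $\phi$, and hence the identification, is defined only on $\neigh$, and $D_\Y,D_{\til\Y}$ only away from the zero sets of $\Y$ and $\til\Y$, so the equivalence is intrinsically local and valid on $\phi(\neigh)\cap\neigh$, exactly as anticipated in the discussion preceding the statement. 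Ensuring that these local identifications are mutually consistent and non-vacuous is precisely what the analysis of the conformal class in Appendix \ref{seclocconflat} is designed to supply.
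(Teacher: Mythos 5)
Your proof is correct and takes essentially the same route the paper has in mind for this lemma, which it quotes from \cite{marspeondata21} without writing out a proof: the surrounding text explicitly says the equivalence follows by ``combining the conformal equivalence of data in Remark \ref{remarkequivdata} and the diffeomorphism equivalence of data,'' and your computation that the pointwise-algebraic assignment $(\gamma,\Y)\mapsto D_\Y$ pulls back under $\phi$ to $\omega^{2-n}D_\Y$ — precisely the weight prescribed by \eqref{eqequivdata} — is the substance of that argument. Your handling of the locality issue, restricting the identification to $\phi(\neigh)\cap\neigh$, also matches the paper's own caveat stated just before the lemma.
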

 
Now consider the class of spacetimes corresponding to data of the form $(\Sigma,\gamma,\kappa D_\Y)$, with $\gamma$ conformally flat and $\Y$ an arbitrary CKVF of $\gamma$. By Lemma \ref{lemmpropKdSl}, there exist as many non-equivalent such spacetimes as conformal classes of CKVFs. Thus, the following definition is natural and extends to arbitrary dimension the corresponding notion in dimension $n=3$ introduced\footnote{The non-conformally flat $n=3$ case is defined in \cite{KdSnullinfty}.} in \cite{Kdslike}.


\begin{definition}\label{defKdSlike}
 The {\bf Kerr-de Sitter-like} class of spacetimes with conformally flat $\scri$  are conformally extendable metrics solving \eqref{EFE}, characterized by data $(\Sigma, [\gamma],[\kappa D_\Y])$, with $\gamma$ conformally flat and where $D_\Y$ is given by \eqref{eqTTDY} with $\Y$ a CKVF of $\gamma$. 
\end{definition}
In order to clarify the terminology, the word \underline{class} is used to denote a collection of \underline{families} of spacetimes, a family being a set of metrics, depending on a number of parameters and sharing certain properties. For example, the Kerr-de Sitter-like class with conformally flat $\scri$ and $n=3$ contains \cite{Kdslike}: the Kerr-de Sitter family, the Kottler families, a limit case of Kerr-de Sitter with infinite rotation parameter \cite{limitkds} and the Wick-rotated-Kerr-de Sitter spacetime \cite{wickrot54}. In this paper we extend the definition of these families to higher dimensions.

The main purpose of this work, is to prove that the Kerr-de Sitter-like class with conformally flat $\scri$ contains exactly all Kerr-Schild type spacetimes, solution of the $\Lambda >0$ vacuum Einstein equations and sharing $\scri$ with its background metric. In particular this requires that, as the background metric is de Sitter, $\scri$ is conformally flat. Recall that the Kerr-Schild spacetimes are of the form
\begin{equation}\label{eqKSmetrics}
 \tilg = \tilg_{dS} + \widetilde{\H} ~ \widetilde k \otimes \widetilde k
\end{equation}
where $\tilg_{dS}$ is de Sitter, $k$ is a field of lightlike one-forms (both w.r.t. $\tilg_{dS}$ and $\tilg$) and $\widetilde{\H}$ is a smooth function. It is convenient to give a name to the set of spacetimes we shall be dealing with.
\begin{definition}\label{defKSdS}
 The {\bf Kerr-Schild-de Sitter} spacetimes are of the form \eqref{eqKSmetrics}, solve the $\Lambda>0$ vacuum Einstein equations and admit a smooth conformally flat $\scri$ such that for some conformal extension $g  = \Om^2\tilg$, the tensor $\Om^2\til\H \til k \otimes \til k$ vanishes at $\scri$. 
\end{definition}
\begin{remark}\label{remarkdecayH1}
    Notice that asking the metric $\tilg$ to share $\scri$ with $\tilg_{dS}$, implies more than simply $\tilg$ to admit a conformally flat $\scri$. In particular, consider a conformal extension such that $\gamma = \Omega^2 \tilg|_\scri$ is conformally flat and assume that $\gamma_{dS} := \Omega^2 \tilg_{dS}|_\scri $ and $(\Om^2 \til \H \til k \otimes \til k)|_{\scri}$ are well-defined. Since $\gamma_{dS}$ is conformally flat, one could naively think that $\gamma = \gamma_{dS} + (\Om^2 \til \H \til k \otimes \til k)|_{\scri}$ implies $(\Om^2 \til \H \til k \otimes \til k)|_{\scri} = 0$, which would then imply the condition on $\Om^2 \til \H \til k \otimes \til k$ assumed in  Definition $\ref{defKSdS}$. However, there is still room, in principle, for conformally flat metrics of the form $\gamma_{dS} + \H_0 y \otimes y$ with $\H_0 \neq 0,~y\neq 0$. A simple example is any conformally flat graph in a flat $n$-dimensional space endowed with Cartesian coordinates $\{ x^i \}$, i.e.  a hypersurface defined by $x^n = f(x^i)$, such that the induced metric happens to be conformally flat. The induced metric takes precisely the form $\gamma_S = \gamma_{\mathbb{E}^{n-1}} + y \otimes y $, for a flat $(n-1)$-dimensional metric $\gamma_{\mathbb{E}^{n-1}}$ and $y := \dif f$ (as an explicit example one can take a hemisphere). 
    
    Thus, it may be possible that a Kerr-Schild metric, solving \eqref{EFE} and admitting a smooth conformally flat $\scri$ has a term $\Om^2 \til \H \til k \otimes \til k$ surviving at $\scri$. It would be interesting to settle whether any $\Lambda>0$-vacuum solution of this type can exist.

\end{remark}

With the above definitions \ref{defKdSlike} and \ref{defKSdS} we can now state the main result of this paper:

\begin{theorem}\label{theoKSKdS}
 A spacetime belongs to the Kerr-de Sitter-like class if and only if it is Kerr-Schild-de Sitter.
\end{theorem}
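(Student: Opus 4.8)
The plan is to prove the biconditional by establishing the two inclusions separately, following the split announced for Sections \ref{secKSsubKdS} and \ref{secKSsupKdS}. The forward inclusion --- that every Kerr-Schild-de Sitter spacetime lies in the Kerr-de Sitter-like class --- is essentially an asymptotic computation: given a metric $\tilg$ of the form \eqref{eqKSmetrics} satisfying the hypotheses of Definition \ref{defKSdS}, I would compute its asymptotic data $(\Sigma,[\gamma],[D])$ and show that $D$ is forced into the form $\kappa D_\Y$ of \eqref{eqTTDY}. Since $\tilg$ shares $\scri$ with the de Sitter background, the boundary conformal class $[\gamma]$ is that of de Sitter and hence conformally flat, so the only nontrivial content is the precise form of the electric part of the rescaled Weyl tensor. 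The reverse inclusion is more delicate and proceeds by explicit construction: I would start from the Kerr-de Sitter family of \cite{Gibbons2005}, which is manifestly of Kerr-Schild form over a de Sitter background, and combine it with limiting and analytic-continuation arguments to realize every conformal class $[\Y]$ of CKVFs admissible in Definition \ref{defKdSlike}.

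For the forward inclusion, the essential tool is the result of \cite{malek11} that every $\Lambda$-vacuum Kerr-Schild spacetime is algebraically special, with the Kerr-Schild vector $\til k$ a geodesic multiple WAND. I would use this algebraic restriction to control the Weyl tensor near $\scri$. Choosing the conformal factor $\Om$ adapted to the shared de Sitter boundary, I would: (i) show that the null congruence $\til k$ induces, upon restriction to $\scri$, a conformal Killing vector field $\Y$ of $\gamma$; (ii) exploit the rank-one structure $\til\H\,\til k\otimes\til k$ together with the decay condition of Definition \ref{defKSdS} to compute the leading asymptotics of ${C^\mu}_{\alpha\nu\beta}$; and (iii) extract $D=\restr{(c_\perp)}{\scri}$ and verify that, after rescaling by $\Om^{2-n}$ and projecting orthogonally to $\nor$, the perturbation collapses exactly onto the traceless combination $\Yf\otimes\Yf-\tfrac{|\Y|_\gamma^2}{n}\gamma$, with the weight $|\Y|_\gamma^{-(n+2)}$ and the constant $\kappa$ emerging from the leading behaviour of $\til\H$ along the congruence. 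The tracelessness is automatic because $D$ is the boundary value of a TT tensor by Theorem \ref{theoexistencedata}.

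For the reverse inclusion, I would first record that Kerr-de Sitter is itself Kerr-Schild-de Sitter, so by the forward direction its data realize a distinguished set of conformal classes $[\Y]$ parametrized by the mass and rotation parameters. By Lemma \ref{lemmpropKdSl} and the identification of local conformal equivalence with the global action of the conformal group of the round $n$-sphere (Appendix \ref{seclocconflat}), the problem reduces to showing that \emph{every} admissible conformal class arises, directly or as a limit. Here I would invoke the classification of limits of CKVF classes in \cite{marspeon21}: in odd $n$ the Kerr-de Sitter classes are dense in the quotient of the Lie algebra of CKVFs modulo the conformal group, so every remaining class is a limit of Kerr-de Sitter classes; by Theorem \ref{theoexistencedata} the asymptotic data converge, and I would exhibit the limiting spacetime explicitly and check that it retains the Kerr-Schild form \eqref{eqKSmetrics} and the decay condition, the Kerr-Schild decomposition being stable under the rescalings realizing these limits. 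In even $n$ the Kerr-de Sitter classes fail to be dense; one open region of the quotient is unreached by limits, and I would fill it by constructing a separate family as an analytic extension of Kerr-de Sitter (continuing the rotation parameters beyond the real range), again verifying Kerr-Schild form and decay. Together, the limits and the even-dimensional analytic family exhaust Definition \ref{defKdSlike}.

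The main obstacle is twofold. In the forward direction, the hard part is the explicit asymptotic computation of $D$: one must show that algebraic speciality and the decay of $\til\H$ conspire to produce precisely the normalized quadratic form \eqref{eqTTDY}, rather than a more general TT tensor built from $\til k|_\scri$, and in particular to recover the exact weight $|\Y|_\gamma^{-(n+2)}$. In the reverse direction, the difficulty is that the limits are natural at the level of the asymptotic data (CKVFs) but opaque at the level of the metric; as stressed in the introduction, the limiting spacetimes are hard to guess directly, so the crux is to translate each data-level limit --- and the even-dimensional analytic continuation --- into a bona fide Kerr-Schild-de Sitter metric and to confirm that no conformal class is left unaccounted for.
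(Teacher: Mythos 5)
Your overall architecture coincides with the paper's: the forward inclusion via the algebraic speciality result of \cite{malek11} and a computation of the asymptotic data at $\scri$, and the reverse inclusion via explicit limits of Kerr-de Sitter guided by the quotient topology of conformal classes of CKVFs from \cite{marspeon21}, supplemented by an analytic extension of Kerr-de Sitter for the region unreachable by limits. However, as written your reverse inclusion contains a concrete error: the parities are exchanged. In the paper's convention ($n$ is the dimension of $\scri$, the spacetime is $(n+1)$-dimensional), Proposition \ref{proplimts} states that the Kerr-de Sitter region $\mathcal{R}_-^{(n,0)}$ is dense in the quotient precisely when $n$ is \emph{even}, so for even $n$ limits alone exhaust the class and no analytic extension is needed; for $n$ \emph{odd}, $\mathcal{R}_+^{(n,0)}$ is open and disjoint from the closure of $\mathcal{R}_-^{(n,0)}$, so a density-based limit argument cannot reach it, and it is exactly in odd $n$ that the Wick-rotated family of Section \ref{secwicknodd} must be constructed by analytic continuation. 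Your claim that ``in odd $n$ the Kerr-de Sitter classes are dense'' is false and would leave every spacetime with data in $\mathcal{R}_+^{(n,m)}$, $n$ odd, uncovered, while your even-$n$ analytic extension is superfluous. (If by $n$ you meant the spacetime dimension the assignment becomes correct, but since you otherwise use the paper's notation and labels, the statement as given breaks the proof.)

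A second, substantive omission is in the forward direction: your step (i) asserts that $k$ projects to a CKVF $\Y$ of $\gamma$, but this is the crux of that inclusion and does not follow from the WAND property by inspection. The paper's mechanism is two-fold: the divergence-free property of $D$ yields one of the two equations characterizing the conformal Killing property of $\Y = f y$ (Lemmas \ref{lemmackvf} and \ref{lemmaeqckvf1}), while the remaining equation, $\Pi_{\alpha\beta}=0$ for the traceless symmetric part of $\nabscr_\alpha y_\beta$, is extracted from the $\Om^{n-1}$-order terms of the Codazzi equation: its left-hand side reduces to $-\Pi_{\lambda\sigma}/f^n$ (Corollary \ref{corolhscod}), whereas the multiple-WAND condition forces the corresponding symmetrized, $h$-projected Weyl components on the right-hand side to be pure trace. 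Your sketch also needs the bootstrap of Lemma \ref{lemmadecayH} from the assumed decay $\H = O(\Om)$ of Definition \ref{defKSdS} to $\H = O(\Om^n)$ (via evenness of the FG expansion up to order $n$), without which the rescaled Weyl tensor has no finite limit producing $D = \kappa D_\Y$ with the stated weight. These are filling-in matters rather than wrong turns, but they are where the actual work of the forward inclusion lies; the parity swap, by contrast, must be corrected for the reverse inclusion to close.
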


 The proof of Theorem \ref{theoKSKdS} involves two steps, which respectively we address in Sections \ref{secKSsubKdS} and \ref{secKSsupKdS} of this paper. In Section \ref{secKSsubKdS} we consider Kerr-Schild-de Sitter metrics and compute their initial data, which by Theorem \ref{theoexistencedata}, correspond to the conformal geometry of (conformally flat) $\scri$ and the electric part of the rescaled Weyl tensor $D$. The tensor $D$ is easily seen to have the form $D = \kappa D_\Y$, with $\kappa \in \mathbb{R}$ and $D_\Y$ given by \eqref{eqTTDY} with $\Y$ the projection of $k$ onto $\scri$. The main task of this section is to prove that $\Y$ is a CKVF of $\scri$. This is a consequence of the Kerr-Schild-de Sitter spacetimes being algebraically special (cf. Proposition \ref{propksspec}). This proves that every Kerr-Schild-de Sitter spacetime is contained in the Kerr-de Sitter-like class. 
 
 The reverse inclusion is proven in Section \ref{secKSsupKdS}. To do that we generate every spacetime in the Kerr-de Sitter-like class by taking advantage of the topological structure of the space of conformal classes of CKVFs. By Lemma  \ref{lemmpropKdSl}, one conformal class corresponds exactly to one spacetime in the class. Moreover, from the well-posedness of the Cauchy problem, limiting classes in the quotient space of CKVFs will generate limiting spacetimes. All the metrics one obtains are summarized in the next theorem. In order to simplify the statement, we modify slightly the notation with respect to Section \ref{secKSsupKdS}: all primes and hats are dropped and all rotation parameters are denoted by $a_i$.
 
 \begin{theorem}\label{theoKSKdSmetrics}
  Let be $(\mathcal \man,\tilg)$ be an $(n+1)$-dimensional manifold and set $ p := \lrbrkt{\frac{n+1}{2}} -1,$ and $q := \lrbrkt{\frac{n}{2}} $. Consider the functions $W$ and $\Xi$ of table \ref{tab2} and $\m_{p+1}$ obtained from the implicit equation in table \ref{tab2}, for a collection of real parameters  $\{a_i \}_{i=1}^{p+1}$ with $a_{p+1}=0$ if $n$ odd or in case $b)$. Then, in the coordinates 
  $\{\rho, t, \{\m_i\}_{i=1}^{p+1}, \{\phi_i\}_{i=1}^q\}$ taking values in $\phi_i \in [0,2 \pi)$ and the maximal domain where $W$ and $\Xi$ are positive and $\m_{p+1}$ is real, every Kerr-Schild-de Sitter metric 
 \begin{equation}
  \tilg = \tilg_{dS} + \til{\mathcal{H}} \til k \otimes \til k,\quad\quad\mbox{must have}\quad\til \H = \frac{2M \rho^{n-2}}{ \Xi \prod_{i=1}^q(1 + \rho^2 a_i^2) },\quad M \in \mathbb{R},
\end{equation}
 $k$ as given in table \ref{tab2} and the de Sitter metric $\tilg_{dS}$ in the corresponding following form:
\begin{enumerate}
 \item[a)] Kerr-de Sitter family,
\begin{align}
  \tilg_{dS} & =  - W \frac{(\rho^2 - \lambda )}{\rho^2} \dif t^2 + \frac{\Xi}{\rho^2 - \lambda } \frac{\dif \rho^2}{\rho^2} + \delta_{p,q} \frac{\dif \m_{p+1}^2}{\rho^2} \\ &  + \sum_{i=1}^q \frac{1 + \rho^2 a_i^2}{\rho^2} \lr{\dif \m_i^2 + \m_i^2 \dif \phi_i^2}     + \frac{ (\rho^2- \lambda )}{ \lambda W \rho^2 }  \frac{\dif W^2}{4}.
  \end{align}
 \item[b)] $\{a_i\rightarrow \infty \}$-limit-Kerr-de Sitter,
\begin{align}
  \hspace{-1.5cm}\tilg_{dS} & =    \frac{ \lambda \m_{p+1}^2 }{\rho^2} \dif t^2 - \frac{\Xi}{ \lambda } \frac{\dif \rho^2}{\rho^2} + \delta_{p+1,q}\frac{ \m_{p+1}^2  \dif \phi_{q}^2}{\rho^2 }  + \sum_{i=1}^p \frac{1 + \rho^2 a_i^2}{ \rho^2}{ \lr{\dif \m^2_i + \m^2_i \dif \phi_i^2} } \\ & 
  + \lr{\frac{1}{\lambda} + \frac{\sum_{i=1}^p \beta_i^2}{\rho^2 \hat \m_{p+1}^2}}\dif\m_{p+1}^2   -\frac{ 2  \dif\m_{p+1}}{\rho^2\m_{p+1}} \lr{ \sum_{i= 1}^{p}{\m_i \dif  \m_i}}.
  \end{align} 
 \item[c.1)] Wick-rotated-Kerr-de Sitter for $n$ even,
  \begin{align} 
  \tilg_{dS} & =   \frac{\lambda W}{\rho^2} \dif t^2 - \frac{\Xi}{ \lambda } \frac{\dif \rho^2}{\rho^2}  + \sum_{i=1}^q \frac{1 + \rho^2 a_i^2}{\rho^2}{ \lr{\dif  \m^2_i +  \m^2_i \dif \phi_i^2} }
  -  \frac{ 1}{  W \rho^2 }  \frac{\dif W^2}{4}.
\end{align}
 \item[c.2)] Wick-rotated-Kerr-de Sitter for $n$ odd,
 \begin{align}
  \tilg_{dS} & =    W \frac{(\rho^2 + \lambda )}{\rho^2} \dif t^2 - \frac{\Xi}{\rho^2 + \lambda } \frac{\dif \rho^2}{\rho^2} -  \frac{\dif \m^2_{p+1}}{\rho^2} + \sum_{i=1}^p \frac{1 +  \rho^2 a_i^2}{\rho^2} \lr{\dif  \m^2_i + \m^2_i \dif \phi_i^2}.
\end{align}
\end{enumerate}
 \end{theorem}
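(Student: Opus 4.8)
The plan is to convert the explicit construction into a classification of conformal classes of CKVFs and then to transport the one member of the class that is known in closed form, namely Kerr-de Sitter, across the whole classification. First I would invoke Theorem~\ref{theoKSKdS} to replace ``Kerr-Schild-de Sitter'' by ``Kerr-de Sitter-like class'', so that by Definition~\ref{defKdSlike} and Lemma~\ref{lemmpropKdSl} the task reduces to exhibiting one metric for each conformal class $[\Y]$ of CKVFs of a conformally flat $(\Sigma,[\gamma])$, the remaining freedom being a single real mass parameter (written $\kappa$ in the data \eqref{eqTTDY} and $M$ in $\til\H$) that fixes the overall scale of $\kappa D_\Y$. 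Since $\scri$ is conformally flat, its CKVFs generate the conformal algebra $\mathfrak{so}(1,n+1)$, and, as shown in Appendix~\ref{seclocconflat} together with \cite{marspeon21}, two CKVFs are conformally equivalent if and only if they lie on the same adjoint orbit. Hence the proof is an enumeration of orbit representatives of $\mathfrak{so}(1,n+1)$ together with, for each, an explicit spacetime metric, and the exhaustiveness of the final list a)--c.2) is inherited from that of the orbit classification.

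Next I would settle the generic case. A generic semisimple element of $\mathfrak{so}(1,n+1)$ decomposes into commuting two-dimensional blocks---one boost block and several rotation blocks---whose eigenvalues, after fixing the overall scale, are the rotation parameters $\{a_i\}$. I would match such a representative with the CKVF attached to the Kerr-de Sitter data in \cite{marspeondata21} (equivalently, with the projection of $k$ onto $\scri$, cf.\ Section~\ref{secKSsubKdS}), and then merely rewrite the Gibbons et al.\ metric \cite{Gibbons2005} in the coordinates $\{\rho,t,\{\m_i\},\{\phi_i\}\}$ of the statement. This produces case~a) and simultaneously fixes the forms of $\til\H$ and of $k$ recorded in the table.

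The remaining families arise from non-generic orbits. For each such orbit I would read off from \cite{marspeon21} the corresponding limiting conformal class, hence the limiting data $\kappa D_\Y$; by the continuity of the data-to-spacetime correspondence underlying Theorem~\ref{theoexistencedata}, convergent data select a well-defined limiting spacetime. The metric itself I would obtain by performing the matching degeneration directly on the Kerr-de Sitter line element, namely a rescaling of the parameters $\{a_i\}$---for instance sending them to infinity---together with a compensating transformation of $\{\rho,t,\m_i,\phi_i\}$ chosen to keep the limit non-degenerate. This produces case~b). The Wick-rotated families correspond instead to an orbit representative in which a boost block and a rotation block merge into a single four-dimensional block with complex eigenvalues $\mlim\pm i a$; in odd $n$ this representative lies in the closure of the Kerr-de Sitter orbits and is therefore reached as the limit giving case~c.2).

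For even $n$ the Wick-rotated family c.1) is genuinely different: as anticipated in the introduction, its data occupy an \emph{open} domain of the quotient topology, disjoint from the closure of the Kerr-de Sitter data, so it cannot be obtained as a limit. I would construct it instead by analytic continuation, using that the Kerr-de Sitter line element depends analytically on the parameters through the combinations defining $W$ and $\Xi$, and continuing onto the branch that corresponds to the complex-eigenvalue block; I would then check that the resulting metric still solves \eqref{EFE}, is of Kerr-Schild form with the $k$ and $\til\H$ of the statement, and has conformally flat $\scri$, so that Definition~\ref{defKSdS} applies. The main obstacle lies in the limit step: although the limits of the conformal classes, and hence of the asymptotic data, are explicitly computable, the corresponding limits of the metric are invisible at the level of the line element and generically singular. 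Finding, in each case, the simultaneous rescaling of parameters and change of coordinates that regularises the limit---guided by, but not dictated by, the data---is the delicate heart of the argument, and it is precisely what makes the closed forms b), c.1) and c.2) hard to anticipate from the metrics alone.
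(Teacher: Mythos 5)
Your overall architecture is the same as the paper's: reduce via Theorem \ref{theoKSKdS} and Lemma \ref{lemmpropKdSl} to one spacetime per conformal class of CKVF (plus the mass constant), realize the generic region by rewriting the Gibbons et al.\ metric (case a), reach degenerate classes by simultaneous rescalings of parameters and coordinates guided by the limits of the asymptotic data (case b), and produce the remaining family by analytic continuation. However, you have the parities exactly backwards in the decisive step, and your stated justification contradicts Proposition \ref{proplimts}. For $n$ \emph{even} (odd spacetime dimension) the Kerr-de Sitter region $\mathcal{R}_-^{(n,0)}$ is \emph{dense} in the quotient topology, so the Wick-rotated family c.1) \emph{is} obtained as a limit of Kerr-de Sitter --- in the paper this is the limit of subsection \ref{subsecWReven}, in which $t$ is \emph{not} rescaled while $\rho = \zeta\rho'$, $\hat\m_i = \zeta\mlim_i$, $a_i = \zeta^{-1}b_i$, $M = \zeta^n M'$; no analytic continuation is needed there. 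It is for $n$ \emph{odd} that $\mathcal{R}_+^{(n,0)}$ and $\mathcal{R}_-^{(n,0)}$ are disjoint open sets (the sign of the invariant $c_0(F) = \sigma\mu_1^2\cdots\mu_p^2$, which is continuous on the quotient, separates them), so your claim that ``in odd $n$ this representative lies in the closure of the Kerr-de Sitter orbits and is therefore reached as the limit giving case c.2)'' is false: no sequence of Kerr-de Sitter data can converge to a point of $\mathcal{R}_+^{(n,0)}$ when $n$ is odd, and c.2) is precisely the family that \emph{requires} the genuine Wick rotation \eqref{eqwickrotcoord}--\eqref{eqwickrotparam}.

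This is not mere bookkeeping, because the two constructions are not interchangeable across parities. The paper's Wick rotation continues the mass as $M = (-1)^{\frac{n+1}{2}} i M'$, which produces a real metric only when $n$ is odd; applied at even $n$ your proposed continuation of c.1) would not yield a real Lorentzian solution in this form, while the limit construction you reserve for c.2) at odd $n$ cannot exist by the topological obstruction just described. Relatedly, your algebraic picture of the Wick-rotated classes as a merged four-dimensional block with complex eigenvalues does not match the classification actually used: the regions $\mathcal{R}_+$, $\mathcal{R}_0$, $\mathcal{R}_-$ are distinguished by the causal character of $\ker F(\Y)$ (timelike, degenerate, spacelike/zero respectively), and for c.2) the representative \eqref{eqFWRodd} has a timelike kernel with $\sigma > 0$, not a non-semisimple complex block. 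Finally, if you do perform an analytic continuation you must also verify the signature --- a Wick rotation does not automatically preserve Lorentzian signature, and the paper devotes an explicit computation (the hyperboloid parametrization showing $\gamma'$ in \eqref{eqhatgamwr} is positive definite) to this point, which your proposal omits.
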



\begin{center}
\makebox[0pt][c]{%
\begin{minipage}[b]{\paperwidth}
\centering 
\begin{tabular}{|c|c|c|c|c|}\hline
 Case  & Constraint on $\{\alpha_i \}$ & $W$ & $\Xi$ &
                  $\overset{~}{\til k}$                                                      
  \\ \hline
 a) & $\sum_{i=1}^{p+1} (1 + \lambda a_i^2) \alpha_i^2 = 1$ & $\sum_{i=1}^{p+1} \alpha_i^2 $ & $\sum\limits_{i=1}^{p+1} \frac{1+\lambda a_i^2}{1 + \rho^2 a_i^2} \m_i^2$ & 
 $  W \dif t - \frac{\Xi}{\rho^2 - \lambda } \dif \rho - \sum\limits_{i=1}^q {a_i \m_i^2} \dif \phi_i$ \\
 \hline
  b) & $\m_{p+1}^2  + \sum_{i=1}^p \lambda a_i^2  \m^2_i= 1$ & $\m_{p+1}^2 $ & $\m_{p+1}^2 + \sum\limits_{i=1}^{p} \frac{\lambda a_i^2}{1 + \rho^2 a_i^2} \m_i^2$ & 
 $W \dif t + \frac{\Xi}{ \lambda }  \dif \rho - \sum\limits_{i=1}^p { a_i \m^2_i} \dif \phi_i$ \\
 \hline
 c.1) & $\sum_{i=1}^{p+1} \lambda a_i^2 \alpha_i^2 = 1$ & $\sum_{i=1}^{p+1} \alpha_i^2 $ & $\sum\limits_{i=1}^{p+1} \frac{\lambda a_i^2}{1 + \rho^2 a_i^2} \m_i^2$ & 
 $  \frac{\Xi}{ \lambda } \dif \rho - \sum\limits_{i=1}^q b_i {\m^2_i} \dif \phi_i$ \\
 \hline
  c.2) & $\alpha_{p+1}^2-\sum_{i=1}^{p} (1 - \lambda a_i^2) \alpha_i^2 = 1$ & $\m_{p+1}^2 - \sum_{i=1}^{p} \alpha_i^2 $ & $\m_{p+1}^2-\sum\limits_{i=1}^{p} \frac{1-\lambda a_i^2}{1 + \rho^2 a_i^2} \m_i^2$ & 
 $ W \dif t + \frac{\Xi}{\rho^2 + \lambda } \dif \rho + \sum\limits_{i=1}^q {a_i \m_i^2}\dif \phi_i$ \\
 \hline
\end{tabular} \captionof{table}{\it \small Functions defining the Kerr-Schild-de Sitter families.}
\label{tab2}  
\end{minipage}
}
\end{center}

 \section{Fefferman-Graham formalism}\label{secFG}

The formalism which allows one to work with initial data at $\scri$ in spacetime dimensions higher than four is the Fefferman and Graham (FG) expansion for asymptotically Einstein metrics \cite{FeffGrah85, ambientmetric}. These are, roughly speaking, metrics satisfying the Einstein equations with non-zero cosmological constant to a certain order at $\scri$. The details of the Fefferman-Graham framework can be found in \cite{FeffGrah85, ambientmetric}. We only summarize the fundamental results that we shall need. We also restrict ourselves to the case of positive cosmological constant.

The FG expansion uses a particularly useful type of conformal extensions, called {\it geodesic}. Namely, those such that the vector field $\nabla \Om$ is geodesic w.r.t. the metric $g$. The following result concerning geodesic conformal extensions is standard in conformal geometry (e.g.  \cite{GrahamLee91}, \cite{ambientmetric} and particularly \cite{marspeondata21} for this specific formulation and proof).
\begin{lemma}\label{lemmanormgeod}
 Let $\tilg$ be a conformally extendable metric solving \eqref{EFE} and let $(\scri,[\gamma])$ be the corresponding conformal class of conformal infinity. Then, after restricting $\man$ if necessary, there exists a geodesic conformal extension for every boundary metric $\gamma \in [\gamma]$. In addition, a conformal extension $g = \Om^2 \tilg$ is geodesic if and only if the vector field $T_\mu:= \nabla_\mu \Om$ satisfies $ g^{\mu \nu}T_\mu T_\nu = -\lambda$. 
\end{lemma}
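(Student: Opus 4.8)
The plan is to treat the two assertions in the reverse order to that stated: I would first establish the characterization $g^{\mu\nu}T_\mu T_\nu = -\lambda$, since it identifies the PDE that the existence part must solve. The starting observation is that $T_\mu = \nabla_\mu\Om = \partial_\mu\Om$ is a gradient, so its second covariant derivative (the $g$-Hessian of $\Om$) is symmetric. This gives, for the integral curves of $T^\mu = \nabla^\mu\Om$,
\[
T^\nu \nabla_\nu T_\mu = \nabla^\nu\Om\,\nabla_\nu\nabla_\mu\Om = \nabla^\nu\Om\,\nabla_\mu\nabla_\nu\Om = \tfrac{1}{2}\nabla_\mu\lr{g^{\alpha\beta}T_\alpha T_\beta},
\]
so that $\nabla_T T = \tfrac12\,\dif\lr{g^{\alpha\beta}T_\alpha T_\beta}$. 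Hence $\nabla\Om$ is an affinely parametrized geodesic field (i.e. $\nabla_T T = 0$) if and only if $g^{\mu\nu}T_\mu T_\nu$ is locally constant. It therefore only remains to pin down the value of this constant.

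To fix the constant I would insert $\tilg = \Om^{-2}g$ into the conformal transformation law of the Ricci tensor in dimension $n+1$, multiply the resulting identity by $\Om^2$, and evaluate on $\scri = \{\Om = 0\}$. Every term carrying an explicit factor of $\Om$ then drops out — these remain finite because $g$ extends to $\partial\man$ with the assumed differentiability — and the Einstein equation \eqref{EFE} leaves only
\[
-n\,\lr{g^{\mu\nu}T_\mu T_\nu}\,g_{ab} = n\lambda\,g_{ab}\qquad\text{on }\scri,
\]
that is $g^{\mu\nu}T_\mu T_\nu = -\lambda$ on $\scri$. Note that this holds for \emph{any} conformal extension, geodesic or not, and is negative, consistently with $\nabla\Om$ being timelike and $\scri$ spacelike. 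Combining with the previous paragraph yields the equivalence: if $\nabla\Om$ is geodesic then $g^{\mu\nu}T_\mu T_\nu$ is constant and equals its boundary value $-\lambda$ throughout; conversely, $g^{\mu\nu}T_\mu T_\nu\equiv-\lambda$ forces $\nabla_T T = 0$.

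For the existence half I would fix an arbitrary reference extension $g_0 = \Om_0^2\tilg$ and look for the geodesic one in the form $\Om = e^{u}\Om_0$, so that $g = e^{2u}g_0$. A direct substitution (in which the factors $e^{\pm 2u}$ cancel) turns the geodesic condition into the first-order equation
\[
g^{\mu\nu}\nabla_\mu\Om\,\nabla_\nu\Om = g_0^{\mu\nu}\lr{\Om_0\,\partial_\mu u + \partial_\mu\Om_0}\lr{\Om_0\,\partial_\nu u + \partial_\nu\Om_0} = -\lambda,
\]
an eikonal / Hamilton--Jacobi equation for $u$ that is regular up to $\scri$. Prescribing the induced boundary metric $g|_\scri = \gamma$ fixes the Cauchy datum $u|_\scri$ as the pointwise logarithm of the conformal ratio between $\gamma$ and $g_0|_\scri$, and this datum is automatically compatible because $g_0^{\mu\nu}\partial_\mu\Om_0\,\partial_\nu\Om_0 = -\lambda$ already holds on $\scri$ by the first part. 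The characteristics of the equation are the $g$-geodesics issuing from $\scri$ along $\nabla^\mu\Om$, which on $\scri$ point along the $g$-normal; since $\scri$ is spacelike this normal is timelike and transverse, so the Cauchy problem is non-characteristic and the standard local solvability theory (equivalently, shooting normal geodesics off $\scri$) produces $u$ on a collar of $\scri$. Restricting $\man$ to this collar, checking $\dif\Om\neq0$ on $\scri$ so that $\Om$ is a genuine defining function, and reading off that each representative $\gamma\in[\gamma]$ is attained by the corresponding choice of $u|_\scri$ completes the argument.

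The characterization is essentially a two-line computation once the boundary value $-\lambda$ has been extracted, so the real work — and the main obstacle — lies in the existence half: establishing local solvability and boundary regularity of the eikonal equation and, crucially, verifying that it is non-characteristic. This last point reduces to the transversality of the $g$-normal to $\scri$, which in turn rests on $\scri$ being spacelike, a consequence of \eqref{EFE} with $\Lambda > 0$. One must also check that the finite differentiability of $g$ at $\partial\man$ suffices both for the Ricci computation and for the PDE regularity, and that the normal geodesic flow does not leave the collar — precisely the reason for the clause ``after restricting $\man$ if necessary.''
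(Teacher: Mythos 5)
You are right that the paper itself contains no proof of this lemma: it is stated as standard, with the specific formulation and proof deferred to \cite{marspeondata21} (and \cite{GrahamLee91}, \cite{ambientmetric}). Your route is the same standard one used there, and your characterization half is correct: the Hessian symmetry identity $T^\nu\nabla_\nu T_\mu = \tfrac12\nabla_\mu\lr{g^{\alpha\beta}T_\alpha T_\beta}$ reduces the geodesic condition to constancy of $|T|^2_g$, and multiplying the conformal Ricci identity (the paper's equation \eqref{eqrelriccis}) by $\Om^2$ and evaluating on $\{\Om=0\}$ correctly pins the boundary value $g^{\mu\nu}T_\mu T_\nu|_\scri = -\lambda$ for \emph{any} conformal extension, which settles the equivalence on a connected collar.

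The genuine gap is in the solvability step of the existence half: the Cauchy problem is \emph{not} non-characteristic as you have set it up. For
\begin{equation}
F(x,\dif u) := g_0^{\mu\nu}\lr{\Om_0\,\partial_\mu u + \partial_\mu\Om_0}\lr{\Om_0\,\partial_\nu u + \partial_\nu\Om_0} + \lambda = 0,
\end{equation}
the gradient in the momentum variable is $\partial F/\partial p_\mu = 2\Om_0\, g_0^{\mu\nu}\lr{\Om_0 p_\nu + \partial_\nu\Om_0}$, which vanishes identically on $\{\Om_0 = 0\}$; moreover, on $\scri$ the equation collapses to the identity $|\dif\Om_0|^2_{g_0} = -\lambda$ regardless of the normal derivative of $u$, so the equation plus the datum $u|_\scri$ determine nothing there. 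The ``automatic compatibility'' you observe is precisely the symptom of this degeneracy, not evidence of transversality, and your parenthetical alternative of shooting normal geodesics off $\scri$ suffers from the same circularity: those would be geodesics of $g = e^{2u}g_0$, which depends on the unknown. The missing step — which is exactly Lemma 5.2 of \cite{GrahamLee91}, and what \cite{marspeondata21} adapts — is to divide the equation by $\Om_0$: since $|\dif\Om_0|^2_{g_0} + \lambda$ vanishes on $\scri$ by your own first part applied to $g_0$, the quotient $\lr{|\dif\Om_0|^2_{g_0}+\lambda}/\Om_0$ extends regularly to the boundary (this costs one derivative, which the assumed differentiability covers), and the divided equation
\begin{equation}
2\, g_0^{\mu\nu}\,\partial_\mu u\,\partial_\nu\Om_0 + \Om_0\,|\dif u|^2_{g_0} + \frac{|\dif\Om_0|^2_{g_0}+\lambda}{\Om_0} = 0
\end{equation}
has momentum gradient $2\nabla^\mu\Om_0 + 2\Om_0\, p^\mu$, equal on $\scri$ to $2\nabla^\mu\Om_0 \neq 0$, a timelike and hence transverse direction. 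This problem \emph{is} genuinely non-characteristic for arbitrary datum $u|_\scri$ (the pointwise conformal ratio realizing any $\gamma\in[\gamma]$, as you say), the method of characteristics produces $u$ on a collar, and the remainder of your argument then goes through unchanged.
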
 
Unless otherwise specified, we will assume that the conformal extensions are geodesic. In such case, it is often also convenient (possibly after restricting $\man$ further) to use geodesic Gaussian coordinates $\{\Om, x^i\}_{i=1}^n$ adapted the foliation of leaves $\Sigma_\Om = \{\Om = const.\}$.

Given a geodesic conformal extension inducing a boundary metric $\gamma$, the Einstein metric $\tilg$ and its conformally extended representative $g$ can be written in the so-called {\it normal form w.r.t. $\gamma$}
\begin{equation}\label{eqnormform}
 \tilg =\frac{1}{\Om^2} \lr{-\frac{\dif \Om^2}{\lambda} + g_\Om},\quad\quad g = \Om^2 \tilg = -\frac{\dif \Om^2}{\lambda} + g_\Om,
\end{equation} 
where $g_\Om$ is the metric induced by $g$ on the leaves $\Sigma_\Om = \{ \Om = const. \}$.
The FG expansion is an asymptotic series expansion of the metric $g_\Om$ of the form 
\begin{align}
 g_\Om  & \sim \sum\limits_{r = 0}^{(n-1)/2} g_{(2 r)}  \Om^{2r}  + \sum\limits_{r =n}^{\infty} g_{(r)}  \Om^r,\quad\quad &&\mbox{if $n$ is odd},\label{eqFGexpoddsec}\\
  g_\Om  & \sim \sum\limits_{r = 0}^{\infty} g_{(2 r)}  \Om^{2r}  + \sum\limits_{r = n}^{\infty} \sum\limits_{s = 1}^{m_r} \obs_{(r,s)} \Om^{r} (\log \Om)^s ,\quad\quad &&\mbox{if $n$ is even}\label{eqFGexpevsec},
\end{align}
where the coefficients  $g_{(r)}$ and $\obs_{(r,s)}$ are defined at $\scri$ and extended to $M$ as independent objects of $\Om$, in Gaussian coordinates. The logarithmic terms occur in the $n$ even case if the so-called obstruction tensor $\obs: = \obs_{(n,1)}$ is non-zero. Note that the presence of logarithmic terms makes the metric non-smooth at $\{\Om = 0\}$. It turns out that the metric is smooth if and only if $\obs =0$, because then all logarithmic terms vanish. The obstruction tensor is conformally invariant and completely determined by the boundary metric $\gamma$ and it vanishes identically when $\gamma$ is conformally flat.

In general, the FG expansion is only formal, thus not necessarily convergent, and all its terms can be generated by imposing that the Einstein equations are satisfied to infinite order at $\scri$, 
which are necessary (in general not sufficient) conditions for $\tilg$ to be Einstein\footnote{Note that although $\tilg$ is defined only in $\mathrm{Int}(\man)$, being Einstein means $\widetilde R_{\alpha \beta} - n \lambda \tilg_{\alpha \beta} = 0$, so this tensor and all its derivatives must extend as zero to $\scri$.}. The coefficients that need to be prescribed are the zero-th order $g_{(0)}$ and the $n$-th order $g_{(n)}$ ones, and the rest are recursively generated from them. Thus, for each pair of coefficients $(g_{(0)},g_{(n)})$ there exists a unique FG expansion \eqref{eqFGexpoddsec},\eqref{eqFGexpevsec}. Hence, for a given Einstein metric, one can associate a FG expansion generated by $g_{(0)} = \gamma$ (the induced metric at $\scri$) and $g_{(n)}$  (the $n$-th order derivative of $g_{\Om}$ a $\scri$).

The converse is more delicate, i.e. to associate an Einstein metric $\tilg$ to a formal series (equivalently to a pair of coefficients $(g_{(0)}, g_{(n)})$ ). As already mentioned in Section \ref{secpre}, this question was first addressed in the analytic case (see \cite{FeffGrah85},\cite{kichenassamy03},\cite{Rendall03}), where it was found that for analytic data $(g_{(0)}, g_{(n)})$, the FG expansion converges to a unique Einstein metric $\tilg$ defined in a sufficiently small neighbourhood of $\scri$. In the smooth setting, well-posedness of the asymptotic Cauchy problem with data $(g_0, g_{(n)})$ was proved by Anderson, Chruściel and Kamiński \cite{Anderson2005,andersonchrusciel05,kaminski21}) in the case of even spacetime dimension (i.e. $n$ odd).


The fundamental properties of the FG expansion that we shall need are summarized in the next lemma. The proof can be found in \cite{ambientmetric},  and we refer to \cite{Anderson2004} (see also \cite{marspeondata21}) for a discussion specifically taylored for the asymptotic initial value problem of the Einstein equations in the $\lambda >0$ case.
\begin{lemma}[Properties of the FG expansion]\label{lemmapropFG}~
\begin{enumerate} 
\item Each coefficient $g_{(r)}$ with $0 <r < n$ depends on previous order coefficients up to order $g_{(r-2)}$ and tangential derivatives of them up to second order. This is also true for $n<r$ if $n$ odd or $n$ even with $\obs = 0$. If $n$ is even and $\obs \neq 0$, the terms $g_{(r)}$ and $\obs_{(r,s)}$ with $n<r$ depends on previous terms up to order $g_{(r-2)}$ and $\obs_{(r-2,l)}$.

 \item Up to order $n$, both expansions \eqref{eqFGexpoddsec}, \eqref{eqFGexpevsec} are even and all terms $g_{(r)}$ with $r<n$ or $r = n+1$ (but not $r= n$) are solely generated from $\gamma$. If $n$ is even, $\obs$ is also generated from $\gamma$ and in particular, $\obs = 0$ if $\gamma$ is conformally flat.
 
 \item The $n$-th order coefficient $g_{(n)}$ is independent on previous terms except for 
 \begin{equation}\label{eqTTgsec}
 \mathrm{Tr}_\gamma g_{(n)} = \a, \quad\quad \mathrm{div}_\gamma g_{(n)} = \b,
\end{equation}     
where $\a = 0,~\b = 0$ for $n$ odd and $\a$ is a scalar and $\b$ a one-form determined by $\gamma$ for $n$ even. 

\item For analytic data $(\gamma, g_{(n)})$, the FG expansion converges.  
\end{enumerate}

\end{lemma}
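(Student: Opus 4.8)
The plan is to substitute the normal-form metric $g = -\lambda^{-1}\dif\Om^2 + g_\Om$ of \eqref{eqnormform} into the Einstein equation $\widetilde R_{\alpha\beta} = n\lambda\,\tilg_{\alpha\beta}$ for $\tilg = \Om^{-2}g$ and organize the resulting equations by powers of $\Om$. In Gaussian coordinates $\{\Om,x^i\}$ adapted to the foliation $\Sigma_\Om$, the components of the Einstein tensor split into three groups: the purely tangential components $(ij)$, the mixed components $(\Om i)$, and the normal component $(\Om\Om)$. The tangential block yields a second-order Fuchsian ODE in $\Om$ for the family $g_\Om=(g_{ij}(\Om,x))$, schematically of the form $\Om\,\partial_\Om^2 g_{ij} - (n-1)\partial_\Om g_{ij} + (\text{quadratic in }\partial_\Om g) - 2\Om\,\mathrm{Ric}(g_\Om)_{ij} + \cdots = 0$, whose source at each order involves only $g_\Om$, its tangential Ricci tensor, and first $\Om$-derivatives. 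The mixed and normal blocks are not evolution equations but constraints, controlling respectively the divergence and the trace of $g_\Om$.

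The core is an indicial analysis of the tangential ODE. Inserting the formal series and collecting the coefficient of the appropriate power of $\Om$, the linear operator acting on $g_{(r)}$ is proportional to $r(r-n)$ together with subleading trace terms, so its indicial polynomial has roots $0$ and $n$: the operator is invertible for $0<r<n$ and degenerate exactly at $r=n$ (the root $r=0$ being occupied by the free datum $\gamma=g_{(0)}$). This gives item (1): for $0<r<n$ each $g_{(r)}$ is uniquely fixed by the lower coefficients down to $g_{(r-2)}$ together with at most two tangential derivatives of them, supplied by $\mathrm{Ric}(g_\Om)$; the same recursion persists beyond $r=n$ whenever the operator stays invertible, i.e. for $n$ odd, or $n$ even with vanishing obstruction. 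A parallel induction assuming all lower odd coefficients vanish shows the source at each odd order vanishes, giving the evenness up to order $n$ in item (2); since each $g_{(r)}$ with $r<n$ is then the image of curvature expressions in $\gamma$ under the universal inverse operator, these coefficients are \emph{solely} generated by $\gamma$. The coefficient $g_{(n+1)}$ is again solely generated by $\gamma$ despite lying past the free datum: its only possible coupling to $g_{(n)}$ would pass through $g_{(1)}$, which vanishes by parity.

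Item (3) comes from the constraint blocks. Expanding the $(\Om i)$ and $(\Om\Om)$ components and evaluating the obstruction to solving the tangential equation at the degenerate order $r=n$, one finds that $\mathrm{Tr}_\gamma g_{(n)}$ and $\mathrm{div}_\gamma g_{(n)}$ are not free but fixed in terms of $\gamma$, while $g_{(n)}$ is otherwise unconstrained and furnishes the second piece of Cauchy data. A parity count shows both quantities vanish for $n$ odd, whereas for $n$ even they equal definite curvature scalars and one-forms of $\gamma$, yielding \eqref{eqTTgsec}. The delicate point for $n$ even is that at $r=n$ the right-hand side generically fails to lie in the image of the degenerate operator; solvability is restored only by inserting a term $\obs\,\Om^n\log\Om$, and matching identifies $\obs$ as the coefficient obstructing a smooth solution. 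One then argues that $\obs$ is a conformally invariant natural tensor built polynomially from $\gamma$ and its curvature; as it vanishes on the flat model and conformal invariance propagates this across the conformal class, $\obs\equiv 0$ whenever $\gamma$ is conformally flat, completing item (2).

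Finally, item (4) follows by recasting the full nonlinear system for $g_\Om$ as a Fuchsian PDE with indicial data $(\gamma,g_{(n)})$ and invoking the Cauchy--Kovalevskaya/Fuchsian existence theory for analytic coefficients, which promotes the formal series to a genuine convergent solution near $\scri$. The step I expect to be the main obstacle is the $n$-even analysis at order $n$: simultaneously pinning down the trace and divergence constraints on $g_{(n)}$, isolating the logarithmic coefficient, and proving that the resulting obstruction tensor is both determined by $\gamma$ alone and conformally invariant (hence vanishing in the conformally flat case). Keeping the signs and numerical factors correct for the Lorentzian, positive-$\lambda$ normalization in \eqref{eqnormform} is a further bookkeeping hazard, though not a conceptual one.
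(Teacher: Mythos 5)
Your proposal is correct and follows essentially the same route as the paper's source: the paper states Lemma \ref{lemmapropFG} without proof, citing \cite{ambientmetric} and \cite{Anderson2004,marspeondata21}, and your sketch---normal-form reduction to a Fuchsian system with indicial roots $0$ and $n$, trace/divergence constraints from the mixed and normal components fixing \eqref{eqTTgsec}, parity induction for evenness up to order $n$ (with $g_{(n+1)}$ decoupling from $g_{(n)}$ via $g_{(1)}=0$), the logarithmic obstruction at order $n$ for $n$ even with conformal invariance and naturality forcing $\obs = 0$ in the conformally flat case, and Fuchsian/Cauchy--Kovalevskaya theory for convergence of analytic data---is precisely the standard argument given in those references. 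The only point glossed over, exactly as the statement of item~(1) requires, is the joint recursion for $n$ even with $\obs \neq 0$, where beyond order $n$ the coefficients $g_{(r)}$ and $\obs_{(r,s)}$ are determined together from $g_{(r-2)}$ and $\obs_{(r-2,l)}$ because the log terms feed back into the source; this is bookkeeping, not a gap.
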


In the context of general relativity, a well-posed Cauchy problem can be used to characterize geometrically Einstein metrics $\tilg$ in terms of their asymptotic initial data. In particular, for $n$ odd or analytic data, one may use the data $(g_{(0)}, g_{(n)})$ of the FG expansion. The zero-th order coefficient is simply the boundary metric $\gamma$, but $g_{(n)}$ needs to be calculated as the $n$-th order derivative of $g_\Om$ in Gaussian coordinates. Finding these coordinates is not a necessarily easy calculation, so it is clearly a step forward to have a geometric expression for $g_{(n)}$. 
This was achieved in \cite{marspeondata21} (also in \cite{Holl05} for the negative $\Lambda $ case) for the case of conformally flat boundaries. To do that, one must first extract a TT part $\mathring g_{(n)}$, which we denote {\it free part}, from the $n$-th order coefficient $g_{(n)}$. By setting background data $(\gamma, \barg_{(n)})$ as those corresponding to de Sitter, one defines the free part as $\mathring g_{(n} := g_{(n)} - \barg_{(n)}$, in such a way that any pair of initial data  $(\gamma, g_{(n)})$ is equivalent to $(\gamma, \mathring g_{(n)})$. Then, it is proven that $\mathring g_{(n)}$ agrees, up to a constant, with the electric part of the rescaled Weyl tensor \eqref{eqelecresc},
where $\nor^\alpha = T^\alpha/|T|$ is the unit timelike, normal to $\Sigma_\Om$.
\begin{theorem}[{\hspace{-0.025cm}\cite{marspeondata21}}]\label{theodatascri}
    For every Einstein metric $\tilg$ of dimension  admitting a smooth conformally flat $\scri$, the data which determine the FG expansion $(g_{(0)},\mathring g_{(n)})$ can be geometrically identified with $(\gamma, D)$, where $\gamma = g_{(0)}$ is the boundary metric and $D = -\frac{\lambda}{2}n(n-2) \mathring g_{(n)}$.
\end{theorem}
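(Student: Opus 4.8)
The plan is to compute the rescaled electric Weyl tensor \eqref{eqelecresc} directly from the Fefferman--Graham data and match it, order by order in $\Om$, against the coefficient $g_{(n)}$. Throughout I would work in a geodesic conformal extension in normal form \eqref{eqnormform}, which exists by Lemma \ref{lemmanormgeod}, so that $g=-\dif\Om^2/\lambda+g_\Om$ and the level sets $\Sigma_\Om$ foliate a neighbourhood of $\scri$ with unit normal $\nor=T/\sqrt{\lambda}$, $T=\nabla\Om$. The first simplification I would exploit is that the $(1,3)$ Weyl tensor ${C^\mu}_{\alpha\nu\beta}$ is conformally invariant, so it may be computed from whichever of $g$ or $\tilg$ is convenient; a short computation relating the $g$-unit and $\tilg$-unit normals shows that the contraction ${C^\mu}_{\alpha\nu\beta}\nor_\mu\nor^\nu$ takes the same value whether one uses $g$ or $\tilg$. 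Computing it from the Einstein metric $\tilg$ is advantageous because \eqref{EFE} forces the Schouten tensor of $\tilg$ to be pure trace, $\widetilde P_{\alpha\beta}=\tfrac{\lambda}{2}\tilg_{\alpha\beta}$, whence the electric Weyl reduces to $\widetilde R_{\mu\alpha\nu\beta}\nor^\mu\nor^\nu-\lambda\widetilde h_{\alpha\beta}$ on vectors tangent to $\Sigma_\Om$, with $\widetilde h$ the induced metric.

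The next step is to feed in the radial structure of the foliation. The tangential--normal--tangential--normal components of the curvature are controlled by the Gauss--Codazzi--Ricci (Riccati) equation, which expresses $\widetilde R_{\mu\alpha\nu\beta}\nor^\mu\nor^\nu$ in terms of the second fundamental form of $\Sigma_\Om$ and its normal derivative, together with quadratic (and, in the physical frame, acceleration) terms. Since the second fundamental form is algebraically determined by $\partial_\Om g_\Om$, this rewrites the electric Weyl as a universal differential expression in $g_\Om$, at most quadratic and involving up to two $\Om$-derivatives. I then substitute the Fefferman--Graham expansion \eqref{eqFGexpoddsec}--\eqref{eqFGexpevsec}. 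Because $\scri$ is conformally flat the obstruction tensor vanishes (Lemma \ref{lemmapropFG}), so there are no logarithmic terms, the metric is smooth, and the expansion is even through order $n$.

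To isolate $g_{(n)}$ and simultaneously prove finiteness of $\Om^{2-n}(C_\perp)$, I would compare with the de Sitter background sharing the same boundary metric $\gamma$. By Lemma \ref{lemmapropFG} every coefficient of order $<n$ is a universal function of $\gamma$, hence coincides with that of de Sitter; thus the two profiles agree up to and including order $\Om^{n-1}$ and first differ at order $\Om^n$ by $\mathring g_{(n)}=g_{(n)}-\barg_{(n)}$. Since de Sitter is conformally flat its full Weyl, and in particular its electric part, vanishes identically. Linearising the (nonlinear) electric-Weyl functional about the de Sitter profile therefore kills all contributions of the lower coefficients, and the leading surviving term is linear in the pure mode $\Om^n\mathring g_{(n)}$. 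Acting with the two $\Om$-derivatives on $\Om^n\mathring g_{(n)}$ produces a factor $\Om^{n-2}$, so $\Om^{2-n}(C_\perp)$ is finite at $\scri$ and proportional to $\mathring g_{(n)}$; this also re-establishes that $\widehat D$ is TT and prescribes $D$. Evaluating the numerical coefficient of the linearised operator on a TT mode --- where the $\partial_\Om^2$ contribution combines with the quadratic umbilic term $\widetilde K^2\approx\lambda\tilg$ and the cosmological shift $-\lambda\widetilde h$ --- yields $D=-\tfrac{\lambda}{2}n(n-2)\mathring g_{(n)}$.

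The main obstacle is the bookkeeping in this last step: although the comparison with de Sitter disposes of the lower-order coefficients conceptually, one must verify that the quadratic terms in the electric-Weyl functional do not produce spurious $\Om^{n-2}$ contributions from cross-products of sub-leading coefficients with $\mathring g_{(n)}$, and one must pin down the exact constant, including the passage from the naive $n(n-1)$ coming from $\partial_\Om^2\Om^n$ to $n(n-2)$ after the trace adjustments built into the Weyl tensor. This is where conformal flatness is essential: it removes the obstruction-tensor (logarithmic) contamination, guarantees the evenness of the expansion so that no $\Om^{n-1}$ term can interfere, and --- via the background subtraction defining $\mathring g_{(n)}$ --- reduces the even-$n$ case, in which $g_{(n)}$ carries a $\gamma$-determined trace and divergence, to a genuinely TT perturbation for which the coefficient computation closes cleanly.
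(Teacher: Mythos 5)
Your proposal is correct and follows essentially the paper's own route (as carried out in \cite{marspeondata21} and reflected here in Lemma \ref{lemmaWeyls} and Proposition \ref{propconflatdec}): geodesic gauge, the Riccati-type formula \eqref{eqCtnocov} for $C_\perp$, subtraction of the de Sitter profile \eqref{eqpoindS} sharing the boundary metric $\gamma$, and reading off the leading term of $\frac{1}{\Om}\dot Q - \ddot Q$ on $Q = \Om^n \mathring g_{(n)} + \dots$, which gives $n - n(n-1) = -n(n-2)$ and hence $D = -\frac{\lambda}{2}n(n-2)\mathring g_{(n)}$. The cross-term worry you flag resolves exactly as in the paper's computation (cf.\ the proof of Lemma \ref{lemmaCuu}): since $\dot{\barg}_\Om = O(\Om)$ and $\dot Q = O(\Om^{n-1})$, all quadratic and mixed terms are $O(\Om^n)$ and cannot contaminate the $\Om^{n-2}$ order.
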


It is clear from the symmetries of the Weyl tensor that $D_{\alpha \beta}$ has only non-trivial components along the tangential directions to $\Sigma_\Om$. 
Expression \eqref{eqelecresc} depends on the choice of conformal extension because although the Weyl tensor is conformally invariant, the vectors  $T= \nabla\Omega$ and $\nor = T/|T|$  are not. Indeed, for two conformal extensions $g = \Om^2 \tilg$ and $g' = \Om'^2 \tilg$ of a given physical metric $\tilg$, the conformal factors are related by a smooth positive function $\omega$ satisfying $\Om' = \omega \Om$.
Hence, the respective normal covectors $\nabla_\alpha \Om$ and $\nabla_\alpha  \Om'$ and their metrically associated vector fields satisfy
\begin{equation}\label{eqnormvecscri}
 \nabla_\alpha \Om' = {\Omega \nabla_\alpha\omega + \omega \nabla_\alpha\Om},
 \quad\Longrightarrow\quad
{g'}^{\alpha\beta} \nabla_{\alpha} \Omega' 
\nabla_{\beta} \Omega'  |_{\scri}  =
g^{\alpha\beta} \nabla_{\alpha} \Omega 
\nabla_{\beta} \Omega |_{\scri}.
\end{equation}
In particular, this means that the unit normals $u$ and $u'$ point in the same direction at $\scri$ and they satisfy $u'_\alpha|_\scri = \omega u_\alpha |_\scri$ and $u'^\alpha |_\scri = \omega^{-1} u^\alpha|_\scri$. Hence, their respective rescaled Weyl tensors $D_{\alpha \beta}$ and $D'_{\alpha \beta}$ at $\scri$ are related by
\begin{equation}
 D'_{\alpha\beta} =   \restr{(\Om'^{2-n} {C^\mu}_{\alpha \nu \beta} \widehat \nor_\mu \widehat\nor^\nu)}{\scri} = \restr{(\omega^{2-n}\Om^{2-n} {C^\mu}_{\alpha \nu \beta} \nor_\mu \nor^\nu)}{\scri} = \restr{\omega^{2-n} }{\scri} D_{\alpha \beta}.
\end{equation}
This, in particular, proves the equivalence of data given in Remark \ref{remarkequivdata}.

Once we have geometrically identified the initial data for spacetimes admitting a smooth conformally flat $\scri$, the next two Lemmas give formulae  to compute the electric part of the rescaled Weyl tensor. The proof can be also found in \cite{marspeondata21}.

 \begin{lemma}\label{lemmaWeyls}
 Let $\tilg$ be a conformally extendable metric satisfying \eqref{EFE} and $g = \Om^2 \tilg$ a geodesic conformal extension. Then, in Gaussian coordinates $\{\Om, x^i\}$, the electric part of the Weyl tensor w.r.t. $u = T/|T|_g$ reads
 \begin{equation}\label{eqCtnocov}
 (C_\perp)_{ij} =   {C^\mu}_{i \nu j} \nor_\mu \nor^\nu = \frac{\lambda}{2}  \lr{\frac{1}{2} \partial_{\Om} (g_{ik}) g^{kl} \partial_{\Om} (g_{lj})+  \frac{1}{\Om} \partial_\Om g_{ij} - \partial^{(2)}_\Om g_{ij}},
\end{equation}
where $g_{ij} = (g_\Om)_{ij}$ is the metric induced by $g$ on the leaves $ \{\Om = const. \}$.
\end{lemma}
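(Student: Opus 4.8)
The plan is to compute the single relevant component of the Weyl tensor directly in the Gaussian coordinates $\{\Om, x^i\}$, exploiting the normal form \eqref{eqnormform} of $g$. In these coordinates $g_{\Om\Om} = -1/\lambda$, $g_{\Om i}=0$ and $g_{ij} = (g_\Om)_{ij}$, so that $T_\mu = \nabla_\mu\Om = \delta^\Om_\mu$, while $T^\mu = g^{\mu\Om}$ has $T^\Om = -\lambda$, $T^i = 0$, and $|T|_g = \sqrt\lambda$ by Lemma \ref{lemmanormgeod}. Hence the unit normal satisfies $u_\mu = \lambda^{-1/2}\delta^\Om_\mu$ and $u^\mu = -\lambda^{1/2}\delta^\mu_\Om$, the contraction collapses to $u_\mu u^\nu = -\delta^\Om_\mu\delta^\nu_\Om$, and therefore $(C_\perp)_{ij} = {C^\mu}_{i\nu j}u_\mu u^\nu = -{C^\Om}_{i\Om j}$. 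Since the $(1,3)$ Weyl tensor is conformally invariant, this component may be computed with $g$ rather than with $\tilg$.

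Next I would assemble the two ingredients of ${C^\Om}_{i\Om j}$. From the Christoffel symbols of $g$ in normal form — the only non-trivial ones being $\Gamma^\Om_{ij} = \frac{\lambda}{2}\partial_\Om g_{ij}$, $\Gamma^i_{\Om j} = \frac{1}{2}g^{ik}\partial_\Om g_{kj}$ and the intrinsic $\Gamma^k_{ij}$ of $g_\Om$ — a direct evaluation of the curvature yields the radial component ${R^\Om}_{i\Om j} = \frac{\lambda}{2}\partial^2_\Om g_{ij} - \frac{\lambda}{4}\partial_\Om g_{ik}g^{kl}\partial_\Om g_{lj}$, which already reproduces, up to sign, the first and third terms of the claimed formula. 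Inserting the $(\Om i\Om j)$ components into the definition of the Weyl tensor in dimension $n+1$ then gives ${C^\Om}_{i\Om j} = {R^\Om}_{i\Om j} - \frac{1}{n-1}R_{ij} + \frac{\lambda}{n-1}g_{ij}R_{\Om\Om} + \frac{R}{n(n-1)}g_{ij}$, so that everything reduces to identifying the Ricci contribution.

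The heart of the computation is to evaluate $R_{ij}$, $R_{\Om\Om}$ and $R$ of $g$ and to show that the three Ricci/scalar terms above collapse to exactly $-\frac{\lambda}{2\Om}\partial_\Om g_{ij}$. Rather than computing the curvature of $g$ from scratch, I would transport the Einstein condition $\widetilde R_{\alpha\beta} = n\lambda\tilg_{\alpha\beta}$ across the conformal rescaling $\tilg = \Om^{-2}g$ using the standard transformation law for the Ricci tensor with $f = -\ln\Om$. The geodesic gauge is what makes this tractable: the Hessian of $\Om$ has only tangential components $\nabla_i\nabla_j\Om = -\Gamma^\Om_{ij} = -\frac{\lambda}{2}\partial_\Om g_{ij}$, while $\nabla_\Om\nabla_\beta\Om = 0$ and $g^{\alpha\beta}\nabla_\alpha\Om\nabla_\beta\Om = -\lambda$ is constant. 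Substituting these, the $(ij)$ component of the conformal identity gives $R_{ij} = \frac{(n-1)\lambda}{2\Om}\partial_\Om g_{ij} + \frac{\lambda}{2\Om}(g^{kl}\partial_\Om g_{kl})g_{ij}$, the $(\Om\Om)$ component gives $R_{\Om\Om} = -\frac{1}{2\Om}g^{kl}\partial_\Om g_{kl}$, and tracing yields $R = \frac{n\lambda}{\Om}g^{kl}\partial_\Om g_{kl}$. Feeding these back, the terms proportional to $g_{ij}(g^{kl}\partial_\Om g_{kl})$ cancel identically and the surviving piece is precisely $-\frac{\lambda}{2\Om}\partial_\Om g_{ij}$, so that $(C_\perp)_{ij} = -{C^\Om}_{i\Om j} = \frac{\lambda}{2}\lr{\frac{1}{2}\partial_\Om g_{ik}g^{kl}\partial_\Om g_{lj} + \frac{1}{\Om}\partial_\Om g_{ij} - \partial^2_\Om g_{ij}}$, as claimed.

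The step I expect to be the main obstacle is the conformal transformation of the Ricci tensor: it is the only place where the Einstein equations enter, and it requires careful bookkeeping of several terms — the Hessian of $\Om$, the d'Alembertian $\Box(-\ln\Om)$ and the gradient-squared piece — arranged so that the $g_{ij}$-proportional contributions cancel exactly against those coming from $R_{\Om\Om}$ and $R$. Everything else is a mechanical evaluation of Christoffel symbols and curvature in a very rigid coordinate system, and the conformal invariance of the $(1,3)$ Weyl tensor removes any ambiguity about whether one works with $g$ or $\tilg$.
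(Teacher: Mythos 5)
Your proposal is correct: the Christoffel symbols and the component ${R^\Om}_{i\Om j}$ in the normal-form gauge, the Ricci components $R_{ij}$, $R_{\Om\Om}$ and the scalar $R$ obtained from the geodesic-gauge identity (which is precisely equation \eqref{eqricciconfgeo} that the paper itself derives in Section \ref{secKSsubKdS} from \eqref{eqrelriccis} and the Einstein equations), and the exact cancellation of the terms proportional to $g_{ij}\,g^{kl}\partial_\Om g_{kl}$ all check out, yielding $(C_\perp)_{ij}=-{C^\Om}_{i\Om j}$ equal to \eqref{eqCtnocov}. The paper does not prove this lemma in-text but defers it to \cite{marspeondata21}; your argument — a direct computation in Gaussian coordinates with the Einstein equations entering only through the conformal transformation of the Ricci tensor in the geodesic gauge — is essentially the same approach, consistent with the toolkit the paper deploys elsewhere.
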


In the next lemma the function $\Om$ has a priori nothing to do with a conformal factor, so the result can be applied to very general situations. We use this notation because the result will later be applied in situations where $\Omega$ is the conformal factor.

\begin{lemma}\label{lemmaWeyls2}
 Let $g$, $\widehat g$ be $(n+1)$-dimensional  metrics related by $g = \widehat g + \Om^m q$, for $m\geq 2$, with $g, \widehat{g}, q$  and $\Om$ at least $C^2$ in a neighbourhood of $\{\Om=0\}$. Assume that $\nabla \Om$ is nowhere null at $\Om=0$. Then their Weyl tensors satisfy the following expression
 \begin{equation}\label{eqexpweyl}
 {C^\mu}_{\nu\alpha\beta } = {\widehat C^\mu}_{~~\nu\alpha\beta } -  K_m(\Om) \frac{n-2}{n-1}(\nor^\mu \nor_{[\alpha} \tlt_{\beta]\nu} + {\tlt^\mu}_{~~ [\alpha}\nor_{\beta]}\nor_\nu) + \frac{\epsilon K_m(\Om)}{n-1}({\pi^\mu}_{[\alpha} \tlt_{\beta]\nu} + \tlt^\mu_{~~ [\alpha}  \pi_{\beta]\nu}) + o(\Om^{m-2})
\end{equation}
with 
\begin{equation}
 K_m(\Om) =  m(m-1) \Om^{m-2} F^2,
\end{equation}
and where $\nabla \Om = F \nor$, for $g(\nor, \nor) = \epsilon = \pm 1$, $\pi_{\alpha\beta}$ is the projector orthogonal to $\nor$, all indices are raised and lowered with $g$, $t_{\alpha\beta}= q_{\mu\nu}{\pi^\mu}_\alpha {\pi^\nu}_\beta$ while $t$ and $\tlt_{\alpha \beta}$ are its trace and traceless part respectively.
 \end{lemma}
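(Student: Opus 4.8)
The plan is to relate the two Weyl tensors through the difference of their Levi--Civita connections, so throughout I write $\delta X:=X[g]-X[\widehat g]$. Setting $h:=\Om^m q$, so that $g=\widehat g+h$, the two connections differ by the tensor $\delta\Gamma^\lambda{}_{\mu\nu}=\tfrac12 g^{\lambda\sigma}(\widehat\nabla_\mu h_{\nu\sigma}+\widehat\nabla_\nu h_{\mu\sigma}-\widehat\nabla_\sigma h_{\mu\nu})$, and the Riemann tensors obey the exact identity $R^\rho{}_{\sigma\mu\nu}=\widehat R^\rho{}_{\sigma\mu\nu}+\widehat\nabla_\mu\delta\Gamma^\rho{}_{\nu\sigma}-\widehat\nabla_\nu\delta\Gamma^\rho{}_{\mu\sigma}+\delta\Gamma^\rho{}_{\mu\lambda}\delta\Gamma^\lambda{}_{\nu\sigma}-\delta\Gamma^\rho{}_{\nu\lambda}\delta\Gamma^\lambda{}_{\mu\sigma}$ (up to the paper's curvature sign convention). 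First I would do a careful bookkeeping of the order in $\Om$. Since $\widehat\nabla h=m\Om^{m-1}(\dif\Om)\,q+\Om^m\widehat\nabla q=O(\Om^{m-1})$, one has $\delta\Gamma=O(\Om^{m-1})$, so for $m\ge 2$ the two quadratic terms are $O(\Om^{2m-2})=o(\Om^{m-2})$. The only contribution reaching order $\Om^{m-2}$ is the one in which both covariant derivatives hit the factor $\Om^m$, namely $\widehat\nabla_\alpha\widehat\nabla_\beta h_{\mu\nu}=m(m-1)\Om^{m-2}(\partial_\alpha\Om)(\partial_\beta\Om)\,q_{\mu\nu}+o(\Om^{m-2})=K_m(\Om)\,\nor_\alpha\nor_\beta\,q_{\mu\nu}+o(\Om^{m-2})$, where I use $\nabla\Om=F\nor$ and the non-degeneracy hypothesis $F\neq 0$. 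The same estimate shows that the distinction between $g$ and $\widehat g$ in all auxiliary objects ($\nor$, $F$, $\epsilon$, $\pi$, and the raising and lowering of indices) only introduces $o(\Om^{m-2})$ errors, so I may work throughout with the $g$-quantities.

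Substituting this leading behaviour gives $\delta R^\rho{}_{\sigma\mu\nu}=\tfrac12 K_m\big(\nor_\sigma\nor_\mu\,q_\nu{}^\rho-\nor_\sigma\nor_\nu\,q_\mu{}^\rho+\nor^\rho\nor_\nu\,q_{\mu\sigma}-\nor^\rho\nor_\mu\,q_{\nu\sigma}\big)+o(\Om^{m-2})$, a Kulkarni--Nomizu-type product of $\nor\otimes\nor$ with $q$. Tracing (indices may be moved with $g$ up to $o(\Om^{m-2})$) yields the Ricci and scalar differences $\delta R_{\sigma\nu}=\tfrac12 K_m\big(\nor_\sigma s_\nu+\nor_\nu s_\sigma-(\mathrm{tr}_g q)\,\nor_\sigma\nor_\nu-\epsilon\,q_{\sigma\nu}\big)$ and $\delta R=K_m(s-\epsilon\,\mathrm{tr}_g q)$, with $s_\mu:=\nor^\nu q_{\mu\nu}$ and $s:=\nor^\mu\nor^\nu q_{\mu\nu}$. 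I would then insert these into the standard decomposition of the Weyl tensor in dimension $d=n+1$, $C_{\rho\sigma\mu\nu}=R_{\rho\sigma\mu\nu}-\tfrac{1}{n-1}(g_{\rho\mu}R_{\sigma\nu}-g_{\rho\nu}R_{\sigma\mu}-g_{\sigma\mu}R_{\rho\nu}+g_{\sigma\nu}R_{\rho\mu})+\tfrac{R}{n(n-1)}(g_{\rho\mu}g_{\sigma\nu}-g_{\rho\nu}g_{\sigma\mu})$, applied to the leading differences above.

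The decisive simplification is to split every metric occurring in the trace corrections as $g_{\alpha\beta}=\pi_{\alpha\beta}+\epsilon\,\nor_\alpha\nor_\beta$ and to decompose $q$ along the normal direction, $q_{\mu\nu}=\tlt_{\mu\nu}+\tfrac1n(\mathrm{tr}_\pi q)\,\pi_{\mu\nu}+\epsilon(\nor_\mu\widehat s_\nu+\nor_\nu\widehat s_\mu)+s\,\nor_\mu\nor_\nu$, with $\widehat s_\mu:=\pi_\mu{}^\nu s_\nu$ the tangential part of $s_\mu$ and $\tlt$ the tangential trace-free part of $q$. Since the leading $\delta C$ is linear in $q$, I would treat each summand in turn. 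The part proportional to $g$ gives no contribution, by the conformal invariance of $C^\rho{}_{\sigma\mu\nu}$; a short substitution shows that the remaining trace, mixed and pure-normal parts also drop out of the trace-free combination, so that only $\tlt$ survives. For it, the $\epsilon\,\nor\nor$ piece of the metric inside the Ricci-correction term equals $-\tfrac{1}{n-1}$ times the normal Riemann difference $\delta R_{\rho\sigma\mu\nu}$, so that together they carry the factor $1-\tfrac1{n-1}=\tfrac{n-2}{n-1}$, while the $\pi$ piece of the same correction reproduces exactly the $\tfrac{\epsilon}{n-1}$ terms. Raising the first index (again up to $o(\Om^{m-2})$) and matching antisymmetrisations then yields \eqref{eqexpweyl}.

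The conceptual content is short; the real work, and the main obstacle, is the four-index bookkeeping in the last step. The delicate points are the $(\mathrm{tr}_g q)\,\nor_\sigma\nor_\nu$ term in $\delta R_{\sigma\nu}$, which is easy to overlook and is precisely what makes the naive guess of coefficient $1$ wrong, and the exact recombination that converts it into $\tfrac{n-2}{n-1}$. A clean way to fix all signs and normalisations without expanding the full antisymmetric tensor is to exploit linearity in $q$ together with two exactly solvable instances: $q=g$ (a pure conformal rescaling, for which $\delta C^\rho{}_{\sigma\mu\nu}=0$ identically) and $q=\nor\otimes\nor$ (for which all traces vanish and one checks $\delta C=0$ by direct substitution). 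These, together with the tangential trace-free case, span the symmetric two-tensors and confirm the stated coefficients.
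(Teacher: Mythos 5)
The paper does not actually contain a proof of this lemma: it is stated with the remark that ``the proof can be also found in \cite{marspeondata21}'', so there is no in-text argument to compare against, and your proposal must be judged on its own. It is correct, and it is almost certainly the same linearization argument the reference uses. I checked the key steps: with $h=\Om^m q$ one indeed has $\delta\Gamma = O(\Om^{m-1})$, so the quadratic terms in the Riemann-difference identity (the paper's \eqref{eqdiffriems}) are $O(\Om^{2m-2})=o(\Om^{m-2})$ precisely because $m\geq 2$, and the only surviving contribution is the double derivative of $\Om^m$, giving $\delta {R^\mu}_{\nu\alpha\beta} = -K_m\lr{\nor^\mu \nor_{[\alpha}q^{\perp}_{\beta]\nu} + \dots}$-type terms linear in $q$; your traces $\delta R_{\sigma\nu} = \tfrac{K_m}{2}(\nor_\sigma s_\nu + \nor_\nu s_\sigma - (\mathrm{tr}_g q)\nor_\sigma\nor_\nu - \epsilon q_{\sigma\nu})$ and $\delta R = K_m(s-\epsilon\,\mathrm{tr}_g q)$ are right, and inserting them into the paper's Weyl convention \eqref{eqWeyl} with $\delta^\mu{}_\alpha = \pi^\mu{}_\alpha + \epsilon \nor^\mu \nor_\alpha$, $g_{\nu\alpha}=\pi_{\nu\alpha}+\epsilon\nor_\nu\nor_\alpha$ reproduces exactly the coefficients $-\tfrac{n-2}{n-1}K_m$ (via $1-\tfrac{1}{n-1}$, using $\epsilon^2=1$) and $\tfrac{\epsilon K_m}{n-1}$ of \eqref{eqexpweyl}; the sectors $q\propto g$ (conformal invariance, since $\widehat g = (1-\Om^m)g$), $q=\nor\otimes\nor$ and $q = \nor\odot w$ with $w$ tangential all drop out, the latter two because the leading Riemann difference itself vanishes identically there. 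Two cosmetic blemishes, neither fatal: your closing remark that $\{g,\ \nor\otimes\nor,\ \text{tangential trace-free}\}$ spans the symmetric two-tensors is false --- the mixed sector $\nor\odot\widehat{s}$ is missing from that list, so the ``two exactly solvable instances'' shortcut does not by itself fix all coefficients, and you do need the direct substitution for the mixed part which you assert earlier (and which does work); and for $q=\nor\otimes\nor$ the claim ``all traces vanish'' is inaccurate ($\mathrm{tr}_g q = \epsilon$, $s=1$), though the conclusion $\delta C = o(\Om^{m-2})$ holds anyway. The $C^2$ regularity hypothesis is used correctly (only two derivatives of $h$ enter at leading order, and all discarded terms are $O(\Om^{m-1})$ or $O(\Om^m)$ with bounded coefficients), as is the nowhere-null assumption on $\nabla\Om$, which guarantees $F\neq 0$ and a well-defined $\nor$, $\epsilon$, $\pi$.
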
 

Lemmas \ref{lemmaWeyls} and \ref{lemmaWeyls2} are particularly suited to exploit the Kerr-Schild structure of the metrics \eqref{eqKSmetrics}. Indeed, the fact that the background metric determines the geometry of $\scri$, implies that the information of the remaining data $D$ must be encoded in $\H \til k \otimes \til k$. To extract this information (in Section \ref{secKSsubKdS}) we shall also need a few more results, which we quote next.

 The next lemma states the form of the FG expansion of a de Sitter metric. For a proof see \cite{marspeondata21} and also \cite{GrahamLee91}, \cite{skenderis} (the latter assume $\lambda<0$ but the proof is analogous when $\lambda >0$). 
 
    \begin{lemma}\label{lemmaFGdS}
     For every conformally flat boundary metric $\gamma$, let $\barg$ be the spacetime metric defined by
 \begin{equation}\label{eqpoindS}
 \barg := -\frac{\dif \Om^2}{\lambda} + \barg_{\Om}, \quad\quad \barg_{\Om} := \gamma + \frac{P}{\lambda}{\Om^2} + \frac{1}{4} \frac{P^2}{\lambda^2} \Om^4
\end{equation}
where $P$ is the Schouten tensor of $\gamma$
\begin{equation}\label{eqschoutens}
 P_{ij} :=\frac{1}{n-2}\lr{R_{ij} - \frac{R}{2(n-1)}\gamma_{ij}}, \quad\mbox{and}\quad  (P^2)_{ij} :=  P_{il} \gamma^{kl}  P_{lj}.  
\end{equation}
Then $\tilg_{dS} := \Om^{-2}\barg $ is locally isometric to the de Sitter metric.
    \end{lemma}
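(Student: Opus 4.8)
The plan is to prove the stronger statement that $\tilg_{dS}=\Om^{-2}\barg$ has constant sectional curvature $\lambda$, i.e.
\begin{equation}
\widetilde R_{\alpha\beta\gamma\delta}=\lambda\lr{(\tilg_{dS})_{\alpha\gamma}(\tilg_{dS})_{\beta\delta}-(\tilg_{dS})_{\alpha\delta}(\tilg_{dS})_{\beta\gamma}},
\end{equation}
since by the classification of Lorentzian space forms any metric satisfying this with $\lambda>0$ is locally isometric to de Sitter. Tracing the identity yields $\widetilde R_{\alpha\beta}=n\lambda(\tilg_{dS})_{\alpha\beta}$, so \eqref{EFE} holds automatically. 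In practice I would establish it by splitting the task into (i) $\tilg_{dS}$ is Einstein with the constant of \eqref{EFE}, and (ii) the Weyl tensor of $\tilg_{dS}$ vanishes. In dimension $n+1\ge 4$ (recall $n\ge3$) these two together force constant curvature: vanishing Weyl expresses the full curvature through the spacetime Schouten tensor alone, while the Einstein condition forces that Schouten tensor to be a constant multiple of $\tilg_{dS}$, which assembles into the displayed space-form relation.

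For the computation I would work in the normal form \eqref{eqnormform}, foliating by the leaves $\Sigma_\Om=\{\Om=\text{const}\}$ with timelike unit normal $\nor=\nabla\Om/\sqrt\lambda$, and exploit a factorisation of the induced metric: setting $h:=\gamma+\tfrac{\Om^2}{2\lambda}P$ one checks directly that $\barg_\Om=h\gamma^{-1}h$, i.e. $(\barg_\Om)_{ij}=h_{ik}\gamma^{kl}h_{lj}$. This makes the $\Om$-derivatives of $\barg_\Om$ (hence the second fundamental form $K_{ij}\propto\partial_\Om(\barg_\Om)_{ij}$) linear in $P$ and completely transparent. Feeding these into the Gauss, Codazzi and Ricci equations of the foliation expresses every component of the curvature of the compactified $\barg$ in terms of the intrinsic curvature of $(\Sigma_\Om,\barg_\Om)$, of $P$, and of its first covariant derivative. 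For (ii) I would compute the normal--tangential block of the Weyl tensor through $C_\perp$ as given by \eqref{eqCtnocov} and the remaining blocks from the Gauss--Codazzi data, and then invoke conformal invariance of the Weyl tensor (so that $\barg$ and $\tilg_{dS}$ share it). For (i) I would substitute the polynomial ansatz \eqref{eqpoindS} into the Einstein equations written in normal form and verify it solves them exactly rather than merely to leading order.

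The two decisive inputs are the hypotheses packaged in ``$\gamma$ conformally flat'': its Weyl tensor vanishes, so the Riemann tensor of $\gamma$ is the Kulkarni--Nomizu combination of $\gamma$ and $P$ alone, and its Cotton tensor vanishes, so $\nabla_{[i}P_{j]k}=0$. The vanishing Cotton identity is precisely what makes the mixed (normal--tangential) Codazzi components of $\barg$ vanish; the vanishing Weyl identity together with the factorisation $\barg_\Om=h\gamma^{-1}h$ is what makes the purely tangential block close up into the space-form/Weyl-free expression, while the normal--normal block reduces to the square $K^2$, which the same factorisation renders proportional to $\barg_\Om$. These cancellations are exactly what forces the FG source terms to collapse, so that the polynomial \eqref{eqpoindS} truncates at order $\Om^4$ and solves the equations on the nose, establishing (i) and (ii).

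As a conceptual shortcut and consistency check, note that the statement is local and $\gamma$ is conformally flat, so one may take the flat representative $\gamma=\delta$; then $P=0$, the expansion collapses to $\barg=-\lambda^{-1}\dif\Om^2+\delta$, and $\tilg_{dS}=\Om^{-2}\lr{-\lambda^{-1}\dif\Om^2+\delta}$ is manifestly the Poincar\'e (flat-slicing) form of de Sitter. The real content is thus the invariance of the construction under a change of boundary representative, which one may alternatively secure by observing that a conformally flat $\gamma$ admits a real-analytic representative, whence by Lemma \ref{lemmapropFG} (item 4) the FG series converges; since for conformally flat $\gamma$ the obstruction and all coefficients beyond order four vanish, the convergent sum is exactly the finite $\barg$, which is then the unique Einstein filling and hence de Sitter. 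I expect the main obstacle to be bookkeeping in the direct route: deploying the vanishing Weyl and Cotton identities of $\gamma$ at precisely the right places to see that every term obstructing constant curvature cancels, and in particular verifying that the mixed Codazzi components vanish identically in $\Om$ rather than only to leading order.
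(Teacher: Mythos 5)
Your proposal is correct and coincides in essence with the proof the paper delegates to \cite{marspeondata21} (see also \cite{skenderis} for $\lambda<0$): one verifies by direct substitution in the normal form, using the factorisation $\barg_\Om = h\,\gamma^{-1}h$ with $h=\gamma+\tfrac{\Om^2}{2\lambda}P$ together with the vanishing of the Weyl and Cotton tensors of a conformally flat $\gamma$, that the truncated metric is Einstein with vanishing spacetime Weyl tensor, hence of constant curvature $\lambda$ and therefore locally isometric to de Sitter. One caveat: your ``conceptual shortcut'' is circular as stated, since the assertion that for conformally flat $\gamma$ the obstruction and all FG coefficients beyond order four vanish is not contained in Lemma \ref{lemmapropFG} (item 2 only controls the orders $r<n$ and $r=n+1$, while the coefficients with $r>n+1$ depend on the free datum $g_{(n)}$), but is precisely the content of Lemma \ref{lemmaFGdS} itself; the convergence/uniqueness argument therefore cannot replace the direct verification, which your main route correctly supplies.
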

    
    Lemma \ref{lemmaFGdS} gives the FG expansion of metrics conformally isometric to de Sitter, but from property 2 of Lemma \ref{lemmapropFG}, it also determines the terms up to order $n$ of the FG expansion of any metric admitting a smooth conformally flat $\scri$. Consequently, for any such metric, the terms generated exclusively by the boundary metric $\gamma$ stop at fourth order. This implies that for $n=3$, a conformally flat $\gamma$  generates a term of order $n+1 = 4$ , which is not only independent on the $n$-th (i.e. third) order one by property 1 of Lemma \ref{lemmapropFG}, but actually must take the form $g_{(4)} = P^2/(4\lambda^2)$ by Lemma \ref{lemmaFGdS}. On the other hand, for $n>3$, the $n+1$ order term only depends on $\gamma$ by property 2 of Lemma \ref{lemmapropFG}. Hence, by Lemma \ref{lemmaFGdS} it must be zero.
    That is, if $\tilg$ is an Einstein metric admitting a smooth conformally flat $\scri$, then for every geodesic conformal extension $g = \Om^2 \tilg$, the FG expansion yields the following decomposition
    \begin{equation}\label{eqdecconflat}
   g = \barg + Q,
  \end{equation}
where $\barg$ is of the form \eqref{eqpoindS} (thus conformally isometric to de Sitter) and $Q$ is both $O(\Om^n)$ and has no term of order $\Om^{n+1}$ (when $n = 3$ this term exists in $g$ but it is included in $\barg$).
\begin{remark}\label{remarkTTg4}
  Note that by construction, the leading order term of $Q$ in decomposition \eqref{eqdecconflat} is precisely $\mathring{g}_{(n)}$, the free part of the $n$-th order coefficient. As mentioned above, this equals $g_{(n)}$ if $n$ odd, but not in general if $n$ even. However, in the conformally flat $\scri$ case and $n > 4$, it is also true that $\mathring{g}_{(n)} = g_{(n)}$. The argument is that the boundary metric determines the trace and divergence of $g_{(n)}$ (cf. property 3 of Lemma \ref{lemmapropFG}). If $n >4$ and $g$ is conformal to de Sitter there are no terms of order $n >4$, hence, a conformally flat boundary metric $\gamma$ cannot generate trace and divergence of the $n$-th order term.  On the contrary, for $n=4$ and conformally flat $\scri$ the $n$-th order term has trace and divergence determined by $\gamma$, which in decomposition  \eqref{eqdecconflat} we have included in $\barg$. In this case, we shall distinguish the two terms $g_{(4)} = \barg_{(4)} + \mathring{g}_{(4)}$, with $\barg_{(4)} = P^2/4$ (cf. equation \eqref{eqpoindS}).
\end{remark}

A similar converse statement also holds (see \cite{marspeondata21}). Namely, if $\tilg$ is a conformally extendable metric which for some geodesic conformal extension $g = \Om^2 \tilg$  admits a decomposition of the form 
\begin{equation}\label{eqdecconflat2}
g = \hatg + \widehat Q,
\end{equation}
  with $\hatg$ conformally isometric to de Sitter and $\widehat Q = O(\Om^n)$, then $\scri$ is conformally flat. One must be careful with the fact that $\hatg$ being conformally isometric to de Sitter does not mean that it takes the form \eqref{eqpoindS} for the conformal factor $\Om$ which is geodesic for $g$. Indeed, $\hatg$ does admit an expansion of the form $\eqref{eqpoindS}$ for some conformal factor $\widehat \Om$ geodesic w.r.t. $\hatg$, but in general this conformal factor is different to $\Om$. Thus, decomposition \eqref{eqdecconflat} is a very particular decomposition for metrics admitting a smooth conformally flat $\scri$, while decomposition \eqref{eqdecconflat2} is a sufficient condition for $g$ to admit a conformally flat $\scri$. Obviously, a metric which can be decomposed as in \eqref{eqdecconflat2} can also be decomposed as in \eqref{eqdecconflat}, but these decompositions do not in general coincide.

Both decompositions \eqref{eqdecconflat} and \eqref{eqdecconflat2} will be used in this section, so we summarize the above discussion in the following Proposition:

  \begin{proposition}\label{propconflatdec}
  Let $\tilg$ be an $n\geq 3$ dimensional conformally extendable metric satisfying \eqref{EFE} and let $g = \Om^2 \tilg$ be a geodesic conformal extension. Then 
  \begin{enumerate}
   \item[a)] If $\scri$ is conformally flat, then $g$ admits a decomposition of the form \eqref{eqdecconflat} with $\barg$ of the form \eqref{eqpoindS} and $Q= O(\Om^{n})$ with no terms in $\Om^{n+1}$.
   \item[b)] If $g$ admits a decomposition of the form \eqref{eqdecconflat2}, with $\hatg$ conformally isometric to de Sitter and $\widehat Q = O(\Om^{n})$, then $\scri$ is conformally flat.
  \end{enumerate} 
 \end{proposition}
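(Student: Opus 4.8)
The statement collects the two implications prepared in the discussion above, so the plan is to organise that analysis into one clean argument per direction. For part a), I would work entirely at the level of the Fefferman--Graham expansion of $g=\Om^2\tilg$ in the normal form \eqref{eqnormform}. Since $\scri$ is conformally flat, the boundary metric $\gamma=g_{(0)}$ is conformally flat, so by property 2 of Lemma \ref{lemmapropFG} the obstruction tensor $\obs$ vanishes; consequently the expansion carries no logarithmic terms, is even up to order $n$, and $g$ is smooth up to $\scri$. The first step is to compare $g$ coefficient by coefficient with the de Sitter representative $\barg$ of the form \eqref{eqpoindS} built from the same $\gamma$ (Lemma \ref{lemmaFGdS}). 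By property 2 of Lemma \ref{lemmapropFG}, each coefficient $g_{(r)}$ with $r<n$, together with $g_{(n+1)}$, is generated solely by $\gamma$ through a universal recursion; since $\barg$ is an Einstein metric sharing the boundary metric $\gamma$, its coefficients obey the same recursion and therefore $g_{(r)}=\barg_{(r)}$ at all these orders. The second step is to read off from \eqref{eqpoindS} that $\barg_{(r)}=0$ for $r>4$, so that $Q:=g-\barg$ has vanishing coefficients for every $r<n$ and also at order $n+1$: for $n>3$ because $g_{(n+1)}=\barg_{(n+1)}=0$, and for $n=3$ because the order-$4$ term, which is there the $n+1$ coefficient and equals $\barg_{(4)}$ by Lemma \ref{lemmaFGdS}, has been absorbed into $\barg$ (for $n=4$ one uses Remark \ref{remarkTTg4}, placing the $\gamma$-determined trace and divergence of $g_{(4)}$ into $\barg_{(4)}$). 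This yields $Q=O(\Om^n)$ with no $\Om^{n+1}$ term, i.e. the decomposition \eqref{eqdecconflat}.

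For part b), the plan is to reduce conformal flatness of $\scri$ to a property of the single coefficient $\gamma=g_{(0)}$, namely the vanishing of the Weyl tensor of $\gamma$ for $n\geq 4$ and of its Cotton tensor for $n=3$. The first step is to observe that, since $\widehat Q=O(\Om^n)$ with $n\geq 3$, restricting to $\{\Om=0\}$ yields $\gamma=g|_{\scri}=\hatg|_{\scri}$, and that $\{\Om=0\}$ is likewise a regular conformal boundary of the physical metric $\Om^{-2}\hatg$ (which is smooth and nondegenerate there because $\hatg=g-\widehat Q$), with induced boundary metric $\gamma$. The second step is to note that $\Om^{-2}\hatg$ is conformally isometric to de Sitter, whose conformal infinity is the conformally flat round sphere; hence the conformal class of the boundary metric it induces is conformally flat. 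Since that conformal class is precisely $[\gamma]$, I conclude that $\gamma$ is conformally flat, that is, $\scri$ is conformally flat.

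The main obstacle is the mismatch of conformal factors flagged before the statement: the factor $\Om$ is geodesic for $g$ but not for $\hatg$, so $\hatg$ does not take the explicit form \eqref{eqpoindS} with respect to $\Om$, and one cannot read \eqref{eqpoindS} off directly. The plan is to avoid using \eqref{eqpoindS} for $\hatg$ with respect to $\Om$ and to rely only on the conformal invariance of the induced boundary conformal class: $\hatg$ does admit a \eqref{eqpoindS}-type expansion with respect to its own geodesic factor $\widehat{\Om}$, from which its conformal infinity is conformally flat, while the representative $\gamma$ produced by $\Om$ lies in the same conformal class and hence is conformally flat as well. As a quantitative cross-check one may instead invoke Lemma \ref{lemmaWeyls2} with $m=n\geq 3$, which gives $C[g]=C[\hatg]+O(\Om^{n-2})$; since $\hatg$ is conformally flat its Weyl tensor vanishes, confirming that no obstruction to conformal flatness survives at $\scri$.
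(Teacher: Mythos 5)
Your proof of part a) is correct and follows essentially the same route as the paper: comparison of the FG coefficients of $g$ with those of the reference metric $\barg$ of \eqref{eqpoindS} built from the same boundary metric $\gamma$, using property 2 of Lemma \ref{lemmapropFG} to identify all $\gamma$-generated coefficients (orders $r<n$ and $r=n+1$) and Lemma \ref{lemmaFGdS} to conclude they terminate at order four, with the $n=3$ and $n=4$ peculiarities handled exactly as in the text (absorption of the order-four term into $\barg$, respectively Remark \ref{remarkTTg4}); only note that $\barg$ is conformally Einstein ($\Om^{-2}\barg$ is Einstein), not itself Einstein, though this does not affect your argument. Your part b) also matches the paper's intended argument: the paper defers the proof to \cite{marspeondata21}, but its preceding discussion is precisely your mechanism, namely that $\widehat Q=O(\Om^n)$ forces $\gamma=\hatg|_{\scri}$, that $\hatg$ admits an \eqref{eqpoindS}-type expansion with respect to its own geodesic factor $\widehat\Om$, and that the boundary conformal class induced by a compactification of (a metric conformally isometric to) de Sitter is locally conformally flat, so $[\gamma]$ is conformally flat by conformal invariance of the boundary class.

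One caveat: your closing ``quantitative cross-check'' via Lemma \ref{lemmaWeyls2} is not a valid substitute proof, so it should not be presented as one. For $n=3$ the estimate $C[g]=C[\hatg]+O(\Om^{n-2})=O(\Om)$ carries no information, since the spacetime Weyl tensor vanishes at $\scri$ for every $\Lambda>0$-vacuum metric with smooth conformal infinity; conformal flatness of a three-dimensional $\scri$ is governed by the Cotton tensor of $\gamma$, equivalently by the magnetic part of the rescaled Weyl tensor $\Om^{-1}C$ at $\scri$, and this sits one order deeper, inside the uncontrolled $o(\Om^{m-2})$ remainder of \eqref{eqexpweyl}. Even for $n\geq 4$, the vanishing of $C[g]$ at $\scri$ yields conformal flatness of $\gamma$ only after a Gauss-equation step (using that the second fundamental form $K_\Om$ vanishes at $\scri$, so the intrinsic curvature of $\gamma$ becomes a Kulkarni--Nomizu product and its Weyl part vanishes), which you omit. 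Since the cross-check is explicitly dispensable and your main argument for b) stands on its own, this does not invalidate the proof, but the aside as written would fail precisely in the lowest dimension covered by the statement.
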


  \section{Kerr-Schild-de Sitter $\subset$ Kerr-de Sitter-like class}\label{secKSsubKdS}

    In this section we prove the inclusion of the Kerr-Schild-de Sitter spacetimes in the Kerr-de Sitter-like class. This is done by direct calculation of the data at spacelike $\scri$ of the Kerr-Schild-de Sitter spacetimes and by showing that the  vector field  $\Y$ at $\scri$ that arises in the expression of $D_\Y$ is in fact a CKVF of $\gamma$.
    
    A key ingredient for this result is that all vacuum Kerr-Schild spacetimes are algebraically special in the Petrov classification. Recall that the Petrov classification is an algebraic classification of the Weyl tensor based on the vanishing of the components with certain boost weight, as we summarize next. In the case of arbitrary dimension this classification was developed in \cite{coley04,coley08,milson05} to which we refer for further details.   
    Consider a null frame of vectors $\{\til k,\til l,\til m_{(i)}\}$ for $i = 1,\cdots,n-1$ (whose indices are raised/lowered with $\tilg$), i.e.  a frame satisfying
    \begin{equation}\label{eqnullf}
     \til k^\alpha \til k_\alpha = \til l^\alpha \til l_\alpha = \til k^\alpha \til m_{(i) \alpha} = 0,\quad\quad \til k^\alpha \til l_\alpha = -1,\quad\quad {\til m_{(i)}}^\alpha \til m_{(j) \alpha} = \delta_{ij}.
    \end{equation}
This frame maintains its properties \eqref{eqnullf} under the following set of boost transformations
\begin{equation}
 \til k' = b \til k,\quad\quad \til l' = b^{-1} \til l, \quad\quad \til m_{(i)}' = \til m_{(i)},
\end{equation}
for every real non-zero parameter $b$. Thus, the components of the Weyl tensor $C$ expressed in this frame have ``boost weight'' depending on the number of contractions with $\til k$, $\til l$ and $\til m_{(i)}$. Namely, +1 for each contraction with $\til k$; $-1$ for each one with $\til l$; and $0$ for each one with $\til m_{(i)}$. From the symmetries of the Weyl tensor, the maximum boost weight of a component is $+2$ and the minimum is $-2$. 
The classification proceeds by looking for vectors $\til k$ such that the highest boost weight components vanish. One such $\til k$ (when it exists) is called a {\it Weyl aligned null direction} (WAND) and if the components of boost weight 1 or lower also vanish, $\til k$ is  called a {\it multiple} WAND. A spacetime which admits a multiple WAND is said to be {\it algebraically special}.

It turns out \cite{malek11} that all $\Lambda>0$-vacuum Kerr-Schild spacetimes are algebraically special. Hence, for this section, the following result will be key:

    \begin{proposition}[\hspace{-0.025cm}\cite{malek11}]\label{propksspec}
     Kerr-Schild-de Sitter spacetimes \eqref{eqKSmetrics} are algebraically special, with $\til k$ a multiple WAND satisfying
     \begin{equation}\label{eqmultWAND}
      {\widetilde C}_{\mu \alpha \nu \beta} \widetilde k^\mu \widetilde k^\nu \til m_{(i)}^\alpha \til m_{(j)}^\beta ={\til C}_{\mu \alpha \nu \beta} \til k^\mu \til k^\nu \til l^\alpha \til m_{(i)}^\beta = C_{\mu \alpha \nu \beta} \til k^\mu \til m_{(i)}^\alpha \til m_{(j)}^\nu \til m_{(k)}^\beta = 0, 
     \end{equation}
     for a suitable null frame $\{\til k,\til l,\til m_{(i)}\}$. Moreover, $\til k$ is geodesic, so after rescaling if necessary, it satisfies
      \begin{equation}\label{eqgeodtilk}
      \til k^\alpha \til \nabla_\alpha \til k_\beta = 0.
     \end{equation}
    \end{proposition}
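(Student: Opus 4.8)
The plan is to work directly from the Kerr-Schild decomposition $\tilg = \tilg_{dS} + \til\H\,\til k\otimes\til k$ and to exploit repeatedly that $\til k$ is null for both metrics. First I would record the elementary consequences of nullity: the inverse metric is exactly $\tilg^{\alpha\beta} = \tilg_{dS}^{\alpha\beta} - \til\H\,\til k^\alpha\til k^\beta$, so that $\til k^\alpha$ is the same vector whether raised with $\tilg$ or with $\tilg_{dS}$, and the right-hand side of \eqref{EFE} dies upon contraction with $\til k\otimes\til k$ since $\til k^\beta\til k_\beta=0$. To prove the geodesic property \eqref{eqgeodtilk} I would then contract $\widetilde R_{\alpha\beta}=n\lambda\tilg_{\alpha\beta}$ with $\til k^\alpha\til k^\beta$ and compute the left-hand side from the Kerr-Schild data. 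All contributions collapse into a single term proportional to the squared norm of the acceleration $\ar_\beta := \til k^\mu\til\nabla_\mu\til k_\beta$, carrying a factor that is nonzero wherever $\til\H\neq0$. Since $\ar$ is orthogonal to the null vector $\til k$ it is spacelike, so $|\ar|^2_{\tilg}\geq 0$ with equality only if $\ar\propto\til k$; the vanishing of the left-hand side therefore forces $\til k$ to be geodesic wherever $\til\H\neq 0$, hence everywhere by smoothness, and an affine reparametrisation yields \eqref{eqgeodtilk}. For null Kerr-Schild this is equivalent to $\til k$ being geodesic for $\tilg_{dS}$, a fact I will use below.

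For the algebraic type I would compute the Weyl tensor of $\tilg$ through the difference of connections. Setting $S_{\alpha\beta}:=\til\H\,\til k_\alpha\til k_\beta$, the difference tensor is $\til\Gamma^\lambda_{\mu\nu}-(\Gamma^{dS})^\lambda_{\mu\nu}=\tfrac12\tilg^{\lambda\rho}\big(\nabla^{dS}_\mu S_{\nu\rho}+\nabla^{dS}_\nu S_{\mu\rho}-\nabla^{dS}_\rho S_{\mu\nu}\big)$ (with $\nabla^{dS}$ the Levi-Civita connection of $\tilg_{dS}$), and the Riemann tensor of $\tilg$ equals that of $\tilg_{dS}$ plus terms linear and quadratic in it. Two structural facts trim the highest-boost-weight components: $\tilg_{dS}$ is maximally symmetric, so its Weyl tensor vanishes and $R^{dS}_{\alpha\beta\mu\nu}\propto (\tilg_{dS})_{\alpha[\mu}(\tilg_{dS})_{\nu]\beta}$ is pure trace; and in vacuum the Weyl tensor differs from the Riemann tensor only by Einstein (again pure-trace) corrections. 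Contracting with the frame \eqref{eqnullf}, every pure-trace piece produces a factor $\til k^\alpha\til k_\alpha=0$ or $\til k^\alpha\til m_{(i)\alpha}=0$ and drops out, so the boost weight $+2$ component $\widetilde C_{\mu\alpha\nu\beta}\til k^\mu\til k^\nu\til m_{(i)}^\alpha\til m_{(j)}^\beta$ reduces to the difference-tensor contribution alone. Using $\til k^\rho S_{\mu\rho}=0$, $\til k^\rho\nabla^{dS}_\sigma S_{\mu\rho}=0$ and the (background) geodesic condition $\til k^\mu\nabla^{dS}_\mu\til k_\nu=0$, each surviving term is seen to carry an uncontracted $\til k$ meeting an $\til m_{(i)}$, so the whole expression vanishes. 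This already shows that $\til k$ is a WAND.

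The remaining, and hardest, step is the boost weight $+1$ sector, i.e. the two conditions $\widetilde C_{\mu\alpha\nu\beta}\til k^\mu\til k^\nu\til l^\alpha\til m_{(i)}^\beta=0$ and $C_{\mu\alpha\nu\beta}\til k^\mu\til m_{(i)}^\alpha\til m_{(j)}^\nu\til m_{(k)}^\beta=0$, which upgrade $\til k$ from a WAND to the multiple WAND asserted in \eqref{eqmultWAND}. Here geodesy alone is no longer enough: the relevant contractions bring in either the transverse optical matrix $\varrho_{ij}:=\til m_{(i)}^\mu\til m_{(j)}^\nu\nabla^{dS}_\mu\til k_\nu$ (the expansion, shear and twist of the congruence) or the second null leg $\til l$, for which $\til k^\alpha\til l_\alpha=-1$ fails to annihilate the surviving terms. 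I therefore expect the full vacuum equations \eqref{EFE} to be needed, not just their double $\til k$-contraction: feeding the null, geodesic structure into the $(\Lambda>0)$-Einstein condition constrains $\varrho_{ij}$ together with the $\til k$-derivatives of $\til\H$, and precisely these constraints should force the boost weight $+1$ components of the difference-Riemann tensor to cancel. This is the content of the algebraically-special theorem of \cite{malek11}; as a cross-check I would alternatively propagate the already-established vanishing of the boost weight $+2$ components into the boost weight $+1$ ones by means of the higher-dimensional Bianchi identities in the Einstein case.
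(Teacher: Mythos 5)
You should first note what the paper actually does here: Proposition \ref{propksspec} is not proved in the paper at all --- it is imported verbatim from \cite{malek11}, so your attempt has to be measured against M\'alek--Pravda's proof rather than against any argument in the text. Judged that way, the parts you carry out coincide with the genuine proof. The geodesy step is exactly theirs: the double contraction of \eqref{EFE} with $\til k$, together with the exact inverse $\tilg^{\alpha\beta}=\tilg_{dS}^{\alpha\beta}-\til\H\,\til k^{\alpha}\til k^{\beta}$ and the fact that the de Sitter background is Einstein, collapses to $\til\H\,|\ar|^{2}=0$ with $\ar_{\beta}:=\til k^{\mu}\til\nabla_{\mu}\til k_{\beta}$, and positivity of $|\ar|^{2}$ for a vector orthogonal to the null $\til k$ forces $\ar\propto\til k$, hence geodesy after affine reparametrization. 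One caveat: this gives geodesy only on the closure of $\{\til\H\neq0\}$, not ``everywhere by smoothness''; the paper handles the analogous point in Section \ref{secKSsubKdS} by explicitly restricting to the region where $\til\H\,\til k\otimes\til k$ is not identically zero. Your boost-weight $+2$ computation (background Riemann pure trace, Ricci/scalar corrections pure trace, difference-tensor terms killed by $\til k^{\rho}S_{\mu\rho}=0$, its derivative identity, and background geodesy) is likewise the standard route to the WAND property and is sound.

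The genuine content of the proposition, however, is the boost-weight $+1$ sector, and there your proposal is an outline, not a proof. You correctly identify the mechanism --- the Kerr--Schild ansatz forces the boost-weight $+1$ curvature components into a special algebraic shape built from the optical matrix $\varrho_{ij}$ and the $\til k$-derivative of $\til\H$, so that the frame traces of the Einstein equations, $\widetilde R_{\mu\nu}\til k^{\mu}\til m_{(i)}^{\nu}=0$, suffice to annihilate them --- and this is indeed how \cite{malek11} proceeds, with the non-expanding (Kundt) and expanding congruences treated separately. Deferring this step to \cite{malek11} is consistent with the paper, which does exactly the same, so no mismatch with the text arises. But your proposed fallback is wrong: in dimension greater than four the Bianchi identities do not propagate the vanishing of boost-weight $+2$ Weyl components into the boost-weight $+1$ ones; there is no higher-dimensional Goldberg--Sachs mechanism of that kind, and if such an implication held, every type I Einstein spacetime would automatically be type II. So as a self-contained derivation the multiple-WAND claim remains unproved in your write-up, though, like the paper, you may legitimately source it to \cite{malek11}.
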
 
   
     We shall assume for now on that $\til k$ has been scaled so that \eqref{eqgeodtilk} holds.

 Let $\tilg$ be a Kerr-Schild-de Sitter spacetime and consider a geodesic conformal extension $g= \Om^2 \tilg$. Then, the conformal metric and its associated contravariant metric $g^\sharp$ are
 \begin{equation}\label{eqgeodKS}
 g_{\alpha \beta} = \Om^2 \tilg = \widehat g_{\alpha \beta} + \H k_\alpha ~k_\beta,\quad\quad g^{\alpha \beta} = \Om^{-2} \tilg^{\alpha \beta} = \widehat g^{\alpha \beta} - \H k^\alpha k^\beta,
\end{equation}
where $\widehat g := \Om^2 \tilg_{dS}$, $\H := \Omega^2 \widetilde\H$ and $k = \widetilde k$ is a field of one-forms whose metrically associated vector field $k^\alpha$ by $g$ has components $k^\alpha = g^{\alpha \beta} k_{\beta}= \Om^{-2} \tilg^{\alpha \beta} \widetilde k_\beta = \Om^{-2} \widetilde k^\alpha$, where $\widetilde k^\alpha$ is the vector field associated to $\widetilde k$ by $\tilg$.
Recall that the difference of connections $\nabla - \widetilde \nabla = Q$ is, for conformally related metrics $g = \Om^2 \tilg$, the tensor
\begin{equation}\label{eqQtens}
  {Q^\mu}_{\alpha \beta} = \frac{1}{\Om} \lr{T_\alpha {\delta^\mu}_\beta  
  + T_\beta {\delta^\mu}_\alpha - T^\mu g_{\alpha \beta} },\quad\quad T_\mu 
  := \nabla_\mu \Om, \quad T^\mu:=g^{\mu \nu} T_\nu.
\end{equation}
We recall the well-known property that $k_\alpha$ is geodesic w.r.t. $g$ if and only if $\widetilde k_\alpha$ is geodesic w.r.t. $\tilg$. Indeed
\begin{align}\label{eqkgeod}
 k^\alpha \nabla_\alpha k _\beta & = k^\alpha \widetilde \nabla_\alpha k _\beta - {Q^\mu}_{\alpha \beta} k^\alpha k_\mu = k^\alpha \widetilde \nabla_\alpha k _\beta 
 = \Om^{-2} \widetilde{k}^\alpha \widetilde \nabla_\alpha \widetilde k_\beta. 
\end{align}
Thus combining equation \eqref{eqkgeod} with Proposition \ref{propksspec}, $k$ must be geodesic w.r.t. $g$. In addition,
the conformal invariance of the Weyl tensor implies that $k$ is a multiple WAND for the Weyl tensor of $\tilg$ if and only if it is a WAND for the Weyl tensor of $g$. That is, by Proposition \ref{propksspec} and the above discussion, $k$ (as a one-form) is also a geodesic multiple WAND for $g$. In what follows, it will be useful to decompose $k$ in tangent and normal components to a timelike unit vector $u$. Specifically, given one such $u$, we write
\begin{equation}\label{eqprojkuy}
 k_\alpha  = s(\nor_\alpha + \y_\alpha), 
\end{equation}
which defines both the scalar $s$ and the spacelike unit vector $y$ perpendicular to $u$. Except in the trivial case that the Kerr-Schild metric is identical to the backgroud metric, it is clear that $\H k \otimes k$ cannot be identically zero. We let $U$ be a domain of the physical spacetime $\til \man$  where this quantity is not zero. We are only interested in the case where $\overline{U}$ intersects $\scri$ as otherwise  the free-data  $\mathring{g}_n$ is identically zero, and the Kerr-Schild metric would be identical to the background metric in some neighbourhood of $\scri$. Since $k$ is geodesic, affinely parametrized and nowhere zero in $(U,g)$, it must extend smoothly and nowhere zero to $\scri \cap \partial U$. This is because $g$-null geodesics starting sufficiently close to $\scri$ with non-zero tangent reach $\scri$ (smoothly). Since the tangent vector to the geodesic cannot vanish anywhere along the curve, we conclude that the covector $k$ is nowhere zero in $\scri \cap \partial U$. From now on we shall work on the manifold with boundary $\overline{U}$ so that its infinity  (still called $\scri$)  is such that $k$ is nowhere vanishing there.

In the next lemma, we summarize the important properties of $k$ w.r.t. to the conformal metric $g$
\begin{lemma}
 Let $\tilg$ be a Kerr-Schild-de Sitter metric and let $g = \Om^2 \tilg$ be a conformal extension. Assume that $\tilg$ is not identically equal to the background metric in some neighbourhood of $\scri$. Then, after restricting $\til\man$ if necessary, $k$ extends smoothly and nowhere zero to $\scri$ and it is both geodesic affinely parametrized w.r.t. to $g$ 
 \begin{equation} 
       k^\alpha  \nabla_\alpha  k_\beta = 0
     \end{equation}
    and a multiple WAND with
     \begin{equation}\label{eqmultWAND}
      { C}_{\mu \alpha \nu \beta}  k^\mu  k^\nu  m_{(i)}^\alpha  m_{(j)}^\beta ={ C}_{\mu \alpha \nu \beta}  k^\mu  k^\nu  l^\alpha  m_{(i)}^\beta = C_{\mu \alpha \nu \beta}  k^\mu  m_{i}^\alpha  m_{(j)}^\nu  m_{(k)}^\beta = 0, 
     \end{equation}
     for a suitable null frame $\{ k, l, m_{(i)}\}$ for $g$.
\end{lemma}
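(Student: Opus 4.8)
The plan is to assemble the three assertions from Proposition~\ref{propksspec}, which supplies the analogous statements for the physical metric $\tilg$, by transferring each property across the conformal rescaling $g = \Om^2\tilg$ and using that $g$ is smooth at a conformally flat $\scri$. I would begin with the smooth, nowhere-vanishing extension of $k$ to $\scri$. By Definition~\ref{defKSdS} the boundary geometry is conformally flat, so the obstruction tensor vanishes and the FG expansion carries no logarithmic terms; hence $g$ extends smoothly up to $\scri$, which is moreover spacelike. On the domain $U$ where $\H\,k\otimes k$ does not vanish, $k^\alpha$ generates affinely parametrized $g$-null geodesics (this is the second claim, proved next). Because $g$ is smooth and $\scri$ spacelike, $g$-null geodesics issuing from interior points sufficiently close to $\scri$ reach the boundary smoothly, so the geodesic flow extends $k^\alpha$ smoothly to $\scri \cap \partial U$; uniqueness for the geodesic equation forces its tangent to be nowhere zero along each curve, whence $k$ is nowhere vanishing there. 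Restricting $\til\man$ to $\overline U$ then yields the asserted extension.

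For the geodesic property I would use the relation between the two Levi-Civita connections. With $\nabla - \widetilde\nabla = Q$ and $Q$ given by \eqref{eqQtens}, the contraction ${Q^\mu}_{\alpha\beta}\,k^\alpha k_\mu$ vanishes identically because $k$ is null: the three terms of $Q$ collapse to $\Om^{-1}[(T\cdot k) - (T\cdot k)]\,k_\beta = 0$, using $k^\alpha k_\alpha = 0$. Therefore \eqref{eqkgeod} gives $k^\alpha\nabla_\alpha k_\beta = \Om^{-2}\,\widetilde k^\alpha\widetilde\nabla_\alpha\widetilde k_\beta$, which vanishes by the affinely parametrized geodesic property \eqref{eqgeodtilk} of Proposition~\ref{propksspec}.

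For the multiple WAND property I would appeal to conformal invariance of the Weyl tensor. The mixed tensor ${C^\mu}_{\nu\alpha\beta}$ is unchanged under $g = \Om^2\tilg$, and $k$ defines the same null direction for both metrics since $k^\alpha = \Om^{-2}\,\widetilde k^\alpha$; consequently the boost-weight decomposition of the Weyl tensor relative to $[k]$ — and hence the algebraic alignment type — is identical for $g$ and for $\tilg$. Since Proposition~\ref{propksspec} asserts that $\til k$ is a multiple WAND for $\tilg$, the same null direction is a multiple WAND for $g$, so the displayed conditions hold for a suitable $g$-null frame $\{k,l,m_{(i)}\}$ obtained by completing $k$ to an orthonormal-null frame for $g$.

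I expect the extension step to be the main obstacle. Transferring the geodesic and alignment properties is purely algebraic and conformally covariant, whereas the smooth, nowhere-zero extension of $k$ to $\scri$ genuinely uses smoothness of $g$ at the boundary (guaranteed by conformal flatness of $\scri$) together with the geometric fact that affinely parametrized null geodesics reach spacelike $\scri$, and a careful argument that their tangent cannot degenerate in the limit. The remaining claims then reduce to the elementary computations indicated above.
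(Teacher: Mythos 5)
Your proposal is correct and follows essentially the same route as the paper: the affinely parametrized geodesic property via the contraction ${Q^\mu}_{\alpha\beta}k^\alpha k_\mu = 0$ for null $k$ (the paper's computation \eqref{eqkgeod} combined with Proposition \ref{propksspec}), the multiple WAND property via conformal invariance of the Weyl tensor, and the smooth, nowhere-zero extension of $k$ by noting that affinely parametrized $g$-null geodesics near the smooth spacelike $\scri$ reach it smoothly with tangent that cannot degenerate by ODE uniqueness. The only difference is the cosmetic reordering (you present the extension first, deferring the geodesic property), which does not affect the logic.
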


The Kerr-Schild ansatz gives a decomposition for the metrics \eqref{eqgeodKS} similar to the one in \eqref{eqdecconflat2}, where, however, $\widehat Q = \H k  \otimes k$ is in principle not necessarily $O(\Om^n)$. We now prove that Definition \ref{defKSdS} forces that necessarily $\H = O(\Om^n)$. In the following, we use the same name for a geometric object and its restriction to $\scri$ (we let the context clarify the meaning). This applies in particular to the vector $y$.

\begin{lemma}\label{lemmadecayH}
 Let $\tilg$ be a Kerr-Schild de Sitter spacetime and consider a geodesic conformal extension $g = \Om^2 \tilg$ as in \eqref{eqgeodKS}, inducing a (conformally flat) metric $\gamma$ at $\scri$. Then, $\H = O(\Om^n)$ and the electric part of the rescaled Weyl tensor at $\scri$ is 
 \begin{equation}\label{eqTTDA}
 D_{\alpha\beta} =  \Am \lr{\y_\alpha \y_\beta - \frac{1}{n} \gamma_{\alpha\beta}},
\end{equation}
where the function $\Am$ at $\scri$ is given by $(\Om^{-n}\H s^2 )|_\scri= -\frac{2 \Am  }{\lambda n (n-2)}$.
\end{lemma}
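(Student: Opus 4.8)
The plan is to extract the electric part of the rescaled Weyl tensor directly from the Kerr--Schild split $g = \widehat g + \H\, k\otimes k$ and to read off both the decay of $\H$ and the stated form of $D$ from it. Since $\widehat g = \Om^2\tilg_{dS}$ is conformally isometric to de Sitter, its Weyl tensor vanishes, $\widehat C = 0$, so the whole Weyl tensor of $g$ is generated by the perturbation $\H\, k\otimes k$. This is exactly the situation Lemma~\ref{lemmaWeyls2} is built for, with $\widehat g$ as the background and $\H\, k\otimes k$ playing the role of $\Om^m q$. First I would fix the decomposition \eqref{eqprojkuy}, $k_\alpha = \s(\nor_\alpha + \y_\alpha)$, with respect to the unit normal $\nor$ of the leaves $\Sigma_\Om$. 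Since $\y\perp\nor$ is the unit tangential part, the projection of $k$ orthogonal to $\nor$ is $\s\,\y$, so the projected tensor $t_{\alpha\beta}=(\H k\otimes k)_{\mu\rho}\pi^\mu_\alpha\pi^\rho_\beta$ has trace $\H\s^2$ and traceless part $\tlt_{\alpha\beta}=\H\s^2\big(\y_\alpha\y_\beta-\tfrac1n\pi_{\alpha\beta}\big)$.

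Next I would contract \eqref{eqexpweyl} (with $\widehat C=0$) with $\nor_\mu\nor^\alpha$ to isolate the electric part. Using $\nor^\mu\nor_\mu=\epsilon=-1$, the geodesic normalisation $\nabla\Om=F\nor$ with $F^2=\lambda$ (Lemma~\ref{lemmanormgeod}), and the fact that $\y$ and $\pi$ annihilate $\nor$, only the first tensorial structure in \eqref{eqexpweyl} survives and gives $(C_\perp)_{\alpha\beta} = -\tfrac12\tfrac{n-2}{n-1}K_m\,\tlt_{\alpha\beta}+o(\Om^{m-2})$, where $m$ is the leading order of $\H$ and $K_m=m(m-1)\Om^{m-2}F^2$. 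Substituting $\tlt$ and $F^2=\lambda$ yields $(C_\perp)_{\alpha\beta}=-\tfrac12 n(n-2)\lambda\,\Om^{-2}\H\s^2\big(\y_\alpha\y_\beta-\tfrac1n\pi_{\alpha\beta}\big)+o(\Om^{m-2})$.

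The decay of $\H$ is then forced by the behaviour of the rescaled tensor $c_\perp=\Om^{2-n}C_\perp$. Because $\scri$ is conformally flat, Theorem~\ref{theodatascri} guarantees that $c_\perp$ extends finitely to $\scri$ and equals $D$ there; the displayed formula shows $c_\perp\sim\Om^{m-n}$, so finiteness forces $m\geq n$ and, in the nontrivial case $D\neq0$ (otherwise the free data vanishes and $\tilg$ agrees with its de Sitter background near $\scri$, the case excluded earlier), exactly $m=n$, i.e. $\H=O(\Om^n)$. Passing to the limit $\Om\to0$, where $\pi\to\gamma$, gives $D_{\alpha\beta}=-\tfrac12 n(n-2)\lambda\,(\Om^{-n}\H\s^2)|_\scri\big(\y_\alpha\y_\beta-\tfrac1n\gamma_{\alpha\beta}\big)$, which is \eqref{eqTTDA} with $\Am=-\tfrac12\lambda n(n-2)\,(\Om^{-n}\H\s^2)|_\scri$, as claimed.

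The main obstacle is the very first rung of this ladder: justifying $m\geq2$, i.e. that $\H$ carries no spurious term linear in $\Om$, so that Lemma~\ref{lemmaWeyls2} may legitimately be invoked (note $K_1=0$, so a linear term is invisible to the leading-order formula). The delicate point is that $\widehat g=\Om^2\tilg_{dS}$, though conformal to de Sitter, is \emph{not} in Fefferman--Graham normal form for the geodesic factor $\Om$: indeed $\widehat g^{\mu\nu}\nabla_\mu\Om\nabla_\nu\Om=-\lambda(1-\H\s^2)$ equals $-\lambda$ only on $\scri$, so one cannot match $\widehat g$ term by term against the canonical de Sitter expansion \eqref{eqpoindS}. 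I would resolve this by bringing in the canonical conformally flat decomposition $g=\barg+Q$ with $Q=O(\Om^n)$ of Proposition~\ref{propconflatdec}: the Einstein metric $g$ in normal form has an even expansion up to order $n$, so its tangential block $g_{ij}$ has no linear term, and by the exact formula of Lemma~\ref{lemmaWeyls} a linear term in $g_{ij}$ would force a divergent $\Om^{-1}$ contribution to $C_\perp$, incompatible with finiteness of $c_\perp$. The remaining technical heart is then to show that the background block $\widehat g_{ij}$ likewise carries no odd leading correction --- equivalently, that the $g$-geodesic factor $\Om$ agrees to first order with a $\tilg_{dS}$-geodesic factor --- from which the absence of a linear term in $\H$, and hence $m\geq2$, follows.
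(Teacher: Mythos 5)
Everything you actually execute coincides with the paper's own proof. The paper applies Lemma \ref{lemmaWeyls2} to the same split $g = \hatg + \H\, k \otimes k$ with $\widehat C = 0$, uses the decomposition \eqref{eqprojkuy} so that the projected traceless part is $\tlt \propto \H \s^2 \lr{\y\otimes\y - \tfrac{1}{n}\pi}$, arrives at $C_\perp = \Am \lr{\y\otimes\y - \tfrac{|\y|^2}{n}g_\Om}\Om^{m-2} + o(\Om^{m-2})$, and closes exactly as you do: by Theorem \ref{theodatascri} the rescaled tensor $\Om^{2-n}C_\perp$ must extend finitely to $\scri$ and equal (a multiple of) $\mathring g_{(n)}$, so $m \geq n$, while $m > n$ would give $\mathring g_{(n)} = 0$ and hence, by uniqueness of the FG expansion, $\tilg$ equal to its de Sitter background near $\scri$, the excluded trivial case; so $m = n$ and \eqref{eqTTDA} follows. (A cosmetic point: for general $m$ your contraction of \eqref{eqexpweyl} produces the factor $\tfrac{m(m-1)(n-2)}{2(n-1)}\lambda$ rather than $\tfrac{n(n-2)}{2}\lambda$; since only the power of $\Om$ matters until $m=n$ is fixed, and the two coefficients agree at $m=n$, your final identification of $\Am$ is unaffected.)

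The genuine gap is precisely the step you flag as ``the remaining technical heart'' and then leave open: proving that the tangential block of $\hatg$, written in the Gaussian coordinates of $g$, carries no term linear in $\Om$, from which $m \geq 2$ follows and the use of Lemma \ref{lemmaWeyls2} (whose hypothesis is $m \geq 2$) becomes legitimate. This is not a routine verification; it is where the paper's proof spends most of its effort. The paper closes it as follows: by Lemma \ref{lemmanormgeod} there exists a conformal factor $\widehat\Om$ geodesic w.r.t.\ $\hatg$ inducing the \emph{same} boundary metric $\gamma$, hence $\Om = \widehat\Om\,\omega$ with $\omega|_\scri = 1$; writing the respective Gaussian tangential coordinates as $\widehat x^i = x^i + \Om z^i$, the normals satisfy $\partial_\Om|_{\Om=0} = \partial_{\widehat\Om}|_{\Om=0}$ (because $\omega|_\scri=1$), which forces $z^i|_\scri = 0$, i.e.\ $\widehat x^i = x^i + O(\Om^2)$. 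Feeding this into the exact expansion \eqref{eqlem1}, the block $\hatg_{\widehat\Om}$ contributes no tangent-tangent terms at order $\Om$, and neither does $\dif\widehat\Om^2 = \omega^2\dif\Om^2 + 2\Om\,\dif\Om\,\dif\omega + \Om^2\dif\omega^2$, whose only tangent-tangent piece is $\Om^2\dif\omega^2 = O(\Om^2)$; since the FG expansion of $g$ is even up to order $n$ (property 2 of Lemma \ref{lemmapropFG}), the order-$\Om$ tangent-tangent part of $\H\, k\otimes k$ must vanish, and $k$ being nowhere zero at $\scri$ then gives $m \geq 2$. Note that your slogan --- ``the $g$-geodesic factor agrees to first order with a $\tilg_{dS}$-geodesic factor'' --- captures only half of this: matching the conformal factors to first order is not by itself sufficient, since one must also control the $O(\Om^2)$ matching of the tangential Gaussian coordinates, and that coordinate comparison is the actual computation. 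As written, your invocation of Lemma \ref{lemmaWeyls2} is therefore unjustified at its foundation, even though everything you build on top of it is correct and identical in strategy to the paper.
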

\begin{proof}
 By definition \ref{defKSdS}, $\H = \Om^2 \til\H$ must be $O(\Om^m)$ with $m \geq 1$. Assume first that $m = 1$.  By property 2 of Lemma \ref{lemmapropFG} the FG expansion of $g = -\dif \Om^2/\lambda + g_{\Om} $ is even up to order $n$, where $g_\Om$ is given by \eqref{eqFGexpoddsec} if $n$ odd or  \eqref{eqFGexpevsec} if $n$ even (with vanishing logarithmic terms because $\gamma$ is conformally flat).
Then, using the Kerr-Schild form $g  = \hatg + \H k \otimes k$ and expanding $ \hatg $ and $ \H k \otimes k$ in $\Om$, the non-zero terms of order $\Om$ of the tangent-tangent (i.e. tangent to $\Sigma_\Om = \{\Omega = const. \}$) components of $\hatg$ must  cancel out those of $\H k \otimes k$. To expand $\hatg$ in powers of $\Om$, consider a geodesic conformal factor\footnote{Notice that $\hatg = \widehat\Om^2 \tilg'_{dS}$, where $\tilg'_{dS}$ is locally de Sitter, isometric to the original one $\tilg_{dS}$, but not equal.} $\widehat\Om$ for $\hatg$, which induces the same boundary metric $\gamma$ at $\scri = \{ \Om = 0\} =  \{ \widehat \Om = 0\}$. The existence of such conformal factor follows by Lemma \ref{lemmanormgeod} and it must satisfy $\Om  = \widehat \Om \omega$, with $\restr{\omega}{\scri} = 1$.
By Lemma \ref{lemmaFGdS}, the FG expansion of $ \hatg$,  in Gaussian coordinates $\{\widehat \Om, \widehat x^i \}$ adapted to the foliation $\Sigma_{\widehat\Om} = \{ \widehat \Om = const. \}$, is given by \eqref{eqpoindS} 
  \begin{equation}\label{eqlem1}
 \hatg = -\frac{\dif \widehat \Om^2}{\lambda} + \hatg_{\widehat\Om}, \quad\quad \hatg_{\widehat\Om} = \gamma +\frac{P}{\lambda} {\widehat \Om^2} + \frac{1}{4}\frac{P^2}{\lambda^2}  \widehat \Om^4
\end{equation}
where $P$ is the Schouten tensor of $\gamma$. In order to compare with the expansion of $g_\Om$, one has to relate the conformal factors, but also the tangent directions. First, as $g|_\scri = \widehat g|_\scri = \gamma$ we can choose tangent coordinates satisfying $\widehat x^i = x^i + \Om z^i $, for a collection of functions $\{ z^i \}$ (still depending on $\Om$). We use now, as shown  before, that the vectors $\partial_\Om$ and $\partial_{\widehat \Om}$ are proportional at $\scri$
\begin{align}
 \partial_\Om|_{\Om =0} & = \lr{\partial_\Om \widehat x^i \partial_{\widehat x^i} + \partial_\Om \widehat \Om} \restr{\partial_{\widehat \Om}}{\Om = 0} = \lr{z^i + \Om \partial_\Om z^i } \restr{\partial_{\widehat x^j}}{\Om = 0} + \lr{ \omega + \Om \partial_{\widehat \Om} \omega }\restr{\partial_{\widehat \Om}}{\Om = 0}  =  \restr{\partial_{\widehat \Om}}{\Om = 0}.
\end{align}
Thus $z^i|_{\widehat \Om = 0} = 0$ so $z^i = O(\Om)$ and $\widehat x^i = x^i + O(\Om^2)$. This implies that when $\gamma$ (which recall is extended off $\scri$ as independent of $\widehat{\Om}$ in the Gaussian coordinates $\{\widehat\Omega, \widehat x^i \}$) is written in coordinates $x^i$, it does not add tangent-tangent terms ($\dif x^i \dif x^j$) of order $\Om$ and obviously neither they do the rest of terms in $\widehat g_{\widehat\Om}$ in \eqref{eqlem1}, because $\widehat \Om = \Om \omega$. On the other hand, $\dif \widehat \Om^2$ is 
\begin{equation}
 \dif \widehat \Om^2 = (\omega \dif \Om + \Om \dif \omega)^2 = \omega^2 \dif \Om + \Om^2 \dif \omega^2 + 2 \Om \dif \Om \dif \omega
\end{equation}
and the only tangent-tangent terms can only appear in $\Om^2 \dif \omega^2$, thus starting (at least) at order $\Om^2$.  Therefore the expansion of $\hatg $ in the conformal factor $\Om$ does not have first order terms, so neither it does $\H k \otimes k$ because the FG expansion of $g$ does not have such a term. This implies that $m \geq 2$.

Let us expand $\H$ as 
\begin{equation}
 \H =  -\frac{2 \Am  }{\lambda n (n-2)} (s^{-2}|_\scri) \Om^m + o(\Om^{m}),
\end{equation}
and note that $s$ that does not vanish anywhere (because $k$ has this property). 
By Lemma \ref{lemmaWeyls2}, the electric part of the Weyl tensor is straightforwardly calculated
 \begin{equation}\label{eqweylh}
  C_\perp = \Am \lr{y \otimes y - \frac{|y|^2}{n} g_\Om} \Om^{m-2} + o(\Om^{m-2})
 \end{equation}
where we have used that $\hatg$ is conformally flat, so that $\widehat C = 0$, and $\nabla \Om$ is geodesic, thus $F^2 = \lambda$, and $\epsilon = -1$ (cf. Lemma \ref{lemmanormgeod}). Now applying Theorem \ref{theodatascri}, scaling \eqref{eqweylh} by $\Om^{2-n}$ and evaluating at $\Om = 0$ must give the TT part of the $n$-th order coefficient of the FG expansion, so $m \geq n$. But $m > n$ gives $\mathring{g}_{(n)} = 0$, which by uniqueness of the FG expansion would imply that $\tilg$ is equal to its background metric, against hypothesis. Thus $m=n$ and the lemma follows after scaling \eqref{eqweylh} by $\Om^{2-n}$ and evaluating at $\scri$.
\end{proof}

In conclusion, the initial data for Kerr-Schild-de Sitter spacetimes are a conformally flat class of metrics $[\gamma]$ and a TT tensor of the form \eqref{eqTTDA}. The function $\Am$ cannot be identically zero at $\scri$ (as otherwise $\tilg$ would equal its background metric in a neighbourhood of $\scri$). After restricting $\man$ further we may therefore assume that $\Am$ is nowhere zero at $\scri$ and we may 
reparametrize it as $\Am =: \kappa/f^n$, with $f$ everywhere positive and $\kappa \in \mathbb{R}$ is a constant that carries the sign of $\Am$. For later convenience we do not normalize $\kappa$ to be $\pm 1$, which means that we keep an arbitrary (positive) scaling freedom in $f$. Then, the TT tensor $D$ of Lemma \ref{lemmadecayH} can be written as
 \begin{equation}\label{eqTTDY2}
 D = \kappa D_\Y,\quad \quad (D_\Y)_{\alpha\beta} :=  \frac{1}{f^{n+2}} \lr{\Y_\alpha \Y_\beta - \frac{f^2}{n} \gamma_{\alpha\beta}},
\end{equation}
with $\Y_\alpha := f \y_\alpha$.  Our next aim is to prove that $\Y$ it must be a CKVF of $\scri$. 
The strategy is to rewrite the conditions of being CKVF in terms of equations for $f$ and $y$ and then show that they are satisfied as a consequence of $k$ being a WAND. 

\bigskip

Recall the following standard decomposition of the covariant derivative of a unit vector field $\y_\alpha$ in terms parallel and orthogonal to itself
\begin{equation}\label{eqdecnaby}
 \nabscr_\alpha y_\beta = y_\alpha a_\beta + \Pi_{\alpha \beta} + \frac{\h_{\alpha \beta}}{n-1} \L +  \w_{\alpha \beta},\quad\quad L:= \nabscr_\alpha \y^\alpha
\end{equation}
where $\nabscr$ the Levi-civita connection of $\gamma$, $a_\beta$ is a covector, $h_{\alpha\beta} = \gamma_{\alpha\beta} - y_\alpha y_\beta$ (the ``projector'' onto $(\spn{\y})^\perp$) and $\Pi_{\alpha \beta}$ symmetric traceless and $\w_{\alpha \beta}$ skew-symmetric, i.e.
\begin{equation}
 \Pi_{(\alpha \beta)} = \Pi_{\alpha \beta},\quad\quad {\Pi^\alpha}_\alpha = 0, \quad\quad \w_{[\alpha \beta]} = \w_{\alpha \beta},
\end{equation}
satisfying
\begin{equation}
 \y^\alpha \Pi_{\alpha \beta} = \y^\alpha \h_{\alpha \beta} = \y^\alpha \w_{\alpha \beta} = 0,\quad\quad y^\alpha a_\alpha = 0.
\end{equation}
In what follows, it will be useful to express the metric $\gamma$ as
\begin{equation}\label{eqdecgamma}
 \gamma_{\alpha \beta} = y_\alpha y_\beta + \h_{\alpha \beta}.
\end{equation}

\begin{lemma}\label{lemmackvf}
 Let $\Y^\alpha = f y^\alpha$, with $\y^\alpha$ unit, be a vector field of a Riemannian $n$-manifold $(\Sigma, \gamma)$ and consider the decomposition of $\nabscr_\alpha y_\beta$ as in \eqref{eqdecnaby}. Then $\Y$ is a CKVF of $\gamma$ if and only if the following equations are satisfied
 \begin{equation}\label{eqckvf}
  \nabscr_\alpha f = \frac{f L}{n-1} y_\alpha - f a_\alpha,\quad\quad \Pi_{\alpha \beta} = 0.
 \end{equation}
\end{lemma}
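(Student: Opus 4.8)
The plan is to rephrase the conformal Killing condition as the vanishing of the symmetric trace-free part of $\nabscr_\alpha \Y_\beta$, and then to project this single tensor identity onto the irreducible pieces adapted to the orthogonal splitting \eqref{eqdecgamma}. Explicitly, $\Y$ is a CKVF of $\gamma$ precisely when
\begin{equation}
\nabscr_{(\alpha} \Y_{\beta)} - \frac{1}{n}(\nabscr_\mu \Y^\mu)\gamma_{\alpha\beta} = 0 .
\end{equation}
Writing $\Y_\beta = f \y_\beta$ and inserting the decomposition \eqref{eqdecnaby} gives
\begin{equation}
\nabscr_\alpha \Y_\beta = (\nabscr_\alpha f)\,\y_\beta + f\left(\y_\alpha a_\beta + \Pi_{\alpha\beta} + \frac{\h_{\alpha\beta}}{n-1}\L + \w_{\alpha\beta}\right),
\end{equation}
so that the skew part $\w_{\alpha\beta}$ drops out of the symmetric part, while the divergence reads $\nabscr_\mu \Y^\mu = \y^\mu \nabscr_\mu f + f\L$.

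Next I would split the gradient of $f$ into its component along $\y$ and its orthogonal component, $\nabscr_\alpha f = (\y^\mu \nabscr_\mu f)\,\y_\alpha + \h_\alpha{}^\mu \nabscr_\mu f$, substitute this together with $\gamma_{\alpha\beta} = \y_\alpha \y_\beta + \h_{\alpha\beta}$ into the trace-free equation, and sort every term into four linearly independent tensor channels: the $\y_\alpha\y_\beta$ channel, the mixed channel $\y_{(\alpha}(\,\cdot\,)_{\beta)}$ with an orthogonal covector, the pure-trace channel proportional to $\h_{\alpha\beta}$, and the symmetric-traceless-orthogonal channel carried by $\Pi_{\alpha\beta}$. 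Since these four types are mutually independent, the conformal Killing equation is equivalent to the separate vanishing of each channel.

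Carrying out the projection, the $\y\y$ channel gives $\y^\mu \nabscr_\mu f = f\L/(n-1)$; the mixed channel gives $\h_\alpha{}^\mu \nabscr_\mu f = -f a_\alpha$; the symmetric-traceless channel gives $f\Pi_{\alpha\beta}=0$, hence $\Pi_{\alpha\beta}=0$ because $f$ is nowhere zero; and the pure-trace channel reproduces $f\L/(n-1) = \tfrac{1}{n}(\y^\mu\nabscr_\mu f + f\L)$, which is automatically implied by the $\y\y$ identity and therefore imposes nothing new. Reassembling the parallel and orthogonal parts of the first two channels yields $\nabscr_\alpha f = \frac{f\L}{n-1}\y_\alpha - f a_\alpha$, which together with $\Pi_{\alpha\beta}=0$ is exactly \eqref{eqckvf}; the converse follows by reading the same computation backwards.

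I do not expect a genuine obstacle: the entire content is the bookkeeping of irreducible components. The only two points deserving care are that the pure-trace channel is redundant once the $\y\y$ channel holds, so one must not misread it as an extra constraint, and that the positivity (hence non-vanishing) of $f$ is what allows $\Pi_{\alpha\beta}=0$ to be extracted from $f\Pi_{\alpha\beta}=0$.
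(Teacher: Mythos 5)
Your proposal is correct and follows essentially the same route as the paper: the paper's proof also inserts the decomposition \eqref{eqdecnaby} into the conformal Killing equation and extracts the same conditions by successive contractions with $\y^\alpha$, $\y^\beta$ and projections with ${\h^\alpha}_\mu{\h^\beta}_\nu$, which is just your channel decomposition performed sequentially (including the observation that the trace channel is redundant and that $f\neq 0$ is needed to conclude $\Pi_{\alpha\beta}=0$).
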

\begin{proof}
 We rewrite the conformal Killing equation
 \begin{equation}\label{eqCKeq}
  \nabscr_\alpha \Y_\beta + \nabscr_\beta \Y_\alpha =
   \frac{2}{n} \nabscr_\mu \Y^\mu \gamma_{\alpha \beta}
 \end{equation}
 in terms of the kinematical quantities above. Since
\begin{align}
 \nabscr_\alpha \Y_\beta + \nabscr_\alpha \Y_\beta & =   (\nabscr_\alpha f) \y_\beta + (\nabscr_\beta f) \y_\alpha + f (\nabscr_\alpha \y_\beta + \nabscr_\beta \y_\alpha) \\
 & = (\nabscr_\alpha f) \y_\beta + (\nabscr_\beta f)\y_\alpha + f \lr{\y_\alpha a_\beta + \y_\beta a_\alpha + 2 \Pi_{\alpha \beta} + \frac{2h_{\alpha \beta}}{n-1}L }
\end{align}
and
\begin{equation}
  \frac{2}{n} \nabscr_\mu \Y^\mu \gamma_{\alpha \beta}
  = 
  \frac{2}{n} (\y^\mu \nabscr_\mu f + f L)(y_\alpha y_\beta + h_{\alpha \beta}),
\end{equation}
$\Y$ is a CKVF if and only if
\begin{equation}\label{eqequivCKVF1}
 (\nabscr_\alpha f) \y_\beta + (\nabscr_\beta f) \y_\alpha + f \lr{\y_\alpha a_\beta + \y_\beta a_\alpha + 2 \Pi_{\alpha \beta} + \frac{2h_{\alpha \beta}}{n-1}L } 
 = 
 \frac{2}{n} (\y^\mu \nabscr_\mu f + f L)(y_\alpha y_\beta + h_{\alpha \beta}).
\end{equation}
One contraction with $\y^\alpha$ gives  
\begin{equation}\label{eqequivCKVF2}
 (\y^\alpha \nabscr_\alpha f) \y_\beta + \nabscr_\beta f + f a_\beta = \frac{2}{n}(\y^\mu \nabscr_\mu f + f L)y_\beta
\end{equation}
and a second contraction with $y^\beta$
\begin{equation}\label{eqequivCKVF3}
\y^\alpha \nabscr_\alpha f + \y^\beta \nabscr_\beta f  = \frac{2}{n}(\y^\mu \nabscr_\mu f + f L)
~\Longleftrightarrow~
y^\alpha \nabscr_\alpha f = \frac{f L}{n-1}.
\end{equation}
Inserting \eqref{eqequivCKVF3} in \eqref{eqequivCKVF2} gives the first of equation \eqref{eqckvf}. 
Projecting \eqref{eqequivCKVF1} with ${\h^\alpha}_\mu {\h^\beta}_\nu$ gives
\begin{equation}
 2f \lr{\Pi_{\mu \nu} + \frac{h_{\mu \nu}}{n-1} L}= \frac{2}{n}(\y^\mu \nabscr_\mu f + f L)h_{\mu \nu}  
\end{equation}
which is equivalent to $\Pi=0$ after using \eqref{eqequivCKVF3}. This proves the result in one direction. The converse follows immediately because  \eqref{eqequivCKVF2} is identically satisfied when \eqref{eqckvf} hold.  
\end{proof}
  
Coming back to the data corresponding to Kerr-Schild de Sitter metrics, we prove that the first equation in \eqref{eqckvf} is satisfied just by imposing $D$ to be TT. The argument for the second equation is more subtle and will be addressed right after. 

\begin{lemma}\label{lemmaeqckvf1}
Let $\tilg$ be a Kerr-Schild de Sitter metric and $g = \Om^2 \tilg$ a geodesic conformal extension. Then 
 \begin{equation}\label{eqnabf}
  \nabscr_\alpha f = \frac{f L}{n-1} y_\alpha - f a_\alpha.
\end{equation}
\end{lemma}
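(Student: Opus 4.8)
The strategy is to extract equation \eqref{eqnabf} purely from the fact that $D$ is transverse with respect to $\gamma$. Recall that $D$, being a constant multiple of the free part $\mathring g_{(n)}$ of the $n$-th order FG coefficient, is TT by Theorem \ref{theodatascri} together with property 3 of Lemma \ref{lemmapropFG}; in particular $\nabscr^\alpha D_{\alpha\beta} = 0$. Writing $D$ in the form \eqref{eqTTDA}, the traceless condition $\gamma^{\alpha\beta}D_{\alpha\beta}=0$ is automatic because $\y$ is unit, so it carries no information; the whole content lies in the divergence.

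First I would compute $\nabscr^\alpha D_{\alpha\beta}$ using the kinematic decomposition \eqref{eqdecnaby}. The only non-trivial piece is $\nabscr^\alpha(\y_\alpha \y_\beta) = L\,\y_\beta + \y^\alpha\nabscr_\alpha \y_\beta = L\,\y_\beta + a_\beta$, where I have used that contracting \eqref{eqdecnaby} with $\y^\alpha$ leaves only the acceleration term (since $\Pi$, $\h$ and $\w$ are all orthogonal to $\y$, and $\y$ is unit). Combined with metric compatibility $\nabscr^\alpha \gamma_{\alpha\beta}=0$, the vanishing of the divergence of $D = \Am(\y\otimes\y - \tfrac{1}{n}\gamma)$ becomes
\begin{equation}
 (\nabscr^\alpha \Am)\lr{\y_\alpha \y_\beta - \tfrac{1}{n}\gamma_{\alpha\beta}} + \Am\lr{L\,\y_\beta + a_\beta} = 0.
\end{equation}

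Next I would split this covector identity into its components along and orthogonal to $\y$. Contracting with $\y^\beta$ and using $a_\beta \y^\beta = 0$ yields $\y^\alpha\nabscr_\alpha \Am = -\tfrac{n}{n-1}\Am L$, while projecting with $\h^\beta_{\ \mu}$ gives $\h^\beta_{\ \mu}\nabscr_\beta \Am = n\,\Am\, a_\mu$. Finally, I would translate these two relations for $\Am$ into relations for $f$ through $\Am = \kappa f^{-n}$, which gives $\nabscr_\alpha \Am = -\tfrac{n}{f}\Am\,\nabscr_\alpha f$; substituting, the parallel part becomes $\y^\alpha\nabscr_\alpha f = \tfrac{fL}{n-1}$ and the orthogonal part becomes $\h^\beta_{\ \mu}\nabscr_\beta f = -f\,a_\mu$. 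Reassembling the gradient as $\nabscr_\alpha f = (\y^\beta\nabscr_\beta f)\,\y_\alpha + \h^\beta_{\ \alpha}\nabscr_\beta f$ then reproduces \eqref{eqnabf} exactly.

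This is essentially a routine computation, so I would not expect a genuine obstacle; the only points demanding care are the clean separation of the divergence identity into its $\y$-parallel and $\y$-orthogonal parts, and the chain rule relating $\nabscr\Am$ to $\nabscr f$. I would emphasize that the second CKVF equation $\Pi_{\alpha\beta}=0$ does \emph{not} follow from transversality alone and must be established separately (as anticipated in the text), presumably exploiting the WAND property of $k$ rather than merely the TT character of $D$.
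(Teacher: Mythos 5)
Your proposal is correct and takes essentially the same route as the paper's proof: both derive \eqref{eqnabf} solely from the vanishing divergence of $D$ together with the kinematic decomposition \eqref{eqdecnaby}, contracting along $\y$ to get $\y^\alpha\nabscr_\alpha f = fL/(n-1)$ and then recovering the orthogonal part (the paper reinserts the contracted relation into the divergence identity rather than projecting with $\h$, and works directly with $f^{-n}$ instead of passing through $\Am$ and the chain rule, but these are cosmetic differences). Your closing observation that $\Pi_{\alpha\beta}=0$ does not follow from transversality and must instead come from the multiple WAND property of $k$ also matches the paper's structure exactly.
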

\begin{proof}
Consider $D_\Y= f^{-n}\lr{y \otimes y - (1/n) \gamma}$, which by Lemma \ref{lemmadecayH} is, up to a constant, the electric part of the rescaled Weyl tensor of $\tilg$. Thus, it is a TT tensor and the vanishing of its divergence gives by \eqref{eqdecnaby}
\begin{equation}\label{eqaux1}
 \nabscr_\alpha {(D_\Y)^\alpha}_\beta =-\frac{n}{f^{n+1}} \lr{y^\alpha \nabscr_\alpha f y_\beta - \frac{\nabscr_\beta f}{n}} +
 \frac{1}{f^{n}} \lr{ L y_\beta + a_\beta } = 0.
\end{equation}
Contracting with $y^\beta$ one has
\begin{equation}
 y^\alpha \nabscr_\alpha f = \frac{f L}{n-1}
\end{equation}
and inserting back into \eqref{eqaux1} we get \eqref{eqnabf}. This condition, which is precisely the first in \eqref{eqckvf}, is  not only necessary for \eqref{eqaux1} but also sufficient. 
\end{proof}

\bigskip

We next show that $\Pi_{\alpha \beta} = 0$. First notice that $K_\Om$, the second fundamental form of the leaves $\Sigma_\Om = \{\Om = const. \}$, can be written
\begin{equation}
 K_\Om = \frac{1}{2}(\mathcal{L}_\nor g_\Om) = -\frac{\lambda^{1/2}}{2} (2 \Om g_{(2)} + \cdots + n \Om^{n-1} g_{(n)} + \cdots ),
\end{equation}
where $\mathcal{L}_\nor$ denotes the Lie derivative w.r.t. the unit vector $\nor^\alpha\partial_\alpha= \lambda^{-1/2} \nabla^\alpha\Om\partial_\alpha =  -\lambda^{1/2} \partial_\Om$. This tensor appears in the Codazzi equation 
\begin{equation}\label{eqfullcodaz}
 \lr{\nabla_k (K_\Om)_{ij}  - \nabla_i (K_\Om)_{kj}}   = {R^\mu}_{ jik }\nor_\mu,
\end{equation}
where $i,j,k$ denote tangent directions to $\Sigma_\Om$.
The strategy consists in analyzing the $\Om^{n-1}$ order terms of the following components of the Codazzi equation
\begin{equation}\label{eqcodaz}
 \lr{\nabla_\nu (K_\Om)_{\beta\alpha}  - \nabla_\beta (K_\Om)_{\nu \alpha}}  {\h^\alpha}_{(\lambda} {\h^\beta}_{\sigma)} y^\nu  = {R^\mu}_{ \alpha \nu \beta}\nor_\mu {\h^\alpha}_{(\lambda} {\h^\beta}_{\sigma)} y^\nu,
\end{equation}
where we extend $h$ away from $\scri$ as the projector orthogonal to $y$ and $u$, i.e. $h := g + \nor \otimes \nor - \y \otimes \y$. The proof that $\Pi_{\alpha\beta} = 0$ consists in two main steps. Firstly, we prove that the  $\Om^{n-1}$ order term of the LHS of \eqref{eqcodaz} only involves the free part $\mathring g_{(n)}$. This, by Theorem \ref{theodatascri}, coincides up to a constant with the electric part of the rescaled by tensor, which in turn, by Lemma \ref{lemmadecayH}, is given by equation \eqref{eqTTDY2}. From these facts it follows that the LHS of \eqref{eqcodaz} is (up to a non-zero factor) $\Pi_{\alpha \beta}$. The second step consist in analyzing the RHS of \eqref{eqcodaz}. From the algebraically special condition, it follows that the symmetric part of its $\Om^{n-1}$ order term is pure trace. Since $\Pi_{\alpha \beta}$ is traceless, it follows  $\Pi_{\alpha\beta}=0$.

{Before carrying out this program, we list some standard identities and definitions that will be required for the rest of this section. Let $g^{(1)}$ and $g^{(2)}$ denote two metrics and $\nabla^{(1)}$, $ \nabla^{(2)}$ their respective Levi-Civita connections. 
Firstly, the difference of connections $\Q = \nabla^{(1)} - \nabla^{(2)}$ is the tensor given by (e.g. \cite{Kroonbook})
 \begin{align}
  {\Q^\mu}_{\alpha \beta} & := \frac{1}{2} ({g^{(1)\sharp}})^{\mu \nu}(\nabla^{(2)}_\alpha {g^{(1)}}_{\beta \nu} +\nabla^{(2)}_\beta {g^{(1)}}_{\alpha \nu} - \nabla^{(2)}_\nu {g^{(1)}}_{ \alpha\beta })\label{eqdefS},
 \end{align}
 where $g^{(1)\sharp}$ denotes the contravariant metric associated to $g^{(1)}$. 
 From this relation between the connections, a formula for the difference of Riemann tensors follows
\begin{equation}\label{eqdiffriems}
    {({R^{(1)}})^\mu}_{\alpha\nu\beta} - {({R^{(2)}})^\mu}_{\alpha\nu\beta} = 2 \nabla^{(2)}_{[\nu} {\Q^\mu}_{\beta] \alpha} 
 - 
 2 {\Q^\kappa}_{[\nu | \alpha|}{{\Q^\mu}_{\beta] \kappa}}.
%
\end{equation}
Now let us consider two arbitrary conformally related metrics $g = \Om^2 \tilg$, with $g$ extending to $\Om = 0$. Recall  the definition of the Weyl tensor 
  \begin{equation}\label{eqWeyl}
 {C^\mu}_{\alpha \nu \beta} = {R^\mu}_{\alpha \nu \beta}  - \frac{2}{n-1}({\delta^\mu}_{[\nu}R_{\beta]\alpha} - g_{\alpha[\nu}{R^\mu}_{|\beta]}) + \frac{2 R}{n(n-1)}  {\delta^\mu}_{[\nu}g_{\beta]\alpha}.
\end{equation}
As before let $u$ be unit normal along $\nabla \Om$ and $i,j,k$ denote orthogonal directions to $\spn{u}$. Then a straightforward calculation gives 
\begin{equation}\label{eqriemuyhh}
   {R^\mu}_{j i k}\nor_\mu  =  {C^\mu}_{jik} \nor_\mu    - \frac{2}{n-1} g_{j[k}{R_{i]\mu}}\nor^\mu .
\end{equation}
The  Ricci tensors of $g$ and $\tilg$ are related by
\begin{equation}\label{eqrelriccis}
  R_{\alpha \beta} - \widetilde R_{\alpha \beta} = - \frac{n-1}{\Om} \nabla_\alpha \nabla_\beta \Om - g_{\alpha \beta} \frac{\nabla_\mu \nabla^\mu \Om}{\Om} + g_{\alpha \beta}\frac{n}{\Om^2} \nabla_\mu \Om \nabla^\mu \Om,
 \end{equation} 
 where indices are raised with the contravariant metric of $g$. If $\tilg$ is Einstein and $\Om$ geodesic w.r.t $g$, \eqref{eqrelriccis} gives
 \begin{equation}\label{eqricciconfgeo}
  R_{\alpha \beta}  = - \frac{n-1}{\Om} \nabla_\alpha \nabla_\beta \Om - g_{\alpha \beta} \frac{\nabla_\mu \nabla^\mu \Om}{\Om}.
 \end{equation}
 Hence
 \begin{equation}
  R_{i \mu} \nor^\mu = 
  -\lambda^{-1/2} \frac{n-1}{\Om}  (\nabla_i \nabla_\mu \Om) \nabla^\mu \Om  =  -\lambda^{-1/2} \frac{n-1}{2 \Om} \nabla_i(\nabla^\mu \Om \nabla_\mu \Om)  = 0
 \end{equation}
 and 
 \begin{equation}\label{eqriemijk}
   {R^\mu}_{j i k}\nor_\mu  =  {C^\mu}_{jik} \nor_\mu.
\end{equation}
 In particular
 \begin{equation}\label{eqriemuyhhgeodeins}
   {R^\mu}_{\alpha \nu \beta}\nor_\mu y^\nu {\h^\alpha}_\lambda {\h^\beta}_\sigma   =  {C^\mu}_{\alpha \nu \beta} \nor_\mu y^\nu {\h^\alpha}_\lambda {\h^\beta}_\sigma .
\end{equation}
  }

  \begin{lemma}\label{lemmacoddS}
   Let $\tilg_{dS}$ be the metric of de Sitter, $\barg = \Om^2 \tilg_{dS}$ a geodesic conformal extension and $\overline{K}_\Om$ the second fundamental form on the leaves $\Sigma_\Om = \{\Om = const.\}$. Then,  the Codazzi equation \eqref{eqfullcodaz} is 
   \begin{equation}
    \overline{\nabla}_k (\overline{K}_\Om)_{ij} - \overline{\nabla}_i (\overline{K}_\Om)_{kj} = 0,
   \end{equation}
  \end{lemma}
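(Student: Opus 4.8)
The plan is to identify the right-hand side of the Codazzi equation with a contraction of the Weyl tensor of $\barg$ and then use that de Sitter is conformally flat. Applying the full Codazzi identity \eqref{eqfullcodaz} to the geodesic conformal extension $\barg = \Om^2 \tilg_{dS}$, the assertion to be proved is exactly the vanishing of ${\overline{R}^\mu}_{jik}\overline{\nor}_\mu$, where $\overline{\nor}$ is the unit normal to the leaves $\Sigma_\Om$ and $i,j,k$ denote directions tangent to them.

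The key observation is that the hypotheses needed for the identity \eqref{eqriemijk} are met here: de Sitter is an Einstein metric and, by assumption, $\barg$ is a geodesic conformal extension, so $\Om$ is geodesic with respect to $\barg$. Since \eqref{eqriemijk} was derived precisely under these two conditions, it applies verbatim to $\barg$ and yields ${\overline{R}^\mu}_{jik}\overline{\nor}_\mu = {\overline{C}^\mu}_{jik}\overline{\nor}_\mu$, trading the Riemann contraction for the corresponding one of the Weyl tensor of $\barg$.

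It then remains to note that this Weyl contraction vanishes. De Sitter is maximally symmetric, hence its Weyl tensor is identically zero, and because the Weyl tensor is conformally invariant, the Weyl tensor of $\barg = \Om^2 \tilg_{dS}$ vanishes as well. Therefore ${\overline{C}^\mu}_{jik}\overline{\nor}_\mu = 0$, the right-hand side of \eqref{eqfullcodaz} is zero, and the stated form of the Codazzi equation follows.

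There is essentially no obstacle in this argument beyond checking that the premises of \eqref{eqriemijk} indeed hold, which they do by hypothesis. I note that a purely computational alternative is available: one could insert the explicit Fefferman--Graham form of $\barg$ given in Lemma \ref{lemmaFGdS}, compute $\overline{K}_\Om = \tfrac{1}{2}\mathcal{L}_{\overline{\nor}}\barg_\Om$ together with its tangential covariant derivatives, and verify the cancellation directly. This route is correct but considerably more tedious, and it conceals the structural reason for the result, namely the conformal flatness of the background.
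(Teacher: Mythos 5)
Your proof is correct and follows essentially the same route as the paper: apply the Codazzi equation \eqref{eqfullcodaz} to $\barg$, invoke the identity \eqref{eqriemijk} (whose hypotheses — Einstein physical metric and geodesic conformal factor — you rightly verify hold for $\barg = \Om^2 \tilg_{dS}$), and conclude from the vanishing of the Weyl tensor of the conformally flat metric $\barg$. Your explicit justification of why the Weyl tensor vanishes (de Sitter is maximally symmetric plus conformal invariance of Weyl) merely spells out what the paper compresses into ``the Weyl tensor vanishes because $\barg$ is conformally flat.''
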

  \begin{proof}
   The lemma follows by simply applying the Codazzi equation \eqref{eqfullcodaz} to $\barg$ together with identity \eqref{eqriemijk}, where the Weyl tensor vanishes because $\barg$ is conformally flat.   
  \end{proof}

  \begin{proposition}\label{proplhscod}
   Let $\tilg$ be an Einstein metric admitting a smooth conformally flat $\scri$, $g = \Om^2 \tilg$ a geodesic conformal extension and ${K}_\Om$ the second fundamental form on the leaves $\Sigma_\Om = \{\Om = const.\}$. Then the leading order term of the LHS of the Codazzi equation \eqref{eqfullcodaz} is 
\begin{equation}
   - \frac{\lambda^{1/2}}{2}(n-1)\Om^{n-1} \lr{ \nabscr_k (\mathring g_{(n)})_{ij} - \nabscr_i (\mathring g_{(n)})_{kj}},
   \end{equation}
   where $\gamma$ is extended off $\scri$ as independent of $\Om$ and $\nabscr$ denotes its Levi-Civita connection.
  \end{proposition}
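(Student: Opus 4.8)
The plan is to isolate the free datum $\mathring g_{(n)}$ by comparing the full Codazzi expression with that of the de Sitter background, which vanishes identically by Lemma~\ref{lemmacoddS}. Proposition~\ref{propconflatdec}(a) provides, in the \emph{same} geodesic Gaussian coordinates $\{\Om,x^i\}$, the decomposition $g=\barg+Q$ with $\barg$ of the form \eqref{eqpoindS} and $Q$ a purely tangential tensor of order $O(\Om^n)$ whose leading coefficient is $\mathring g_{(n)}$ (Remark~\ref{remarkTTg4}). Writing the two second fundamental forms as $K_\Om=-\tfrac{\lambda^{1/2}}{2}\partial_\Om g_\Om$ and $\overline K_\Om=-\tfrac{\lambda^{1/2}}{2}\partial_\Om\barg_\Om$, their difference is
\begin{equation}
K_\Om-\overline K_\Om=-\frac{\lambda^{1/2}}{2}\,\partial_\Om Q=-\frac{\lambda^{1/2}}{2}\,n\,\Om^{n-1}\mathring g_{(n)}+O(\Om^n),
\end{equation}
so it is $O(\Om^{n-1})$ and its leading coefficient is fixed by $\mathring g_{(n)}$ alone.

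Next I would carry out the order counting for the difference of the two Codazzi operators. Let $D,\overline D$ be the Levi-Civita connections induced on the leaves $\Sigma_\Om$ by $g_\Om$ and $\barg_\Om$, and let $S:=D-\overline D$ be their difference tensor; since $g_\Om-\barg_\Om=Q=O(\Om^n)$ and $S$ is built from one tangential derivative of $Q$ contracted with the inverse metric, $S=O(\Om^n)$. Using the symmetry $S^l{}_{ki}=S^l{}_{ik}$,
\begin{equation}
D_kK_{ij}-D_iK_{kj}=\big(\overline D_kK_{ij}-\overline D_iK_{kj}\big)-S^l{}_{kj}K_{il}+S^l{}_{ij}K_{kl}.
\end{equation}
Because $\overline K=O(\Om)$ and $K-\overline K=O(\Om^{n-1})$, every $S$-term is $O(\Om^{n+1})$, hence subleading; likewise $\overline D=\nabscr+O(\Om^2)$, so replacing $\overline D$ by $\nabscr$ when it acts on $K-\overline K$ costs only $O(\Om^{n+1})$. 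Splitting $K=\overline K+(K-\overline K)$ and using Lemma~\ref{lemmacoddS} to annihilate the background block $\overline D_kK_{ij}-\overline D_iK_{kj}$ up to subleading order, one is left at order $\Om^{n-1}$ with $\nabscr_k(K-\overline K)_{ij}-\nabscr_i(K-\overline K)_{kj}$, into which the displayed leading term of $K-\overline K$ is substituted to produce the stated expression.

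The same conclusion can be seen more structurally: since $g_{(r)}=\barg_{(r)}$ for every $r<n$ (the lower coefficients are fixed by $\gamma$ and coincide with the de Sitter ones), the full and background Codazzi expressions are built from identical ingredients up to order $\Om^{n-2}$; as the background one vanishes identically, the full one is forced to be $O(\Om^{n-1})$, and its leading coefficient can depend only on the single datum distinguishing the two metrics, namely $\mathring g_{(n)}=g_{(n)}-\barg_{(n)}$. This already yields the structure $\Om^{n-1}\big(\nabscr_k(\mathring g_{(n)})_{ij}-\nabscr_i(\mathring g_{(n)})_{kj}\big)$ of the claimed leading term.

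The main obstacle is not any single identity but pinning down the precise numerical prefactor. The naive differentiation $\partial_\Om(\Om^n\mathring g_{(n)})$ contributes a factor linear in $n$, and one must verify whether the subleading connection corrections alter it before the de Sitter subtraction removes the $\barg_{(n)}$ piece. For $n$ odd and $n>4$ the ``evenness up to order $n$'' of the Fefferman--Graham expansion (Lemma~\ref{lemmapropFG}) forces all $S$-type cross terms to vanish at order $\Om^{n-1}$, so the identification is immediate there. The case $n=4$ is the delicate one, since $\overline K$ already carries an $\Om^{3}=\Om^{n-1}$ term (from $\barg_{(4)}=P^2/4$ in \eqref{eqpoindS}) and the $S$-corrections enter at the same order; it is precisely here that Lemma~\ref{lemmacoddS} does nontrivial work, removing the $\barg_{(4)}$-contribution together with the matching connection terms and leaving exactly $g_{(4)}-\barg_{(4)}=\mathring g_{(4)}$. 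Completing the proof amounts to carrying this bookkeeping through carefully enough to fix the exact constant.
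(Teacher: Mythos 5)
Your proposal is, in both strategy and detail, the paper's own proof: the decomposition $g=\barg+Q$ of Proposition \ref{propconflatdec}, the identification of the leading term of $K_\Om-\overline{K}_\Om$ from $\partial_\Om Q$, the order counting showing the difference-of-connections tensor is $O(\Om^n)$ so that all $S$-type corrections enter at $O(\Om^{n+1})$, the exact annihilation of the background Codazzi block by Lemma \ref{lemmacoddS}, and the final replacement of the leaf connection by $\nabscr$ at cost $O(\Om^{n+1})$. The one place where you stop short --- declining to commit to the numerical prefactor --- is a hesitation your own estimates have already resolved: since the $S$-terms and the $\overline{D}\to\nabscr$ substitution are both $O(\Om^{n+1})$, and the background combination $\overline{D}_k\overline{K}_{ij}-\overline{D}_i\overline{K}_{kj}$ vanishes \emph{identically} by Lemma \ref{lemmacoddS} (at every order, which is exactly what disposes of your worry at $n=4$, where $\overline{K}_\Om$ carries the $P^2/(4\lambda^2)$ term at order $\Om^3$; no parity or evenness argument is needed), the leading coefficient is precisely the one read off from $\partial_\Om(\Om^n\mathring g_{(n)})$, namely $n$. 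Note that this means your constant is actually the correct one: the factor $(n-1)$ printed in the Proposition (and carried through the paper's proof) is inconsistent with the paper's own expansion $K_\Om=-\frac{\lambda^{1/2}}{2}(2\Om g_{(2)}+\cdots+n\Om^{n-1}g_{(n)}+\cdots)$ and is evidently a typo; it is harmless downstream because Corollary \ref{corolhscod} and the subsequent argument use the result only up to a non-zero constant.
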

  \begin{proof} 
  Consider the decomposition $a)$ of Proposition \ref{propconflatdec},  $g = \barg + Q$ with $\barg = -\dif \Om^2/\lambda + \barg_{\Om}$ conformal to de Sitter. Since $\barg^{\alpha\beta} \nabla_\alpha\Om \nabla_\beta \Om = -\lambda$, the conformal factor $\Om$ is  geodesic for both $g$ and $\barg$.  On the other hand, the second fundamental forms $K_\Om$ and $\overline{K}_\Om$, respectively induced by $g$ and $\barg$ on $\Sigma_\Om$, are related by
   \begin{equation}
    K_\Om = \frac{-\lambda^{1/2}}{2} \partial_\Om  g_\Om = \frac{-\lambda^{1/2}}{2} \partial_\Om (\barg_\Om + Q) = \overline{K}_\Om - \frac{\lambda^{1/2}}{2} (n-1) \Om^{n-1} \mathring g_{(n)} + O(\Om^{n+1}),
   \end{equation}
where we have used that by construction $Q = \Om^n \mathring g_{(n)} + O(\Om^{n+2})$. For every tensor $\mathcal{T}_{ij}$, tangent to $\Sigma_\Om$, it follows that its covariant derivatives w.r.t. $\nabla$ and $\overline{\nabla}$ satisfy (we use that the coordinates are Gaussian with respect to $g$)
\begin{equation}
 \nabla_k \mathcal{T}_{ij} =  \overline{\nabla}_k \mathcal{T}_{ij} - {\Q^l}_{ki} \mathcal T_{lj} - {\Q^l}_{kj} \mathcal T_{il}
\end{equation}
where the tangent components of $\Q$, given by \eqref{eqdefS} for $g^{(1)} = g $ and $g^{(2)} = \barg $, satisfy
\begin{equation}
 {\Q^l}_{ki} = \frac{1}{2}g^{lm}\lr{\overline{\nabla}_k g_{im}+\overline{\nabla}_i g_{km} -\overline{\nabla}_m g_{ki}} =\frac{1}{2} g^{lm}\lr{\overline{\nabla}_k Q_{im}+\overline{\nabla}_i Q_{km} -\overline{\nabla}_m Q_{ki}} = O(\Om^n).
\end{equation} 
Thus $\nabla_k \mathcal{T}_{ij} =  \overline{\nabla}_k \mathcal{T}_{ij}  + O(\Om^n)$. In particular, for $K_\Om$ 
\begin{align}
 {\nabla}_k ({K}_\Om)_{ij} & = {\nabla}_k (\overline{K}_\Om)_{ij} - \frac{\lambda^{1/2}}{2}(n-1)\Om^{n-1}\nabla_k(\mathring g_{(n)})_{ij} + O(\Om^{n+1})\\
 & =  \overline{\nabla}_k (\overline{K}_\Om)_{ij} - \frac{\lambda^{1/2}}{2}(n-1)\Om^{n-1}\nabla_k(\mathring g_{(n)})_{ij} + O(\Om^n),
\end{align}
and the LHS of the Codazzi equation \eqref{eqfullcodaz} for $K_\Om$ is
  \begin{align}
    {\nabla}_k ({K}_\Om)_{ij} - {\nabla}_i (K_\Om)_{kj}  = &   
    \overline{\nabla}_k (\overline{K}_\Om)_{ij} - \overline{\nabla}_i (\overline{K}_\Om)_{kj} 
 \\ & -  \frac{\lambda^{1/2}}{2}(n-1) \Om^{n-1}\lr{
    {\nabla}_k (\mathring g_{(n)})_{ij} - {\nabla}_i (\mathring g_{(n)})_{kj}} +  O(\Om^{n}) \\
     = & -  \frac{\lambda^{1/2}}{2}(n-1) \Om^{n-1}\lr{
    {\nabla}_k (\mathring g_{(n)})_{ij} - {\nabla}_i (\mathring g_{(n)})_{kj}} +  O(\Om^{n}),
   \end{align}
   where the second equality is a consequence of Lemma \ref{lemmacoddS}. Now, since $g_\Om = \gamma + O(\Om^2)$, the covariant derivatives ${\nabla}_k (\mathring g_{(n)})_{ij}$ and ${\nabla}_i (\mathring g_{(n)})_{kj}$ are, to lowest order in $\Om$, ${\nabscr}_k (\mathring g_{(n)})_{ij}$ and ${\nabscr}_i (\mathring g_{(n)})_{kj}$.
  \end{proof}


Therefore, for the particular case of Kerr-Schild-de Sitter metrics and the components of the Codazzi equation in \eqref{eqcodaz} we obtain:
\begin{corollary}\label{corolhscod}
 The $\Om^{n-1}$ order term of the LHS of \eqref{eqcodaz} is, up to a non-zero constant,
  \begin{equation}\label{eqLHScod}
    (\mathcal{LHS})_{\lambda \sigma} :=  -\frac{1}{f^n} \Pi_{\lambda \sigma}. 
   \end{equation}
\end{corollary}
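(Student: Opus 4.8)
The plan is to substitute the output of Proposition \ref{proplhscod} into the projection defining the left-hand side of \eqref{eqcodaz}, and then replace the free datum by its explicit form. Proposition \ref{proplhscod} gives the $\Om^{n-1}$-order term of the full Codazzi left-hand side as $-\tfrac{\lambda^{1/2}}{2}(n-1)\Om^{n-1}\bigl(\nabscr_\nu(\mathring g_{(n)})_{\beta\alpha}-\nabscr_\beta(\mathring g_{(n)})_{\nu\alpha}\bigr)$, where $k,i,j$ now play the roles of $\nu,\beta,\alpha$. Contracting this with ${\h^\alpha}_{(\lambda}{\h^\beta}_{\sigma)}y^\nu$ produces the leading term of \eqref{eqcodaz}. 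By Theorem \ref{theodatascri} together with Lemma \ref{lemmadecayH} and \eqref{eqTTDY2}, the free part is, up to the nonzero constant $-2\kappa/(\lambda n(n-2))$, equal to $(D_\Y)_{\alpha\beta}=f^{-n}\bigl(y_\alpha y_\beta-\tfrac1n\gamma_{\alpha\beta}\bigr)$, and I would insert this expression throughout.

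The second step is the algebraic evaluation of the two $y^\nu$-contracted, $\h$-projected covariant derivatives. Expanding $\nabscr_\nu\bigl(f^{-n}(y_\beta y_\alpha-\tfrac1n\gamma_{\beta\alpha})\bigr)$ by Leibniz and inserting the decomposition \eqref{eqdecnaby} of $\nabscr y$, I would use the orthogonality relations $y^\alpha h_{\alpha\beta}=y^\alpha\Pi_{\alpha\beta}=y^\alpha w_{\alpha\beta}=y^\alpha a_\alpha=0$ and the fact that $y$ is unit (so $y^\nu\nabscr_\nu y_\beta=a_\beta$ and $y^\nu\nabscr_\beta y_\nu=0$), together with $\gamma_{\alpha\beta}{\h^\alpha}_\lambda{\h^\beta}_\sigma=h_{\lambda\sigma}$. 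The first derivative then collapses to a pure trace proportional to $(y^\nu\nabscr_\nu f^{-n})\,h_{\lambda\sigma}$, while the second yields $f^{-n}\bigl(\Pi_{\sigma\lambda}+\tfrac{L}{n-1}h_{\sigma\lambda}+w_{\sigma\lambda}\bigr)$. Symmetrizing in $(\lambda,\sigma)$ discards the skew part $w_{\sigma\lambda}$ and leaves $\Pi_{\lambda\sigma}$ plus two trace contributions proportional to $h_{\lambda\sigma}$.

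The crucial point, and the step I expect to demand the most care, is that these two surviving trace terms must cancel exactly so that only $\Pi_{\lambda\sigma}$ remains. Here Lemma \ref{lemmaeqckvf1} is essential: contracting $\nabscr_\alpha f=\tfrac{fL}{n-1}y_\alpha-f a_\alpha$ with $y^\alpha$ and using $y^\alpha a_\alpha=0$ gives $y^\nu\nabscr_\nu f=fL/(n-1)$, whence $y^\nu\nabscr_\nu f^{-n}=-\tfrac{nL}{(n-1)f^{n}}$. Feeding this into the trace coming from the first derivative produces $+\tfrac{L}{(n-1)f^{n}}h_{\lambda\sigma}$, which precisely cancels the $-\tfrac{L}{(n-1)f^{n}}h_{\lambda\sigma}$ arising from the $L$-term of the second. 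I would emphasize that this cancellation is not automatic: it hinges on the first conformal Killing equation of Lemma \ref{lemmaeqckvf1}, which itself followed merely from $D$ being TT.

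Once the traces cancel, the symmetrized bracket equals $-c\,f^{-n}\Pi_{\lambda\sigma}$ with $c=-2\kappa/(\lambda n(n-2))$, so the $\Om^{n-1}$-order term of the left-hand side of \eqref{eqcodaz} becomes $\tfrac{\lambda^{1/2}}{2}(n-1)c\,f^{-n}\Pi_{\lambda\sigma}\,\Om^{n-1}$, which is $-f^{-n}\Pi_{\lambda\sigma}$ up to the nonzero constant $-\tfrac{\lambda^{1/2}}{2}(n-1)c$. This is exactly the claimed form \eqref{eqLHScod}, completing the argument. The only genuine subtlety beyond the trace cancellation is keeping track of which index each projector acts on when passing from the generic Codazzi form of Proposition \ref{proplhscod} to the specific contraction \eqref{eqcodaz}; the computation itself is otherwise routine index bookkeeping.
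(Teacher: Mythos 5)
Your proposal is correct and follows essentially the same route as the paper: identify $\mathring g_{(n)}$ with $\kappa D_\Y$ via Theorem \ref{theodatascri} and Lemma \ref{lemmadecayH}, expand the $y^\nu$-contracted derivatives using the decomposition \eqref{eqdecnaby}, and invoke the first conformal Killing equation from Lemma \ref{lemmaeqckvf1} to eliminate the trace terms, leaving $-f^{-n}\Pi_{\lambda\sigma}$ after $h$-projection and symmetrization. Your explicit verification of the cancellation of the two trace contributions (via $y^\nu\nabscr_\nu f = fL/(n-1)$) matches the paper's computation, which merely performs the projection after, rather than before, the full expansion.
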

\begin{proof} 
From Proposition \ref{proplhscod}, the term of order $\Om^{n-1}$ of \eqref{eqcodaz} only involves derivatives of $\mathring g_{(n)}$. By Theorem \ref{theodatascri}, $\mathring g_{(n)}$ is up to a constant the electric part of the rescaled Weyl tensor, which by Lemma \ref{lemmadecayH}, is given by expression \eqref{eqTTDY2}.  
Hence, substituting $\gamma_{\alpha\beta} = y_\alpha y_\beta + h_{\alpha\beta}$,  the $(n-1)$-th order of the LHS of \eqref{eqcodaz} is (up to a non-zero  constant)
\begin{align}
 \y^\nu (\nabscr_\nu (D_\Y)_{ \beta\alpha}  - \nabscr_\beta (D_\Y)_{\nu \alpha} )  & =
  - \frac{n}{f^{n+1}} \y^\nu \nabscr_\nu f
  \lr{ \y_\beta \y_\alpha - 
 \frac{\y_\beta \y_\alpha}{n} - \frac{\h_{\beta\alpha}}{n} } 
  + \frac{1}{f^n}( a_\beta y_\alpha+ a_\alpha y_\beta)\\
  & +  \frac{n}{f^{n+1}} \nabscr_\beta f \frac{n-1}{n}y_\alpha - \frac{1}{f^n}\nabscr_\beta y_\alpha.
\end{align}
 Inserting the decomposition \eqref{eqdecnaby} and using the first equation in \eqref{eqckvf}  
\begin{align}
 & \y^\nu  (\nabscr_\nu (D_\Y)_{ \beta\alpha}  - \nabscr_\beta (D_\Y)_{\nu \alpha} )   =
  - \frac{n}{f^{n+1}} \frac{f L}{n-1}
  \lr{ \frac{n-1}{n} \y_\beta\y_\alpha - \frac{\h_{\beta\alpha}}{n} } 
  + \frac{1}{f^n}(a_\beta y_\alpha + a_\alpha y_\beta)\\
  & +  \frac{n}{f^{n+1}} (\frac{f L}{n-1} y_\beta - f a_\beta) \frac{n-1}{n}y_\alpha - \frac{1}{f^n}(y_\beta a_\alpha + \Pi_{\beta\alpha } + \frac{L}{n-1}\h_{ \beta \alpha} + \w_{\beta \alpha}) \\
  & = -\frac{1}{f^n} \lr{ (n-2)a_\beta y_\alpha + \Pi_{\beta \alpha } + \w_{\beta\alpha}}.
\end{align}
 Contracting both indices with $\h$ and symmetrizing yields the following tensor
   \begin{equation}
    (\mathcal{LHS})_{\lambda \sigma} :=  \y^\nu (\nabscr_\nu (D_\Y)_{ \beta\alpha}  - \nabscr_\beta (D_\Y)_{\nu \alpha} ) {\h^\alpha}_{(\lambda} {\h^\beta}_{\sigma)} = -\frac{1}{f^n} \Pi_{\lambda \sigma}. 
   \end{equation}
   \end{proof}
   \bigskip
   In the remainder of this section, we ellaborate the RHS of \eqref{eqcodaz}. Applying identity \eqref{eqriemuyhhgeodeins} it follows
\begin{equation}\label{eqRHScod}
 (\mathcal{RHS})_{\sigma \lambda} := {R^\mu}_{ \alpha \nu \beta}\nor^\nu \y_\mu  {\h^\alpha}_{(\lambda} {\h^\beta}_{\sigma)}  = {C^\nu}_{ \beta \mu \alpha}\nor_\nu \y^\mu  {\h^\alpha}_{(\lambda }{\h^\beta}_{\sigma)}.
\end{equation}
Now we use the algebraic special condition to prove that the $\Om^{n-1}$ order components of the Weyl tensor in \eqref{eqRHScod} are pure trace. Recall the decomposition \eqref{eqprojkuy} of $k$. One can then define $l = 2s^{-1} (u-y)$ such that $l_\alpha k^\alpha = -1$ and complete to a null frame $\{ k,l,m_{(i)} \}$. Then, $\h$ is the projector onto $\spn{m_{(i)}}$. Thus, contracting ${C^\mu}_{ \alpha \nu \beta}$ with $k_\mu k^\nu {\h^\alpha}_{(\lambda} {\h^\beta}_{\sigma)}$ gives by Proposition \ref{propksspec}
\begin{align}
  ~~ & 0 = {C^\mu}_{ \alpha \nu \beta} k_\mu k^\nu {\h^\alpha}_{(\lambda} {\h^\beta}_{\sigma)} \\  ~\Longleftrightarrow~~ & 0 =  \lr{{C^\mu}_{ \alpha \nu \beta} \nor_\mu \nor^\nu + {C^\mu}_{ \alpha \nu \beta} \y_\mu \y^\nu + 2 {C^\mu}_{ (\alpha| \nu |\beta)} \nor_\mu \y^\nu}  {\h^\alpha}_{(\lambda} {\h^\beta}_{\sigma)} \\ 
  ~ \Longleftrightarrow ~~ & 2 {C^\mu}_{ (\alpha| \nu |\beta)} \nor^\nu \y_\mu  {\h^\alpha}_{(\lambda} {\h^\beta}_{\sigma)} = - {C^\mu}_{ \alpha \nu \beta} \nor_\mu \nor^\nu  {\h^\alpha}_{\lambda} {\h^\beta}_{\sigma}- {C^\mu}_{ \alpha \nu \beta} \y_\mu \y^\nu {\h^\alpha}_{\lambda} {\h^\beta}_{\sigma}.  
\end{align}
In addition
\begin{equation}
 g_{\alpha \beta} = -\nor_\alpha \nor_\beta + \y_\alpha \y_\beta + \h_{\alpha \beta},
\end{equation}
 and the traceless property of the Weyl tensor gives
\begin{equation}
 0 = {C^\mu}_{\alpha\mu\beta} = -{C^\mu}_{\alpha \nu \beta} \nor_\mu \nor^\nu + {C^\mu}_{\alpha \nu \beta} \y_\mu \y^\nu + {C^\mu}_{\alpha \nu \beta} {\h^\nu}_\mu ~\Longrightarrow ~ {C^\mu}_{\alpha \nu \beta} \y_\mu \y^\nu  = {C^\mu}_{\alpha \nu \beta} \nor_\mu \nor^\nu - {C^\mu}_{\alpha \nu \beta} {\h^\nu}_\mu.
\end{equation}
Therefore
\begin{equation}\label{eqtermsweyl}
 2 {C^\mu}_{ (\alpha| \nu |\beta)} \nor^\nu \y_\mu  {\h^\alpha}_{(\lambda} {\h^\beta}_{\sigma)} = -2 {C^\mu}_{ \alpha \nu \beta} \nor_\mu \nor^\nu {\h^\alpha}_{\lambda} {\h^\beta}_{\sigma} + {C^\mu}_{ \alpha \nu \beta} {\h^\nu}_\mu {\h^\alpha}_{\lambda} {\h^\beta}_{\sigma}.
\end{equation}
The first term in the RHS of \eqref{eqtermsweyl} only involves the electric part of the Weyl tensor. Using the previous results we next prove that, at order $\Om^{n-1}$, it can only contain trace terms. 
\begin{lemma}\label{lemmaCuu}
Let $\tilg$ be a conformally extendable metric admitting a smooth conformally flat $\scri$. Then, for every geodesic conformal extension $g = \Om^2 \tilg$, the electric part of Weyl tensor w.r.t. the normal vector ${C_\perp}$ has no terms in $\Om^{n-1}$. Moreover, if $\tilg$ is Kerr-Schild-de Sitter, the possible terms of order $\Om^{n-1}$ added by contracting twice with $h$, i.e. $({C_\perp)}_{\alpha \beta}{\h^\alpha}_\lambda {\h^\sigma}_\beta$, are pure trace.
\end{lemma}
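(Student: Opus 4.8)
The plan is to prove the two assertions separately, controlling the first by the Fefferman--Graham expansion and the second by combining it with the explicit leading form \eqref{eqweylh} of $C_\perp$ and the algebraically special structure of Lemma \ref{lemmadecayH}.

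For the first assertion I would use that, by Proposition \ref{propconflatdec}, a geodesic conformal extension with conformally flat $\scri$ decomposes as $g=\barg+Q$ with $\barg$ of the de Sitter form \eqref{eqpoindS} and $Q=O(\Om^n)$ carrying no $\Om^{n+1}$ term. Thus $g_\Om$ and $\barg_\Om$ share every FG coefficient of order $\le n-1$ and also the coefficient of order $n+1$, differing first at order $n$ through $\mathring g_{(n)}$. I would then inspect the coordinate formula \eqref{eqCtnocov} of Lemma \ref{lemmaWeyls}: its coefficient of $\Om^{n-1}$ receives $g_{(n+1)}$ from the combination $\Om^{-1}\partial_\Om g_\Om-\partial_\Om^2 g_\Om$ and only coefficients of order $\le n-1$ from the quadratic term $\tfrac12\partial_\Om(g_{ik})g^{kl}\partial_\Om(g_{lj})$, while the order-$n$ datum $\mathring g_{(n)}$ enters \eqref{eqCtnocov} first at order $\Om^{n-2}$ and never at $\Om^{n-1}$. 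Equivalently, by Property 2 of Lemma \ref{lemmapropFG} together with Lemma \ref{lemmaFGdS} and Remark \ref{remarkTTg4}, all these coefficients are generated solely by the conformally flat boundary metric and coincide with those of $\barg$. Since $\barg$ is conformally flat, its Weyl tensor, hence $\bar C_\perp$, vanishes identically; as the $\Om^{n-1}$ coefficient of $(C_\perp)_{ij}$ depends only on these de Sitter-matching coefficients, it must vanish.

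For the second assertion I would use the first one to write $C_\perp=\Am\,\Om^{n-2}\lr{\y\otimes\y-\tfrac1n\gamma}\big|_{\scri}+O(\Om^n)$, with no intervening $\Om^{n-1}$ term. Hence an $\Om^{n-1}$ term in $(C_\perp)_{\alpha\beta}{\h^\alpha}_\lambda{\h^\beta}_\sigma$ can arise only by pairing this leading coefficient with the first-order-in-$\Om$ variation of the two projectors $\h$. The decisive algebraic fact is that $\h$ annihilates $\y$: contracting the leading coefficient $\Am\lr{\y_\alpha\y_\beta-\tfrac1n\gamma_{\alpha\beta}}\big|_{\scri}$ with one zeroth-order projector removes the $\y\otimes\y$ piece and leaves the pure-trace tensor $-\tfrac{\Am}{n}\h$. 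Writing $\h=g+\nor\otimes\nor-\y\otimes\y$ and expanding $\y=\restr{\y}{\scri}+\Om\,\y_{(1)}+\cdots$, the unit-norm condition, which at first order reads $\gamma(\restr{\y}{\scri},\y_{(1)})=0$, shows that the surviving $\Om^{n-1}$ contribution is proportional to $\restr{\y}{\scri}\otimes\y_{(1)}$ symmetrized. Since this tensor carries a leg along $\restr{\y}{\scri}$, it is annihilated by the projectors $\h$ acting on the free indices $\lambda,\sigma$: its $\h$-transverse part vanishes, so it contributes only pure trace to the transverse components -- precisely the statement needed for the Codazzi analysis, where only the traceless transverse part (the one multiplying $\Pi_{\lambda\sigma}$ in Corollary \ref{corolhscod}) is relevant.

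The step I expect to be the main obstacle is this last one: keeping careful track of the first-order $\Om$-variation of $\h$, equivalently of the decomposition $k=s(\nor+\y)$ off $\scri$, and verifying that the cross-term it generates produces no genuine transverse-traceless piece at order $\Om^{n-1}$. The two ingredients that make the obstruction disappear are the algebraically special form of the leading electric part, $\propto\y\otimes\y-\tfrac1n\gamma$ (Lemma \ref{lemmadecayH}), and the normalization relation $\gamma(\restr{\y}{\scri},\y_{(1)})=0$; together they force the only traceless structure produced by the projector variation to lie along $\restr{\y}{\scri}$ and hence to be invisible to the transverse projection that defines the relevant Codazzi components.
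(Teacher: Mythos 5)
Your proposal is correct and follows essentially the same route as the paper. For the first assertion the paper does exactly what you describe: it inserts the decomposition $g_\Om = \barg_\Om + Q$ of Proposition \ref{propconflatdec} into \eqref{eqCtnocov}, uses $\overline{C}_\perp = 0$ to cancel the de Sitter contribution, and is left with $\tfrac{2}{\lambda}C_\perp = \Om^{-1}\dot Q - \ddot Q + O(\Om^n)$, which has no $\Om^{n-1}$ term precisely because $Q$ carries no $\Om^{n+1}$ coefficient; your coefficient-by-coefficient bookkeeping in \eqref{eqCtnocov} is the same computation in expanded form. For the second assertion the paper is terser than you: it records $(C_\perp)_{\alpha\beta}{\h^\alpha}_\mu {\h^\beta}_\nu = -\tfrac{\kappa}{n f^n}\Om^{n-2}\h_{\mu\nu} + O(\Om^n)$ with the exact ($\Om$-dependent) projector and simply declares any $\Om^{n-1}$ terms generated by the $\Om$-dependence of $\h$ to be trace terms, whereas you expand the projector to first order and identify the $\Om^{n-1}$ coefficient explicitly as proportional to the symmetrization of $\y|_\scri\otimes \y_{(1)}$, i.e.\ to $\partial_\Om \h|_\scri$ (using $\gamma(\y|_\scri,\y_{(1)})=0$ together with the evenness of the FG expansion and the $\Om$-independence of $\nor$ in Gaussian coordinates, which you leave implicit). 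Your version is in fact slightly sharper than the paper's wording: at fixed order this coefficient is traceless but non-transverse (its legs lie along $\y|_\scri$), so ``pure trace'' holds in the paper's sense of proportionality to the exact $\h_{\mu\nu}$, and what actually matters — as you correctly anticipate — is that it carries no transverse-traceless content and therefore drops out of the comparison with $\Pi_{\lambda\sigma}$ in Corollary \ref{corolhscod}. One phrase to polish: the free indices $\lambda,\sigma$ in the lemma are not themselves further projected, so this term is not ``annihilated by the projectors acting on the free indices'' within the expression itself; your subsequent sentence — that its $\h$-transverse part vanishes under the additional projection implicit in the Codazzi analysis — is the correct statement and suffices for the application.
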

\begin{proof}
First consider $g = -\dif \Om^2 + g_\Om$ in normal form w.r.t. a boundary metric $\gamma$. Since $\gamma$ is conformally flat, we can decompose $g_\Om$ as in statement $a)$ of Proposition \ref{propconflatdec}
 \begin{equation}\label{eqauxdecg}
 g_\Om = \barg_\Om + Q
\end{equation}
where $\barg = -\dif \Om^2 + \barg_\Om$ is conformally isometric to de Sitter, $\barg_\Om$ is given by \eqref{eqpoindS} and $Q=O(\Om^n)$ contains no terms of order $\Om^{n+1}$. We now insert this decomposition into formula \eqref{eqCtnocov}, which for simplicity we write using matrix notation as
 \begin{equation}\label{eqcperpmat}
  ({C_\perp}) =  \frac{\lambda}{2}  \lr{\frac{1}{2} \dot g_\Om g^{-1}_\Om \dot g_{\Om}+  \frac{1}{\Om} \dot g_{\Om} - \ddot g_{\Om}}.
\end{equation}
where $\dot~$ stands for derivative in $\Om$ and note, $g^{-1}_\Om$ must decompose as
\begin{equation}
 g^{-1}_\Om = \barg^{-1}_\Om + V
\end{equation}
with $V = O(\Om^n)$, because $g^{-1}_\Om g_\Om$ equals the identity and terms of order $m<n$ in $V$ could not be cancelled out.
We compute the terms in \eqref{eqcperpmat}. Firstly 
\begin{align}
 \dot g_\Om g^{-1} \dot g_{\Om} & = \dot{ \barg}_\Om g^{-1}_\Om \dot{ \barg}_\Om + \dot{ \barg}_\Om g^{-1}_\Om \dot Q + \dot Q g^{-1}_\Om \dot{ \barg}_\Om + \dot Q g^{-1}_\Om \dot Q \\
 & =\dot{\barg}_\Om \barg^{-1}_\Om \dot{ \barg}_\Om + \dot{ \barg}_\Om \barg^{-1}_\Om \dot Q + \dot Q \barg^{-1}_\Om \dot{ \barg}_\Om + \dot Q \barg^{-1}_\Om \dot Q + \dot{ \barg}_\Om V \dot{ \barg}_\Om + \dot{ \barg}_\Om V \dot Q + \dot Q V \dot{ \barg}_\Om + \dot Q V \dot Q,
\end{align}
 and second
\begin{equation}
 \frac{1}{\Om} \dot g_{\Om} - \ddot g_{\Om} = \frac{1}{\Om}  \dot{ \barg}_\Om -  \ddot{ \barg}_\Om + \frac{1}{\Om}  \dot Q -  \ddot Q.
\end{equation}
Adding them and taking into account that 
\begin{equation}
  \frac{\lambda}{2}\lr{ \frac{1}{2} \dot{\barg}_\Om \barg^{-1}_\Om \dot{ \barg}_\Om + \frac{1}{\Om}  \dot{ \barg}_\Om -  \ddot{ \barg}_\Om} = \overline{C}_\perp = 0
\end{equation}
where $(\overline{C}_\perp)$ is the electric part of the Weyl tensor of $\barg$, we are left with
\begin{align}
  \frac{2}{\lambda}({C_\perp}) & = \frac{1}{2}\lr{\dot{ \barg}_\Om \barg^{-1}_\Om \dot Q + \dot Q \barg^{-1}_\Om \dot{ \barg}_\Om + \dot Q \barg^{-1}_\Om \dot Q  + \dot{ \barg}_\Om V \dot{ \barg}_\Om + \dot{ \barg}_\Om V \dot Q + \dot Q V \dot{ \barg}_\Om + \dot Q V \dot Q} + \frac{1}{\Om}  \dot Q -  \ddot Q \\
  & =  \frac{1}{\Om}  \dot Q -  \ddot Q + O(\Om^n).\label{eqCelecdom}
\end{align}
Since $Q$ does not contain terms of order $\Om^{n+1}$, then \eqref{eqCelecdom} does not contain terms of order $\Om^{n-1}$. This proves the first part of the lemma. 

Combining this fact with equation \eqref{eqTTDY2}, we can write the leading order of $C_\perp$ and its tail order terms as
\begin{align}
 C_{\perp} & = \Om^{n-2} \frac{\kappa}{f^{n+2}} \lr{\Y_\alpha \Y_\beta - \frac{f^2}{n} \gamma_{\alpha\beta}} + O(\Om^n),
\end{align}
where $\gamma$ must be understood as the leading order term of $g_{\Om}$, i.e. the extension of $\gamma|_{\scri}$ to the spacetime as a tensor independent  of $\Om$ and similarly with $\Y$. Contracting this expression twice with $\h$ gives
 \begin{equation}
  (C_\perp)_{\alpha\beta}{\h^\alpha}_\mu {\h^\beta}_\nu = -\Om^{n-2} \frac{\kappa}{n} \frac{1}{f^n} \h_{\mu \nu} + O(\Om^n).
 \end{equation}
We cannot exclude that the presence of  $h_{\alpha\beta }$  in this expression introduces terms of order $\Om^{n-1}$, but if present, they are clearly trace terms, as claimed in the Lemma.
\end{proof}

We next look for a similar result for the components of the Weyl tensor ${C^\mu}_{\beta \nu \alpha} {\h^\nu}_\mu {\h^\alpha}_{(\lambda} {\h^\beta}_{\sigma)}$ which arise in \eqref{eqtermsweyl}. From the definition \eqref{eqWeyl} one has
\begin{align}
 {C^\mu}_{\alpha \nu \beta} {\h^\nu}_\mu  {\h^\alpha}_{\lambda} {\h^\beta}_{\sigma} & = {R^\mu}_{\alpha \nu \beta}{\h^\nu}_\mu  {\h^\alpha}_{\lambda} {\h^\beta}_{\sigma} \\ ~~ & + \lr{- \frac{2}{n-1}({\delta^\mu}_{[\nu}R_{\beta]\alpha} - g_{\alpha[\nu}{R^\mu}_{|\beta]}) + \frac{2 R}{n(n-1)}  {\delta^\mu}_{[\nu}g_{\beta]\alpha}} {\h^\nu}_\mu  {\h^\alpha}_{\lambda} {\h^\beta}_{\sigma} \\
  & = {R^\mu}_{\alpha \nu \beta}{\h^\nu}_\mu  {\h^\alpha}_{\lambda} {\h^\beta}_{\sigma} - \frac{n-3}{n-1} R_{\alpha \beta} {\h^\alpha}_\lambda {\h^\beta}_\sigma+ \lr{ -\frac{{R^\mu}_\nu {\h^\nu}_\mu}{n-1} + \frac{n-2}{n(n-1)}R} \h_{\lambda \sigma}\label{eqRhhh},
\end{align}
 which using \eqref{eqricciconfgeo} gives  
 \begin{align} 
 {C^\mu}_{\alpha \nu \beta} {\h^\nu}_\mu  {\h^\alpha}_{\lambda} {\h^\beta}_{\sigma} & = {R^\mu}_{\alpha \nu \beta}{\h^\nu}_\mu  {\h^\alpha}_{\lambda} {\h^\beta}_{\sigma} +(n-3) \frac{\nabla_\alpha \nabla_\beta \Om}{\Om} {\h^\alpha}_\lambda {\h^\beta}_\sigma\\ 
 & + \lr{\frac{n-3}{n-1} \frac{\nabla_\mu \nabla^\mu \Om}{\Om} -\frac{{R^\mu}_\nu {\h^\nu}_\mu}{n-1} + \frac{n-2}{n(n-1)} R} \h_{\lambda \sigma}.\label{eqRhhh2}
\end{align}  
 The term containing ${\h^\alpha}_\lambda {\h^\beta}_\sigma \nabla_\alpha \nabla_\beta \Om $ will be left unaltered as it will cancel out after expanding the rest of terms.
Our next aim is to analyze the components of the Riemann tensor ${R^\mu}_{\alpha \nu \beta}{\h^\nu}_\mu  {\h^\alpha}_{(\lambda} {\h^\alpha}_{\sigma)}$, and relate them to the same components of the Riemann tensor of $\hatg$:
\begin{lemma}\label{lemmariems}
 The Riemann tensors of $g$ and $\hatg$ satisfy 
 \begin{align}
   {\widehat{R}^\mu}_{~~\alpha \nu \beta} {\h^\delta}_\mu {\h^\nu}_\gamma  {\h^\alpha}_{\lambda} {\h^\beta}_{\sigma} & = 
    {R^\mu}_{ \alpha \nu \beta} {\h^\delta}_\mu{\h^\nu}_\gamma  {\h^\alpha}_{\lambda} {\h^\beta}_{\sigma} \\ 
    & - 
    2 \H \h^{\delta \tau } {\h^\nu}_\gamma {\h^\alpha}_\lambda {\h^\beta}_\sigma \lr{\nabla_{[\nu|} k_{[\tau}\nabla_{\alpha]}k_{|\beta]} + \nabla_{[\nu} k_{\beta]} \nabla_{[\alpha} k_{\tau]}}.
  \end{align}
\end{lemma}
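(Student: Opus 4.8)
The plan is to compute both curvatures relative to the single connection $\widehat\nabla$ of $\hatg$, using the Kerr--Schild structure $g = \hatg + \H\, k\otimes k$ of \eqref{eqgeodKS}, and then to exploit that the projector $h = g + u\otimes u - y\otimes y$ annihilates $k$. I would start from the difference-of-curvatures identity \eqref{eqdiffriems} with $g^{(1)}=g$ and $g^{(2)}=\hatg$, so that the connection difference $S := \nabla-\widehat\nabla$ is given by \eqref{eqdefS}. Since $\widehat\nabla\hatg=0$, only the Kerr--Schild term survives the covariant derivatives, giving
\begin{equation}
 {S^\mu}_{\alpha\beta} = \tfrac12 g^{\mu\rho}\lr{\widehat\nabla_\alpha(\H k_\beta k_\rho) + \widehat\nabla_\beta(\H k_\alpha k_\rho) - \widehat\nabla_\rho(\H k_\alpha k_\beta)},
\end{equation}
a tensor each of whose monomials carries either an explicit undifferentiated factor $k$, or a factor $\widehat\nabla k$, together with at most one derivative of $\H$.

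The key structural fact is that $k = s(u+y)$ lies in $\mathrm{span}\{u,y\}$ (cf. \eqref{eqprojkuy}), so $h$ kills $k$ from either slot: ${\h^\alpha}_\beta k_\alpha = 0$ and ${\h^\delta}_\mu g^{\mu\rho}k_\rho = {\h^{\delta\rho}}k_\rho = 0$. I would contract the whole of \eqref{eqdiffriems} with ${\h^\delta}_\mu{\h^\nu}_\gamma{\h^\alpha}_\lambda{\h^\beta}_\sigma$ and then discard, term by term, every contribution containing an \emph{undifferentiated} $k$ in one of the four projected slots. This is where the content of the Lemma emerges: in every monomial produced by the linear piece $2\widehat\nabla_{[\nu}{S^\mu}_{\beta]\alpha}$ and the quadratic piece $-2{S^\kappa}_{[\nu|\alpha|}{S^\mu}_{\beta]\kappa}$, whenever a second-derivative term $\widehat\nabla\widehat\nabla k$, a term $\widehat\nabla\widehat\nabla\H$, or a mixed term $\widehat\nabla\H\,\widehat\nabla k$ appears, it is necessarily multiplied by an undifferentiated $k$ sitting in a projected index (the two $k$'s always occupy projected slots, one of them through the $g^{\mu\rho}$ raised by $h$). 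Such terms are therefore annihilated, and the only survivors are those in which \emph{both} factors of $k$ are differentiated exactly once, i.e. genuinely quadratic expressions $\H\,(\widehat\nabla k)(\widehat\nabla k)$.

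Having reduced the computation to the $\H(\widehat\nabla k)(\widehat\nabla k)$ terms, I would collect the contributions from the two pieces of \eqref{eqdiffriems}, tracking the antisymmetrisation $[\nu\beta]$ inherited from the Riemann tensor and the symmetry of $S$ in its lower indices; in the quadratic piece one repeatedly uses $k^\rho k_\rho=0$ and $k^\rho\widehat\nabla_\alpha k_\rho=0$ to prune the $\kappa$-contraction. Two simplifications close the argument. First, after antisymmetrising over the appropriate index pairs, the surviving combination organises into the two terms $\widehat\nabla_{[\nu|}k_{[\tau}\widehat\nabla_{\alpha]}k_{|\beta]}$ and $\widehat\nabla_{[\nu}k_{\beta]}\widehat\nabla_{[\alpha}k_{\tau]}$ with overall factor $-2\H\,{\h^{\delta\tau}}{\h^\nu}_\gamma{\h^\alpha}_\lambda{\h^\beta}_\sigma$. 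Second, one may replace $\widehat\nabla k$ by the connection $\nabla$ of $g$ throughout: the difference $\nabla_\alpha k_\beta - \widehat\nabla_\alpha k_\beta = -{S^\mu}_{\alpha\beta}k_\mu$ is, by the expression for $S$ above and the nullity of $k$, a sum of terms each carrying an undifferentiated $k_\alpha$ or $k_\beta$, hence it dies once either index is projected with $h$ --- and in every factor appearing in the final expression both indices are eventually contracted with an $h$. This yields exactly the stated identity.

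The main obstacle is not conceptual but combinatorial: verifying that, once the projections have removed the second-derivative and $\widehat\nabla\H$ contributions, the remaining first-order monomials coming from $2\widehat\nabla_{[\nu}{S^\mu}_{\beta]\alpha}$ and from $-2{S^\kappa}_{[\nu|\alpha|}{S^\mu}_{\beta]\kappa}$ assemble with the correct relative coefficients into precisely $\nabla_{[\nu|}k_{[\tau}\nabla_{\alpha]}k_{|\beta]} + \nabla_{[\nu}k_{\beta]}\nabla_{[\alpha}k_{\tau]}$. Care is needed because the same quadratic monomial is generated by several distinct terms of $S$, and because the two antisymmetrisations --- the outer one over the derivative indices and the inner one over the one-form indices --- must be disentangled. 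The cleanest bookkeeping is to raise $\delta$ with $g$ from the outset, so that ${\h^\delta}_\mu g^{\mu\rho}={\h^{\delta\rho}}$, and to apply ${\h^{\delta\rho}}k_\rho=0$ and ${\h^\alpha}_\lambda k_\alpha=0$ repeatedly to prune the expansion \emph{before} symmetrising.
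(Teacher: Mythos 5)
Your proposal is correct and follows essentially the same route as the paper: both apply the difference-of-curvatures identity \eqref{eqdiffriems} to the Kerr--Schild split $g = \hatg + \H\, k\otimes k$ and use the four $\h$-projections together with the nullity (and, in the paper, also the geodesy) of $k$ to annihilate every monomial carrying an undifferentiated $k$, leaving precisely the $\H\,(\nabla k)(\nabla k)$ survivors that assemble into the stated antisymmetrised combination. The only difference is cosmetic: you take $g^{(1)}=g$, $g^{(2)}=\hatg$ and compute relative to $\widehat\nabla$, which forces your (correctly justified) final conversion $\widehat\nabla k \to \nabla k$ under projection, whereas the paper takes $g^{(1)}=\hatg$, $g^{(2)}=g$ and works with $\nabla$ throughout, so the formula emerges directly with no conversion step.
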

\begin{proof}
We apply the formula for the difference of Riemann tensors \eqref{eqdiffriems} with $g^{(1)} = \hatg$ and $g^{(2)} = g$. 
 Setting $g = \hatg + \H k \otimes k$, the tensor $\Q$ reads
 \begin{equation}\label{eqSKS}
  {\Q^\mu}_{\alpha \beta} =  -\frac{1}{2} (\hatg^\sharp)^{\mu \nu}\lr{\nabla_\alpha (\H k_\beta k_\nu) + \nabla_\beta (\H k_\alpha k_\nu) - \nabla_\nu (\H k_\alpha k_\beta )}.
 \end{equation}
 Hence, 
 \begin{align}
  {\Q^\kappa}_{\nu \alpha} {\h^\nu}_\gamma {\h^\alpha}_\lambda = -\frac{1}{2} \H k^\kappa {\h^\nu}_\gamma {\h^\alpha}_\lambda (\nabla_\nu k_\alpha + \nabla_\alpha k_\nu)
 \end{align}
and since (recall that $k$ is null geodesic $k^\kappa \nabla_\kappa k_\beta = k^\kappa \nabla_\beta k_\kappa = 0$)
\begin{equation}
 k^\kappa {\h^\beta}_\sigma {\Q^\mu}_{\beta \kappa} = -\frac{1}{2}k^\kappa {\h^\beta}_\sigma (\hatg^\sharp)^{\mu \tau}(\nabla_\beta (\H k_\kappa k_\tau) + \nabla_\kappa (\H k_\beta k_\tau) - \nabla_\tau (\H k_\kappa k_\beta )) = 0,
\end{equation}
it follows
\begin{equation}
 2 {\Q^\kappa}_{[\nu | \alpha|}{{\Q^\mu}_{\beta] \kappa}} {\h^\nu}_\gamma {\h^\alpha}_\lambda {\h^\beta}_\sigma = 0.
\end{equation}
On the other hand
\begin{align}
 \nabla_\nu {\Q^\mu}_{\alpha \beta}  = & - \frac{1}{2} \nabla_\nu (\hatg^\sharp)^{\mu \tau}
 \lr{\nabla_\alpha (\H k_\beta k_\tau) + \nabla_\beta (\H k_\alpha k_\tau) - \nabla_\tau (\H k_\alpha k_\beta )},   \\
 & - \frac{1}{2}  (\hatg^\sharp)^{\mu \tau}
 \nabla_\nu \lr{\nabla_\alpha (\H k_\beta k_\tau) + \nabla_\beta (\H k_\alpha k_\tau) - \nabla_\tau (\H k_\alpha k_\beta )}.  \label{eq2ndline}
\end{align}
The first three terms in \eqref{eq2ndline} vanish when contracted with ${\h^\alpha}_\lambda {\h^\beta}_\sigma$ because, taking into account \eqref{eqgeodKS} and that $k$ is null geodesic,
\begin{align}
 &  \frac{1}{2} \nabla_\nu (\hatg^\sharp)^{\mu \tau}
 \lr{\nabla_\alpha (\H k_\beta k_\tau) + \nabla_\beta (\H k_\alpha k_\tau) - \nabla_\tau (\H k_\alpha k_\beta )}{\h^\alpha}_\lambda {\h^\beta}_\sigma 
  \\  & =     \frac{1}{2} \nabla_\nu (\H k^\mu k^\tau) k_\tau \lr{\nabla_\alpha k_\beta + \nabla_\beta k_\alpha}{\h^\alpha}_\lambda {\h^\beta}_\sigma   = \frac{1}{2} \H^2 k^\mu (\nabla_\nu k^\tau)k_\tau \lr{\nabla_\alpha k_\beta + \nabla_\beta k_\alpha}{\h^\alpha}_\lambda {\h^\beta}_\sigma  = 0.
\end{align}
We calculate the contraction of the last three terms in \eqref{eq2ndline} with $h$ four times. The expansion of each term gives
\begin{align}
{\h^\delta}_\mu {\h^\nu}_\gamma {\h^\alpha}_\lambda {\h^\beta}_\sigma (\hatg^\sharp)^{\mu \tau}\nabla_\nu \nabla_\alpha (\H k_\beta k_\tau) & = \h^{\delta\tau} {\h^\nu}_\gamma {\h^\alpha}_\lambda {\h^\beta}_\sigma \H \lr{\nabla_\nu k_\tau \nabla_\alpha k_\beta + \nabla_\nu k_\beta \nabla_\alpha k_\tau},\label{eqterm1}\\
{\h^\delta}_\mu {\h^\nu}_\gamma {\h^\alpha}_\lambda {\h^\beta}_\sigma (\hatg^\sharp)^{\mu \tau}\nabla_\nu \nabla_\beta (\H k_\alpha k_\tau) & = \h^{\delta\tau} {\h^\nu}_\gamma {\h^\alpha}_\lambda {\h^\beta}_\sigma \H \lr{\nabla_\nu k_\tau \nabla_\beta k_\alpha + \nabla_\nu k_\alpha \nabla_\beta k_\tau},\label{eqterm2}\\
{\h^\delta}_\mu {\h^\nu}_\gamma {\h^\alpha}_\lambda {\h^\beta}_\sigma (\hatg^\sharp)^{\mu \tau}\nabla_\nu \nabla_\tau (\H k_\alpha k_\beta) & = \h^{\delta\tau} {\h^\nu}_\gamma {\h^\alpha}_\lambda {\h^\beta}_\sigma \H \lr{\nabla_\nu k_\alpha \nabla_\tau k_\beta + \nabla_\nu k_\beta \nabla_\tau k_\alpha}.\label{eqterm3}
\end{align} 
Then, rearraging terms, 
\begin{align} 
 2 {\h^\delta}_\mu {\h^\nu}_\gamma {\h^\alpha}_\lambda {\h^\beta}_\sigma \nabla_{[\nu} {\Q^\mu}_{ \beta] \alpha}  = -2 \H \h^{\delta \tau } {\h^\nu}_\gamma {\h^\alpha}_\lambda {\h^\beta}_\sigma \lr{\nabla_{[\nu|} k_{[\tau}\nabla_{\alpha]}k_{|\beta]} + \nabla_{[\nu} k_{\beta]} \nabla_{[\alpha} k_{\tau]}},
\end{align}
and the Lemma follows from the identity  \eqref{eqdiffriems}.
\end{proof}
\noindent
Specifically for our purposes, Lemma \ref{lemmariems} yields
 \begin{align}
    {R^\mu}_{ \alpha \nu \beta} {\h^\nu}_\mu  {\h^\alpha}_{\lambda} {\h^\beta}_{\sigma} = {\widehat{R}^\mu}_{~~\alpha \nu \beta}  {\h^\nu}_\mu  {\h^\alpha}_{\lambda} {\h^\beta}_{\sigma} + O(\Om^n), \label{eqriemhhhghatg} 
  \end{align}
  so we do not have to take into account the tail order terms.
 To calculate  ${\widehat{R}^\mu}_{ ~~\alpha \nu \beta} {\h^\nu}_\mu  {\h^\alpha}_{\lambda} {\h^\beta}_{\sigma}$, we use the definition of the Weyl tensor \eqref{eqWeyl}, which for $\hatg$ vanishes, and contractions with ${\h}$ give: 
 \begin{equation}\label{eqhatRhhh1}
 {\widehat{R}^\mu}_{~~\alpha \nu \beta}{\h^\nu}_\mu  {\h^\alpha}_{\lambda} {\h^\beta}_{\sigma} =  \frac{n-3}{n-1} \widehat{R}_{\alpha \beta} {\h^\alpha}_\lambda {\h^\beta}_\sigma- \lr{ -\frac{{\widehat{R}^\mu}_{~\nu} {\h^\nu}_\mu}{n-1} + \frac{n-2}{n(n-1)}\widehat{R}} \h_{\lambda \sigma}.
\end{equation} 
We finally relate the term $\widehat{R}_{\alpha \beta} {\h^\alpha}_\lambda {\h^\beta}_\sigma$ with the same components of the Ricci tensor of de Sitter. To do that, we use equation \eqref{eqrelriccis}, substituting $g$ by $\hatg$ and $\tilg$ by $\tilg_{dS}$
\begin{equation}\label{eqrelriccisds}
  \widehat R_{\alpha \beta} - \widetilde R^{dS}_{\alpha \beta} = - \frac{n-1}{\Om} \widehat \nabla_\alpha \widehat \nabla_\beta \Om - \hatg_{\alpha \beta} \frac{\widehat \nabla_\mu \widehat \nabla^\mu \Om}{\Om} + \hatg_{\alpha \beta}\frac{n}{\Om^2} \widehat \nabla_\mu \Om \widehat \nabla^\mu \Om.
 \end{equation}
 We may now use that $\tilg_{dS}$ is Einstein to cancel out terms, but $\Om$ is geodesic w.r.t. to $g$, which means 
\begin{align}
 \hatg_{\alpha \beta}\frac{n}{\Om^2} \widehat \nabla_\mu \Om \widehat \nabla^\mu \Om & = \hatg_{\alpha \beta}\frac{n}{\Om^2} \lr{g^{\mu \nu} + \H k^\mu k^\nu }\widehat \nabla_\mu \Om \widehat \nabla_\nu \Om = -\lambda n  \hatg_{\alpha \beta}\frac{n}{\Om^2} + \hatg_{\alpha \beta}\frac{n}{\Om^2} \H k^\mu k^\nu \nabla_\mu \Om  \nabla_\nu \Om \\
 & = -\lambda n \tilg^{dS}_{\alpha \beta} + \hatg_{\alpha \beta}\frac{n  s^2}{\lambda \Om^2} \H \label{eqnormnabOm}
\end{align}
where we have used that $g^{\mu \nu} \nabla_\mu \Om \nabla_\nu \Om = -\lambda$  and $s = - \lambda^{-1/2} k^\mu \nabla_\mu \Om$. Now, since the de Sitter metric is Einstein, equation \eqref{eqrelriccisds} with \eqref{eqnormnabOm} gives
\begin{align}
 \widehat{R}_{\alpha \beta} {\h^\alpha}_\lambda {\h^\beta}_\sigma & =  - \frac{n-1}{\Om} (\widehat \nabla_\alpha \widehat \nabla_\beta \Om)  {\h^\alpha}_\lambda {\h^\beta}_\sigma + \lr{ - \frac{\widehat \nabla_\mu \widehat \nabla^\mu \Om}{\Om} + \frac{n   s^2}{\lambda\Om^2} \H }\h_{\lambda \sigma}. 
\end{align} 
The tensor $\widehat \nabla_\alpha \widehat \nabla_\beta \Om$ can be related with $\nabla_\alpha \nabla_\beta \Om$ using the difference of connections 
\begin{equation}
 \widehat \nabla_\alpha \widehat \nabla_\beta \Om = \nabla_\alpha \nabla_\beta \Om - {\Q^\mu}_{\alpha \beta}\nabla_\mu \Om
\end{equation}
with the tensor $\Q$ given in \eqref{eqSKS} and 
\begin{equation}
 {\Q^\mu}_{\alpha \beta} {\h^\alpha}_{\sigma} {\h^\beta}_{\sigma} = \frac{1}{2}\H k_\nu \hatg^{\mu \nu}(\nabla_\alpha k_\beta  + \nabla_\beta k_\alpha ){\h^\alpha}_\lambda {\h^\beta}_\sigma = O(\Om^n).
\end{equation}
Thus
\begin{align}
 \widehat{R}_{\alpha \beta} {\h^\alpha}_\lambda {\h^\beta}_\sigma & =  - \frac{n-1}{\Om} ( \nabla_\alpha  \nabla_\beta \Om)  {\h^\alpha}_\lambda {\h^\beta}_\sigma + \lr{ - \frac{\widehat \nabla_\mu \widehat \nabla^\mu \Om}{\Om} + \frac{n  s^2}{\lambda \Om^2} \H }\h_{\lambda \sigma} + O(\Om^n)
\end{align}
so that from equation \eqref{eqhatRhhh1} it follows 
\begin{align}
 {\widehat{R}^\mu}_{~~\alpha \nu \beta}{\h^\nu}_\mu  {\h^\alpha}_{(\lambda} {\h^\alpha}_{\sigma)} & =   - \frac{n-3}{\Om} ( \nabla_\alpha  \nabla_\beta \Om)  {\h^\alpha}_\lambda {\h^\beta}_\sigma \\ 
 & + \lr{  - \frac{n-3}{n-1} \frac{\widehat \nabla_\mu \widehat \nabla^\mu \Om}{\Om} + \frac{n-3}{n-1}\frac{n s^2}{\lambda \Om^2} \H   +\frac{{\widehat{R}^\mu}_{~\nu} {\h^\nu}_\mu}{n-1} - \frac{n-2}{n(n-1)}\widehat{R}} \h_{\lambda \sigma} + O(\Om^n).\label{eqhatRhhh2}
\end{align}
Combining equation \eqref{eqhatRhhh2} and \eqref{eqriemhhhghatg} and putting the result back in \eqref{eqRhhh2} , we have proven
\begin{align}\label{eqChhh} 
  {C^\mu}_{\alpha \nu \beta} {\h^\nu}_\mu  {\h^\alpha}_{\lambda} {\h^\alpha}_{\sigma} & =  \left( \frac{n-3}{n-1} \frac{\nabla_\mu \nabla^\mu \Om}{\Om} -\frac{{R^\mu}_\nu {\h^\nu}_\mu}{n-1} + \frac{n-2}{n(n-1)} R  \right. \\
   & \left. ~~ - \frac{n-3}{n-1} \frac{\widehat \nabla_\mu \widehat \nabla^\mu \Om}{\Om}   +\frac{{\widehat{R}^\mu}_{~\nu} {\h^\nu}_\mu}{n-1} - \frac{n-2}{n(n-1)}\widehat{R} + \frac{n-3}{n-1}\frac{n  s^2}{\lambda \Om^2} \H \right) \h_{\lambda \sigma} + O(\Om^n)
\end{align}
which is pure trace plus terms of order $n$. Now the following result is straightforward

\begin{proposition}
 Let $\tilg$ be a Kerr-Schild-de Sitter metric and $g = \Om^2 \tilg$ a geodesic conformal extension, with $\gamma = g|_\scri$ conformally flat by definition. Then the electric part of the rescaled Weyl tensor is
 \begin{equation}\label{eqTTDY3}
 D_{\alpha\beta} =  \frac{\kappa}{f^{n+2}} \lr{\Y_\alpha \Y_\beta - \frac{|\Y|^2_\gamma}{n} \gamma_{\alpha\beta}}
\end{equation}
where  $f$ is a function of $\scri$ defined by $(\Om^{-n}\H )|_\scri= \frac{2 f^{-n}}{\lambda n (n-2)}$ and $\Y = f\y$ is a CKVF of $\gamma$. Thus, the Kerr-Schild-de Sitter metrics are in the Kerr-de Sitter-like class.
\end{proposition}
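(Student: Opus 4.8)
The plan is to collect the results established throughout this section, since the Proposition is essentially their synthesis. First I would note that Lemma \ref{lemmadecayH} already fixes the electric part of the rescaled Weyl tensor as $D_{\alpha\beta} = \Am(\y_\alpha \y_\beta - \frac{1}{n}\gamma_{\alpha\beta})$, and that after the reparametrization $\Am =: \kappa/f^n$, $\Y_\alpha := f\y_\alpha$ introduced above, this is precisely \eqref{eqTTDY3}: indeed, since $\y$ is a $\gamma$-unit vector we have $|\Y|^2_\gamma = f^2$, so $\frac{f^2}{n}\gamma_{\alpha\beta} = \frac{|\Y|^2_\gamma}{n}\gamma_{\alpha\beta}$ and \eqref{eqTTDY2} and \eqref{eqTTDY3} coincide. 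Thus the only genuinely new content to prove is that $\Y = f\y$ is a conformal Killing vector field of $\gamma$; once this is granted, the data are of the form $(\gamma,\kappa D_\Y)$ with $\gamma$ conformally flat, and Definition \ref{defKdSlike} immediately places the spacetime in the Kerr-de Sitter-like class.

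To show that $\Y$ is a CKVF I would invoke Lemma \ref{lemmackvf}, which reduces the conformal Killing condition to the two equations \eqref{eqckvf}, namely $\nabscr_\alpha f = \frac{fL}{n-1}\y_\alpha - f a_\alpha$ and $\Pi_{\alpha\beta}=0$. The first of these is exactly the content of Lemma \ref{lemmaeqckvf1}, which followed purely from $D_\Y$ being divergence-free, so nothing further is required there. The whole weight of the argument therefore rests on establishing the second equation, $\Pi_{\alpha\beta}=0$.

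For this I would extract the $\Om^{n-1}$ order of the doubly $\h$-projected, symmetrized Codazzi equation \eqref{eqcodaz}. On the left-hand side, Corollary \ref{corolhscod} identifies this leading term with $(\mathcal{LHS})_{\lambda\sigma} = -\frac{1}{f^n}\Pi_{\lambda\sigma}$, a \emph{traceless} tensor. On the right-hand side I would use the algebraically special (multiple WAND) property, Proposition \ref{propksspec}, rewritten as \eqref{eqtermsweyl}, which splits $(\mathcal{RHS})_{\sigma\lambda}$ into the electric part contracted twice with $\h$, shown to be pure trace at this order in Lemma \ref{lemmaCuu}, plus the fully tangential Weyl components \eqref{eqChhh}, shown to be pure trace up to $O(\Om^n)$. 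Hence the $\Om^{n-1}$ order of the symmetrized right-hand side is \emph{pure trace}. Equating a traceless tensor to a pure-trace one forces both to vanish, giving $\Pi_{\lambda\sigma}=0$, and Lemma \ref{lemmackvf} then yields that $\Y$ is a CKVF, completing the proof. I expect this last step---the cancellation of the traceless part of the right-hand side, which hinges essentially on the Kerr-Schild spacetime being algebraically special---to be the true crux; the remainder is careful bookkeeping of the Fefferman--Graham expansion orders.
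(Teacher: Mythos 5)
Your proposal is correct and takes essentially the same route as the paper's proof: reduce everything via Lemma \ref{lemmadecayH} and the reparametrization $\Am = \kappa/f^n$, $\Y = f\y$ to showing $\Y$ is a CKVF, then use Lemma \ref{lemmackvf} together with Lemma \ref{lemmaeqckvf1} for the first condition and the $\Om^{n-1}$ order of the Codazzi equation \eqref{eqcodaz} for the second. Your key step---equating the traceless left-hand side $-f^{-n}\Pi_{\lambda\sigma}$ from Corollary \ref{corolhscod} with the pure-trace right-hand side obtained from the multiple-WAND property via \eqref{eqtermsweyl}, Lemma \ref{lemmaCuu} and \eqref{eqChhh}---is exactly the paper's argument.
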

\begin{proof}
By Lemma \ref{lemmadecayH}, we only have to prove that $\Y$ is a CKVF of $\gamma$.
 The RHS of the Codazzi equation \eqref{eqcodaz} is given by  \eqref{eqRHScod}. Combining equation \eqref{eqtermsweyl}, Lemma \ref{lemmaCuu} and equation \eqref{eqChhh}, the non-zero terms of order $\Om^{n-1}$ of ${C^\mu}_{ (\alpha| \nu |\beta)} \nor^\nu \y_\mu$ are pure trace. Thus, the traceless part of \eqref{eqRHScod} is identically zero. By Corollary \ref{corolhscod} this is precisely $0 = \Pi_{\alpha \beta}$. Now the Proposition follows from Lemma \ref{lemmackvf} and Lemma \ref{lemmaeqckvf1}.   
\end{proof}

\begin{remark}
Throughout this section we restricted $\scri$ to the set of point where $\H$ (and $k$) are not zero, because we assumed that $\kappa/f^n =\Am \neq 0$ to write down \eqref{eqTTDY2} (i.e. we assume that $f$ does not diverge). Now, we know that the vector $\Y$  is a CKVF of $\scri$, hence this vector is smooth everywhere. The set of points where it vanishes (i.e. where $f=0$) must be removed from $\scri$ as soon as the constant $\kappa$ in the data $D = \kappa D_{\Y}$ is not zero because the tensor $D_{\Y}$ is certainly singular at points where $\Y$ vanishes.
\end{remark}

\section{Kerr-Schild-de Sitter $\supset$ Kerr-de Sitter-like class}\label{secKSsupKdS}
 
 In this section we will prove the converse inclusion than in Section \ref{secKSsubKdS}, namely, that every spacetime in the Kerr-de Sitter-like class is Kerr-Schild-de Sitter. Our strategy is to explicitly construct every Kerr-de Sitter-like spacetime in Kerr-Schild form. To do that, we take advantage of the property that the data in the Kerr-de Sitter-like class depends solely on  the conformal class of the CKVF $\Y$ (Lemma \ref{lemmpropKdSl}) and a mutiplicative constant. Since the initial value problem is well-posed and each spacetime with data $(\Sigma,\gamma, \kappa D_{\Y} )$ is uniquely determined by $\kappa$ and the conformal class of $\Y$, we can infer all limits of spacetimes from the limits of data, which in turn are consequence of limits of conformal classes of CKVFs. The quotient space of conformal classes of CKVFs was studied in detail in \cite{marspeon21}. The next subsection is devoted to summarizing the results in \cite{marspeon21} that will be needed here.

 \subsection{Conformal classes of CKVFs}
 
As mentioned in Section \ref{secpre} (discussion after Theorem 1.1), conformal classes of CKVFs are of fundamental importance to understand the equivalences between data in the Kerr-de Sitter-like class. In this subsection we review briefly the particular case of conformally flat metrics and describe a method to determine whether two CKVF are in the same conformal class. We also describe several properties of the corresponding quotient space that will be needed later. Details for these results can be found in  \cite{Kdslike,marspeon20,marspeon21}.

It is well-known (e.g. \cite{IntroCFTschBook}) that the conformal diffeomorphisms of the $n$-sphere $\conf(\mathbb{S}^n)$,
can be constructed from the action of the orthochronous Lorentz group $ O^+(1,n+1)$ on the rays of the null cone of $\mink{1,n+1}$. From this, it follows the existence of a map $\phi: O^+(1,n+1) \rightarrow \confloc(\mathbb{E}^n),~ \Lambda \mapsto \phi_\Lambda$, which preserves the group law $\phi_\Lambda \circ \phi_{\Lambda'} = \phi_{\Lambda \cdot \Lambda '}$. At the level of Lie algebras, this implies a correspondence $\phi_\star : \skwend{\mink{1,n+1}} \rightarrow \ckill(\mathbb{E}^n)$, where $\skwend{\mink{1,n+1}}$ are skew-symmetric endomorphisms of $\mink{1,n+1}$, which are well-known to span the Lie algebra of $O^+(1,n+1)$, and $\ckill(\mathbb{E}^n)$ stands for the set of CKVFs of $\mathbb{E}^n$. The explicit form of the map between $\skwend{\mink{1,n+1}}$ and $\ckill(\mathbb{E}^n)$ may vary depending on various choices.
However, it can be proven (see \cite{Kdslike},\cite{marspeon21}) that for a fixed choice of flat metric  $\gamma_E \in [\gamma_E]$ 
and of Cartesian coordinates $\{ X^A\}_{A=1}^n$ of $\gamma_E$, as well as a basis $\{ e_\alpha \}_{\alpha = 0}^{n+1}$ of $\mink{1,n+1}$
with $e_0$ timelike, a map $\phi_{\star}$ can be constructed such that, when applied to an arbitrary skew-symmetric endomorphism $F$ written in the basis 
$e_{\alpha}$ as
\begin{equation}\label{skwmatrix}
 F = \begin{pmatrix}
  0 &  -\nu & -\ar^t + \br^t/2 \\
  -\nu & 0 & - \ar^t - \br^t/2 \\
  -\ar + \br/2 & \ar +\br/2 & -\pmb{\omega}
  \end{pmatrix}
\end{equation} 
gives the conformal Killing vector  $\phi_{\star}(F)$ on $\mathbb{E}^n$  given by
\begin{equation}  \label{eqCKVF}
\Yv  = \big(\br^A + \nu X^A + (\ar_B X^B)X^A - \frac{1}{2}(X_B X^B)\ar^A - {\omega^A}_B X^B \big) \partial_{X^A}.
\end{equation}
We shall also use the notation $F(\Yv)$ to indicate the skew-symmetric endomorphism which is associated to a CKVF $\Yv$ by the map $\phi_{\star}$.
The matrix \eqref{skwmatrix} is to be understood as follows. Letting $\alpha$ be column and $\beta$ row in this matrix, the endomorphism $F$ is given by
$F(e_{\beta} ) = F^{\alpha}{}_{\beta} e_{\alpha}$. In the matrix $\eqref{skwmatrix}$, $\ar,\br \in \mathbb{R}^n$ are column vectors with components $\ar^A, \br^A$ respectively, $t$ stands for their transpose (row vector), $\nu \in \mathbb{R}$ and ${\bm \omega}$ is $n\times n$ real skew-symmetric matrix of components $( \delta_{AC} {\omega^C}_A =:){\omega}_{AB} = -\omega_{BA}$.

 The map $\phi$ is a Lie algebra antihomomorphism, i.e. 
$[F(\Yv),F({\Yv}')] = -F([\Yv,{\Yv}'])$. In addition, for every $\Lambda \in O^+(1,n+1)$, it holds $\Lambda \cdot F(\Yv) = F (\phi_{\Lambda \star} (\Yv))$, where ``dot'' denotes the adjoint multiplication of matrices $\Lambda \cdot F(\Yv) = \Lambda F(\Yv) \Lambda^{-1}$. As a consequence, the classification of 
$\skwend{\mathbb{M}^{1,n+1}}$ up to $O^+(1,n+1)$ transformations is equivalent to the classification of CKVFs up to conformal transformations.

An invariant characterization of orbits in $\skwend{\mathbb{M}^{1,n+1}}/O^+(1,n+1)$ is achieved by giving the sufficient number of $O^+(1,n+1)$-invariant quantities. In \cite{Kdslike}, the invariants were chosen to be  the traces of even powers of $F(\Yv)$ and the rank of this endomorphism. 
In \cite{marspeon21}, an alternative characterization was given in terms of the eigenvalues of $F(\Yv)^2$ and the causal character of $\ker F(\Yv)$. Specifically, let $\mathcal{P}_{F^2}(-x)$ be the characteristic polynomial of $-F(\Yv)^2$ and define
\begin{equation}\label{defQF2}
\mathcal{Q}_{F^2}(x) := \lr{\mathcal{P}_{F^2}(-x)}^{1/2} \quad \mbox{($n$ even)}, \quad\quad \mathcal{Q}_{F^2}(x) := \lr{\frac{\mathcal{P}_{F^2}(-x)}{x}}^{1/2} \quad \mbox{($n$ odd)}.
\end{equation}
We introduce also the natural numbers
\begin{equation}\label{eqpq}
 p := \lrbrkt{\frac{n+1}{2}} -1,\quad\quad q := \lrbrkt{\frac{n}{2}},
\end{equation}
which are useful to provide unified expressions, independently on the parity of $n$. Note  that $q=p+1$ whenever $n$ is even, while $q = p$ when $n$ is odd. From the properties of $F^2(\Yv)$, it follows \cite{marspeon21} that 
$\mathcal{Q}_{F^2}(x)$ is a polynomial of degree $q+1$ with $q+1$ real roots counting multiplicity, with at most one of which negative. The classification result of equivalence classes of $F(\Yv)$ is given by the following proposition.
\begin{proposition}[\hspace{-0.025cm}\cite{marspeon21}]\label{defgammamu}
 Let $\mathrm{Roots}\lr{\mathcal{Q}_{F^2}}$ denote the set of roots of $\mathcal{Q}_{F^2}(x)$ repeated as many times as their multiplicity and arranged as follows
 \begin{enumerate}
  \item[a)] If $n$ odd, $\lrbrace{\sigma; \mu_1^2, \cdots, \mu_{p}^2} := \mathrm{Roots}\lr{\mathcal{Q}_{F^2}}$ sorted by $\sigma \geq \mu_1^2\geq \cdots \geq \mu_{p}^2$ if $\ker F(\Yv)$ is timelike, where in this case necessarily $\sigma > 0$. Otherwise $\mu_1^2\geq \cdots \geq \mu_{p}^2\geq 0 \geq \sigma$. 
  \item[b)] If $n$ even, 
  $\lrbrace{-\mu_t^2, \mu_s^2; \mu_1^2, \cdots, \mu_{p}^2} := \mathrm{Roots}\lr{\mathcal{Q}_{F^2}}$ sorted by $\mu_1^2\geq \cdots \geq \mu_{p}^2\geq\mu_s^2 = -\mu_t^2 = 0$ if $\ker F(\Yv)$ is degenerate. Otherwise  $\mu_s^2 \geq \mu_1^2\geq \cdots \geq \mu_{p}^2\geq 0 \geq  -\mu_t^2$, where either $\mu_s^2$ or $\mu_t^2$ are non-zero.  
 \end{enumerate}
 Then the parameters $\lrbrace{\sigma; \mu_1^2, \cdots, \mu_{p}^2}$ for $n$ odd and $\lrbrace{-\mu_t^2, \mu_s^2; \mu_1^2, \cdots, \mu_{p}^2}$  for $n$ even determine uniquely the class of $F(\Yv)$ up to $O^+(1,n+1)$ transformations and hence also the class of $\Yv$ up to conformal transformations.  
\end{proposition}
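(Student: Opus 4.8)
The plan is to treat this as a normal-form theorem for $\eta$-skew-symmetric endomorphisms of the Lorentzian space $\mink{1,n+1}$ (write $\eta$ for its metric, of signature $(1,n+1)$, and abbreviate $F := F(\Yv)$), and then verify that the data listed in cases (a) and (b) form a complete set of invariants of the adjoint $O^+(1,n+1)$-action. The starting observation is that skew-symmetry, $\eta(Fx,z)=-\eta(x,Fz)$, makes $F^2$ an $\eta$-self-adjoint operator that commutes with $F$. First I would perform the primary decomposition of $\mink{1,n+1}$ into the generalized eigenspaces of $F^2$. Self-adjointness guarantees that eigenspaces of distinct eigenvalues are mutually $\eta$-orthogonal and that $\eta$ restricts nondegenerately to each; since each is also $F$-invariant, the problem reduces to classifying $F$ on a single block attached to one eigenvalue of $F^2$.

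The second step is to identify the canonical block for each root of $\mathcal{Q}_{F^2}$. A strictly positive root $\mu_i^2$ of $-F^2$ gives $F$ the imaginary eigenvalues $\pm i\mu_i$ on a definite (spacelike) $2$-plane, on which $F$ acts as an infinitesimal rotation; these blocks produce $\mu_1^2,\dots,\mu_p^2$. A strictly negative root of $-F^2$ produces real eigenvalues on a timelike $2$-plane where $F$ acts as a boost; because $\eta$ has a single timelike direction, at most one such block can appear, and it supplies $\sigma<0$ for $n$ odd or $-\mu_t^2<0$ for $n$ even. The remaining, and genuinely delicate, sector is the zero eigenvalue: there the restriction of the self-adjoint $F^2$ need not be diagonalizable, and $F$ may carry a nilpotent (null-rotation) part.

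The crux of the argument, and the reason the causal character of $\ker F$ must be adjoined to the spectrum, lies entirely in this zero sector. The eigenvalue multiset of $-F^2$ cannot by itself separate the semisimple case (a non-degenerate kernel, with $F^2$ diagonalizable at $0$) from the non-semisimple one (a degenerate, i.e. null, kernel carrying a Jordan chain along a null direction). I would therefore show that prescribing whether $\ker F$ is timelike, spacelike or degenerate selects a unique admissible block pattern compatible with the fixed signature $(1,n+1)$: a timelike kernel excludes the boost block and leaves only rotations, which for $n$ odd is recorded by taking $\sigma>0$ to be the distinguished (largest) root; a non-degenerate spacelike kernel admits precisely the single boost block, recorded by $\sigma<0$ (resp. $-\mu_t^2<0$); and a degenerate kernel forces the null block, recorded by the collapse $\mu_s^2=-\mu_t^2=0$ for $n$ even. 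The orderings and strict inequalities in (a)--(b) are exactly the bookkeeping of these alternatives, and the passage from the normal form to an actual conjugating element can be kept inside the identity component $O^+(1,n+1)$ because each block possesses enough discrete isometries to absorb the sign and time-orientation ambiguities.

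Uniqueness then follows in both directions: the spectrum of $-F^2$ and the causal type of $\ker F$ are manifestly preserved by $O^+(1,n+1)$, so endomorphisms with different data are non-conjugate, while two endomorphisms sharing the data can be matched block by block by an explicit element of $O^+(1,n+1)$. The degree count $\deg\mathcal{Q}_{F^2}=q+1$, the evenness of the nonzero multiplicities (each arising from a $2$-plane), and the extraction of the factor $x$ in the odd-$n$ definition of $\mathcal{Q}_{F^2}$ all encode that $\dim\mink{1,n+1}=n+2$ and that the generic blocks come in two-dimensional units. I expect the principal obstacle to be the degenerate/null sector: organizing the nilpotent structure at the zero eigenvalue into a canonical null block, and proving rigorously that the single discrete datum \emph{causal character of $\ker F$} removes every remaining ambiguity.
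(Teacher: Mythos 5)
The paper you are checking against does not actually prove this proposition: it is imported verbatim from \cite{marspeon21}, so the relevant comparison is with that reference, which proceeds by deriving a single \emph{unified canonical form} for skew-symmetric endomorphisms of $\mink{1,n+1}$, valid simultaneously across all causal characters of $\ker F$, and then reads off the invariants (eigenvalues of $F^2$ together with the causal character of the kernel) from that form. Your proposal instead runs the classical route: $\eta$-orthogonal primary decomposition of the self-adjoint $F^2$, a dictionary between roots of $\mathcal{Q}_{F^2}$ and rotation/boost/null blocks, and block-by-block conjugation inside $O^+(1,n+1)$. That route is viable and arguably more elementary; what the unified form buys is a normal form whose parameters vary continuously across the causal-character strata, which is precisely what makes the limit analysis culminating in Proposition \ref{proplimts} natural later in the paper. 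Note also that the final clause of the proposition (the transfer from $O^+(1,n+1)$-classes of $F(\Yv)$ to conformal classes of $\Yv$) is immediate from the equivariance $\Lambda \cdot F(\Yv) = F(\phi_{\Lambda\star}(\Yv))$ recorded in the paper; you should state this step rather than leave it implicit.

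Two places where your sketch, as written, has genuine gaps. First, your primary decomposition silently assumes that all eigenvalues of $-F^2$ are real and that the nonzero ones occur with even multiplicity, equivalently that $\mathcal{P}_{F^2}(-x)$ is a perfect square (resp.\ $x$ times one), so that $\mathcal{Q}_{F^2}$ is a polynomial at all. For a merely $\eta$-self-adjoint operator on a Lorentzian space this is false in general; it must be derived from the skew-symmetry of $F$ itself (a genuinely complex eigenvalue of $F$ forces an $F$-invariant subspace carrying at least two timelike directions, impossible in signature $(1,n+1)$). The paper grants these facts as prior results of \cite{marspeon21}, so citing them is legitimate, but the dependence should be explicit. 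Second, the zero sector --- the crux you rightly flag but do not execute --- closes with two facts you should supply: for skew $F$ one has $\ker F = (\mathrm{im}\, F)^{\perp}$, whence $\ker F$ is degenerate if and only if $F$ has a nontrivial nilpotent part; and in signature $(1,n+1)$ the only admissible nontrivial nilpotent Jordan structure is a single block of size three, since even-size blocks must pair and each such pair, like a second size-three block, would force two independent mutually orthogonal null directions (hence proportional ones, a contradiction), while a block of size five or more would force a second timelike direction. The size-three block parameter is then normalizable, and $F(\Yv)\neq 0$ for $\Yv \neq 0$ accounts for the clause ``necessarily $\sigma>0$'' in the timelike-kernel odd-$n$ case, which you glossed over. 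With these ingredients supplied, your uniqueness argument and the absorption of sign ambiguities by spatial reflections inside the orthochronous group do go through.
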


Note that the parameters introduced in Proposition \ref{defgammamu} are sorted in order to remove permutation ambiguities, so that we can fix a good space of parameters, i.e. such that each point represents a unique element in  $\ckill(\mathbb{E}^n)/\confloc(\mathbb{E}^n)$. Consider, for each $F \in \skwend{\mink{1,n+1}}$, the assignment of parameters $(\sigma,\mu_1^2,\cdots, \mu_p^2)$ for $n$ odd and $(-\mu_t^2, \mu_s^2,\mu_1^2,\cdots, \mu_p^2)$ for $n$ even given in Proposition \ref{defgammamu}. Then the space of parameters for $n$ odd is
 \begin{align}
\mathcal{A}^{(odd)}  := & \{  \lr{\cc, \mu_1^2, \cdots, \mu_{p}^2} \in  {\mathbb{R}}^{p+1} \mid \sigma \geq \mu_1^2 \geq \cdots \geq \mu_{p}^2,\quad\mbox{with $\sigma >0$}\}  \\
 \bigcup   & \{  \lr{ \sigma, \mu_1^2, \cdots, \mu_{p}^2} \in  {\mathbb{R}}^{p+1} \mid  \mu_1^2 \geq \cdots \geq \mu_{p}^2\geq 0 \geq \sigma \}
\end{align}
and $n$ even 
\begin{align}
\mathcal{A}^{(even)}  := & \{  \lr{-\mu_t^2, \mu_s^2, \mu_1^2, \cdots, \mu_{p}^2} \in  \mathbb{R}^{p+2} \mid \mu_s^2 \geq \mu_1^2 \geq \cdots \geq \mu_{p}^2 \geq 0 \geq -\mu_t^2, \quad\mbox{with $\mu_s^2$ or $\mu_t^2 \neq 0$} \}  \\
 \bigcup   &  \{  \lr{-\mu_t^2, \mu_s^2, \mu_1^2, \cdots, \mu_{p}^2} \in  \mathbb{R}^{p+2} \mid \ \mu_1^2 \geq \cdots \geq \mu_{p}^2 \geq 0 = \mu_s^2 = -\mu_t^2\}.
\end{align}
In order to describe limits in the quotient space, also studied in \cite{marspeon21}, we define the following subsets of $\mathcal{A}^{(odd)}$
\begin{align*}
 \mathcal{R}_+^{(n,m)} & := \lrbrace{(\sigma, \mu_1^2, \cdots, \mu_{p}^2) \in \mathcal{A}^{(odd)} \mid \cc \geq \mu_1^2 \geq \cdots >\mu_{p-m+1}^2 = \cdots = \mu_{p}^2 = 0}, \\
 \mathcal{R}_-^{(n,m)} & := \lrbrace{(\sigma, \mu_1^2, \cdots, \mu_{p}^2) \in \mathcal{A}^{(odd)} \mid \cc <0, \mu_1^2 \geq \cdots > \mu_{p-m+1}^2  = \cdots = \mu_{p}^2 = 0},\\
 \mathcal{R}_0^{(n,m)} & := \lrbrace{(\sigma, \mu_1^2, \cdots, \mu_{p}^2) \in \mathcal{A}^{(odd)} \mid \cc = 0, \mu_1^2 \geq \cdots >\mu_{p-m+1}^2 = \cdots = \mu_{p}^2 = 0},
\end{align*}
and of $\mathcal{A}^{(even)}$
\begin{align*}
 \mathcal{R}_+^{(n,m)} & := \lrbrace{(-\mu_t^2, \mu_s^2, \mu_1^2, \cdots, \mu_{p}^2) \in \mathcal{A}^{(even)} \mid  -\mu_t^2 = 0, \mu_s^2 \geq \mu_1^2 \geq \cdots > \mu_{p-m+1}^2 = \cdots = \mu_{p}^2 = 0}, \\
 \mathcal{R}_-^{(n,m)} & := \lrbrace{(-\mu_t^2, \mu_s^2, \mu_1^2, \cdots, \mu_{p}^2) \in \mathcal{A}^{(even)} \mid -\mu_t^2 < 0,  \mu_s^2 \geq \mu_1^2 \geq \cdots > \mu_{p-m+1}^2 = \cdots = \mu_{p}^2 = 0},\\
 \mathcal{R}_0^{(n,m)} & := \lrbrace{(-\mu_t^2, \mu_s^2, \mu_1^2, \cdots, \mu_{p}^2) \in \mathcal{A}^{(even)} \mid -\mu_t^2 = \mu_s^2 = 0, \mu_1^2 \geq \cdots > \mu_{p-m+1}^2 = \cdots = \mu_{p}^2 = 0}.
\end{align*}
The notation $\mathcal{R}_\epsilon^{(n,m)}$ refers to the dimension of the space, $n+2$, the number of last-vanishing parameters $\{\mu_i^2\}$, $m$, and the causal character of $\ker F$, $\epsilon \in \{0, \pm\}$: $0$ if degenerate, $+$ if timelike and $-$ if spacelike or zero. We note that $\epsilon$ is also given by the sign of $\sigma$ in the odd case and closely related to the sign structure of the first two entries of the point $s \in \mathcal{A}^{(even)}$ when $n$ is even.

 The space of skew-symmetric endomorphisms $\skwend{\mink{1,n+1}}$ (being a finite dimensional vector space) carries a canonical topology (see e.g. 
\cite{conwayfunctanals}). The quotient space inherits a natural topology, called ``quotient topology" which is the finest one that makes the proyection a continuous map. In this topology it is sufficient for a sequence of points $s_i$  to have a limit $s$ that  there is a sequence of endomorphisms $F_i$ converging to $F$ with $F_i$ belonging to the class $s_i$ and $F$ belonging to the class $s$. 

By constructing explicit sequences in the total space associated to sequences in the quotient, the following structure of limits arises  \cite{marspeon21}

\begin{proposition}\label{proplimts}
For $n$ odd, $\mathcal{R}_+^{(n,0)}$ and $\mathcal{R}_-^{(n,0)}$ are open in the quotient topology. Moreover there exists sequences in $\mathcal{R}_-^{(n,0)}$ taking limit at every point $ \mathcal{A}^{(odd)}\backslash  \mathcal{R}^{(n,0)}_+$. 

For $n$ even, $\mathcal{R}_-^{(n,0)}$ is open in the quotient topology. Moreover there exists sequences in $\mathcal{R}_-^{(n,0)}$  taking limit at every point $ \mathcal{A}^{(even)}$ (i.e. $\mathcal{R}_-^{(n,0)}$ is dense in the quotient topology). 
 \end{proposition}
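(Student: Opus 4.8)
The plan is to treat the two kinds of assertions—openness of the distinguished strata and the existence of approximating sequences—by separate arguments, both resting on the description of the class of an $F\in\skwend{\mink{1,n+1}}$ through the roots of $\mathcal{Q}_{F^2}$ together with the causal character of $\ker F$ (Proposition \ref{defgammamu}). The two basic tools I would use are: (i) the roots of $\mathcal{Q}_{F^2}$, i.e. the unordered collection $\{\sigma,\mu_i^2\}$ (odd) or $\{-\mu_t^2,\mu_s^2,\mu_i^2\}$ (even), depend continuously on $F$ in the canonical topology of $\skwend{\mink{1,n+1}}$, since they are the eigenvalues of $-F^2$, which depend continuously on $F$; and (ii) the sufficient criterion for convergence in the quotient recalled just before the statement: if $F_j\to F$ in $\skwend{\mink{1,n+1}}$ with $F_j$ in the class $s_j$ and $F$ in the class $s$, then $s_j\to s$.

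For the openness claims I would argue on the preimages under the projection $\pi$, as a set is open in the quotient topology precisely when its preimage is open. Take first $n$ odd. Because $\dim\mink{1,n+1}=n+2$ is odd and $F$ is skew-adjoint, $\det F=0$ identically, so $-F^2$ carries a forced zero eigenvalue and $\dim\ker F\geq 1$. On $\pi^{-1}(\mathcal{R}_+^{(n,0)})$ all $\mu_i^2>0$ and $\sigma>0$, whence the zero eigenvalue has multiplicity exactly one; thus $\rank F=n+1$ is maximal, and by lower semicontinuity of the rank (and maximality) it stays constant nearby, so $\ker F$ is a single line varying continuously with $F$. By tool (i) the strict inequalities $\mu_i^2>0,\ \sigma>0$ persist under small perturbations, so the kernel remains one-dimensional and, moving continuously, keeps its timelike character; hence $\pi^{-1}(\mathcal{R}_+^{(n,0)})$ is open. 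The same reasoning with $\sigma<0$ (spacelike kernel) gives openness of $\mathcal{R}_-^{(n,0)}$. For $n$ even, $\mathcal{R}_-^{(n,0)}$ is characterised by the strict sign pattern ``one negative root $-\mu_t^2<0$, all remaining roots $>0$'' with no vanishing root; then $F^2$ is invertible, $\ker F=\{0\}$, and the whole condition is a strict sign condition on the eigenvalues of $-F^2$, manifestly open by tool (i). Note that $\mathcal{R}_+^{(n,0)}$ is \emph{not} open here because a timelike kernel forces the equality $\mu_t^2=0$, which is not preserved under perturbation.

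For the sequence statements I would realise every target class by a convenient representative $F_0$ built from the canonical block form of a skew endomorphism of $\mink{1,n+1}$: blocks acting on spacelike planes (producing the $\mu_i^2>0$), a block on the timelike–spacelike plane containing the kernel direction (governing $\sigma$, resp. the pair $-\mu_t^2,\mu_s^2$), and, in the degenerate cases, a null block. I then deform $F_0$ into a one-parameter family $F_\varepsilon$ by replacing every vanishing $\mu_i^2$ by a small $\varepsilon>0$ and adjusting the block controlling the causal type of the kernel so that for $\varepsilon>0$ the kernel becomes spacelike (odd, $\sigma<0$) or trivial (even), i.e. so that $\pi(F_\varepsilon)\in\mathcal{R}_-^{(n,0)}$. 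Since $F_\varepsilon\to F_0$ in $\skwend{\mink{1,n+1}}$, tool (ii) gives $\pi(F_\varepsilon)\to\pi(F_0)$. For $n$ odd this produces sequences limiting onto every point of $\mathcal{A}^{(odd)}\setminus\mathcal{R}_+^{(n,0)}$, while no point of the open set $\mathcal{R}_+^{(n,0)}$ can be a limit of points in the disjoint set $\mathcal{R}_-^{(n,0)}$, so these two facts together pin down exactly the attainable limits. For $n$ even the same construction reaches every point of $\mathcal{A}^{(even)}$, including the non-open strata with $\mu_t^2=0$, giving density.

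The main obstacle I anticipate is the explicit deformation in the degenerate cases, where $F_0$ contains a null (nilpotent) block, i.e. the target has $\sigma=0$ (odd) or $\mu_s^2=\mu_t^2=0$ (even). There one must exhibit a family $F_\varepsilon$ whose invariant planes degenerate to a null plane as $\varepsilon\to0$ while the complete set of roots of $\mathcal{Q}_{F_\varepsilon^2}$ stays strictly in the $\mathcal{R}_-^{(n,0)}$ pattern and converges, after sorting, to the prescribed values. Verifying that the \emph{ordered} parameters of $F_\varepsilon$ (not merely the unordered spectrum) converge to the sorted target, and that the causal character of $\ker F_\varepsilon$ remains the intended one along the whole family, is where the bulk of the careful bookkeeping lies.
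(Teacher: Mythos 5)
Your openness argument is correct, and it takes a genuinely different route from the paper's. The paper (in the remark following the proposition) does everything with one scalar invariant: the independent term $c_0(F)$ of the characteristic polynomial is continuous on $\skwend{\mink{1,n+1}}$ and constant on orbits, hence descends to a continuous map $[c_0]$ on the quotient, and since $c_0 = \sigma\mu_1^2\cdots\mu_p^2$ ($n$ odd) and $c_0 = -\mu_t^2\mu_s^2\mu_1^2\cdots\mu_p^2$ ($n$ even), the relevant strata are preimages of open half-lines. Your route — continuity of the unordered spectrum of $-F^2$, lower semicontinuity of the rank, and persistence of the causal character of the continuously varying kernel line — reaches the same conclusion, but the kernel-tracking is dispensable: by Proposition \ref{defgammamu} the sign pattern of the roots alone pins down the stratum (all roots positive forces a timelike kernel, since otherwise the root labelled $\sigma$ would be $\leq 0$; exactly one negative root forces that root to be $\sigma$, resp.\ $-\mu_t^2$). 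What your version buys is pointwise control of the kernel, which the statement does not need; what the paper's version buys is a one-line proof. Incidentally, your observation that $\mathcal{R}_+^{(n,0)}$ is \emph{not} open for $n$ even (since $\mu_t^2=0$ there) is correct and consistent with the density claim, whereas the paper's remark contains a slip at exactly this point (it asserts openness of $\mathcal{R}_+^{(n,0)}$ for $n$ even as the preimage of $(0,\infty)$; since $c_0\leq 0$ identically for $n$ even, this should read $\mathcal{R}_-^{(n,0)}$ and $(-\infty,0)$).

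The sequence half is where the proposal falls short. The paper itself does not reprove these statements — it imports the explicit sequences from \cite{marspeon21} — but a blind proof must actually deliver the families $F_\varepsilon$, and that is precisely the step you defer. The recipe ``replace each vanishing $\mu_i^2$ by a small $\varepsilon$ and adjust the block controlling the kernel'' is not yet a construction, and the hard cases resist its naive reading. For $n$ odd and a target in $\mathcal{R}_+^{(n,m)}$ with $m\geq 1$ (timelike kernel, $\sigma>0$, some $\mu_i^2=0$), the approximants must have $\sigma_\varepsilon<0$; by continuity of the root multiset, $\sigma_\varepsilon$ must converge to one of the target's \emph{vanishing} $\mu_i^2$'s while one of the approximating $\mu_i^2$'s converges to the target's $\sigma>0$ — a relabelling across the causal-character jump that no fixed block-by-block deformation with sorted parameters produces; in particular, simply making the zero $\mu$'s small and positive while keeping the canonical block structure leaves the kernel timelike and lands the approximants in $\mathcal{R}_+^{(n,0)}$, not $\mathcal{R}_-^{(n,0)}$. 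Similarly, for the degenerate targets ($\sigma=0$ odd, $\mu_s^2=\mu_t^2=0$ even) a family whose boost-block parameter merely tends to zero converges to a vanishing block, never to the nilpotent (null) block of the target; reaching $\mathcal{R}_0^{(n,m)}$ requires the $F$-invariant timelike plane of the approximants to degenerate onto a null plane, e.g.\ by conjugating with boosts that diverge as $\varepsilon\to 0$ at a rate tied to the eigenvalue, together with a check that the limiting endomorphism really represents the prescribed class. You correctly identify this as the main obstacle, but since exhibiting these families is the entire content of the limit claims, the proposal as written proves the openness assertions and reduces the density assertions to the construction in \cite{marspeon21} rather than proving them.
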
 
\begin{remark}
In \cite{marspeon21} it is not explicitly proven that $\mathcal{R}^{(n,0)}_+$ and $\mathcal{R}^{(n,0)}_-$ are 
open for $n$ odd and that $\mathcal{R}_-^{(n,0)}$ is open for $n$ even. We provide an argument:

\bigskip

Let $F \in \skwend{\mink{1,n+1}}$, $[F] \in \skwend{\mink{1,n+1}}/O^+(1,n+1)$ its class in the quotient and $\pi$ the canonical projection map $\pi: F \mapsto [F]$. The independent term of the characteristic polynomial is an invariant of the class $[F]$. Let $c_0$ be the function that maps $F$ into the independent term of its characteristic polynomial $c_0(F)$. This map is clearly continuous. Let also $[c_0]$ be the induced map in the quotient, i.e. the map satisfying $c_0 = [c_0] \circ \pi$. Then $[c_0]$ is also continuous (e.g. \cite{willardtopology}). Moreover \cite{marspeon21}, for $n$ even,  $c_0(F) = -\mu_t^2 \mu_s^2 \mu_1^2 \cdots \mu_p^2$ and, for $n$ odd,  $c_0(F) = \sigma \mu_1^2 \cdots \mu_p^2$. Thus when $n$ is odd $\mathcal{R}_+^{(n,0)}$ and $\mathcal{R}_-^{(n,0)}$ are open in $\skwend{\mink{1,n+1}}/O^+(1,n+1)$ as they are the preimage by $[c_0]$ of the open intervals $(0,\infty)$ and  $(-\infty,0)$ respectively. 
When $n$ is even $\mathcal{R}^{(n,0)}_+$ is also open because it is the preimage of the open interval $(0, \infty)$.

\end{remark}

 \subsection{Kerr-de Sitter and its limits at $\scri$}\label{seckdsandlims}

 The explicit form of the  metrics in the full Kerr-de Sitter-like class will be obtained via either limits or analytic extensions of the Kerr-de Sitter family of metrics in all dimensions as presented in \cite{Gibbons2005}. However, we introduce modifications in the coordinates of \cite{Gibbons2005} which make our analysis more direct. Namely, as the limits will be inferred from its data at $\scri$, it is convenient to give the metrics in coordinates such that, in the conformally extended space, the conformal factor vanishes at a finite value of the coordinates. We will also absorb some constants depending on the rotation parameters into the coordinates. This will allow us to perform several limits at once. Moreover, we give the metric already in Kerr-Schild form \eqref{eqKSmetrics}. 
 This will be useful to show that the limits also belong to the Kerr-Schild-de Sitter class.

 Consider an $(n+1)$-dimensional Lorentzian manifold and let $\{\m_i \}_{i=1}^{p+1}$ be a set of functions satisfying
 \begin{equation}\label{DEFmu}
  \sum\limits_{i=1}^{p+1}  \m_i^2 = 1,
\end{equation}
which define $p$ independent quantities (by default we choose the $p$ first ones $\{ \m_i\}_{i=1}^p$) which we will use as coordinates). From the definitions of $p$ and $q$ \eqref{eqpq} it follows that $n = p+q +1$ irrespectively of whether $n$ is even or odd. So, we must supplement the $p$ coordinates $\alpha_i$ with $q+2$ additional ones. The full set of coordinates will be denoted by $\{\rho, t, \{\m_i\}_{i=1}^{p+1}, \{\phi_i\}_{i=1}^q\}$ and it is subject to the constraint \eqref{DEFmu}.  
The $\m_i$s and $\phi_i$s are related to polar and azimuthal angles of the sphere respectively and they take values in $0 \leq \m_i \leq 1$ and $0 \leq \phi_i < 2 \pi$ for $i = 1, \cdots, q$ and (only when $n$ odd) $-1 \leq \m_{p+1} \leq 1$.  The remaining $\rho$ and $t$  lie in $0 \leq \rho < \lambda^{1/2}$ and $t \in \mathbb{R}$. The domain of definition of $\rho$ (which is the inverse of a ``radial'' coordinate) can be extended (across the Killing horizon) to $\rho > \lambda^{1/2}$, but this is unnecessary in this work since we are interested in regions near $\rho = 0$. In addition, we also introduce $q$ real rotation parameters $\{ a_i\}_{i=1}^q$, each one associated to a $\phi_i$. For notational reasons, it is useful to define a trivial parameter
$a_{p+1} =0$ in the case of $n$ odd.

Next, we give the explicit expressions of the Kerr-de Sitter family of metrics. In order for the reader to compare with the original form given in \cite{Gibbons2005}, our notation corresponds to $r := \rho^{-1}$ and we have renamed the $\mu_i$ in \cite{Gibbons2005} as $\alpha_i$.

In terms of the functions
\begin{equation}\label{eqdefWXi}
 W := \sum\limits_{i=1}^{p+1} \frac{\m_i^2}{1 + \lambda a_i^2}\quad \quad  \Xi := \sum\limits_{i=1}^{p+1} \frac{\m_i^2}{1 + \rho^2 a_i^2},  
\end{equation}
and 
\begin{equation}\label{eqPi}
 \Pi :=  \prod\limits_{j=1}^q (1 + \rho^2 a_j^2),
\end{equation}
the Kerr-de Sitter family of metrics \cite{Gibbons2005} is the Kerr-Schild metric
\begin{equation}\label{eqtilH}
 \tilg = \tilg_{dS} + \til\H \til k \otimes \til k,\quad\quad  \til \H = \frac{2M \rho^{n-2}}{ \Pi \Xi}\quad M\in \mathbb{R} 
\end{equation}
with
\begin{align}
  \tilg_{dS} & :=  - W \frac{(\rho^2 - \lambda )}{\rho^2} \dif t^2 + \frac{\Xi}{\rho^2 - \lambda } \frac{\dif \rho^2}{\rho^2} + \delta_{p,q} \frac{\dif \m_{p+1}^2}{\rho^2} + \sum_{i=1}^q \frac{1 + \rho^2 a_i^2}{\rho^2}\frac{ \lr{\dif \m_i^2 + \m_i^2 \dif \phi_i^2} }{1 + \lambda a_i^2} 
  \\
 & 
  + \frac{\lambda}{W \rho^2 (\rho^2- \lambda )}  \lr{\sum_{i= 1}^{p+1} \frac{\lr{1 + \rho^2 a_i^2} \m_i \dif \m_i}{1 + \lambda a_i^2}}^2  , \\
 \til k & :=  W \dif t - \frac{\Xi}{\rho^2 - \lambda } \dif \rho - \sum\limits_{i=1}^q \frac{a_i \m_i^2}{1 + \lambda a_i^2} \dif \phi_i.
\end{align}
The term $\delta_{p,q}$ only appears when $q = p$, i.e. when $n$ is odd. In the case of even $n$, all terms multiplying $\delta_{p,q}$ simply go away. Hence, the above equations  provide unified expressions for $(n+1)$-dimensional Kerr-de Sitter family of metrics for any parity of $n$.  Note that the introduction of the spurious rotation parameter $a_{p+1} \equiv 0$ is also necessary to achieve single form for the $n$ odd and $n$ even cases.

The above expressions simplify using the coordinates 
\begin{equation}\label{eqconsthatalpha}
 \hat \alpha_i :=  \frac{\alpha_i}{(1+\lambda a_i^2)^{1/2}} \quad\Longrightarrow \quad \sum_{i=1}^{p+1}  \alpha_i^2 = \sum_{i=1}^{p+1} (1 + \lambda a_i^2) \hat \alpha_i^2 = 1
\end{equation}
so that 
\begin{equation}\label{eqWXihats}
 W = \sum\limits_{i=1}^{p+1} \hat \m_i^2,\quad \quad  \Xi = \sum\limits_{i=1}^{p+1} \frac{1+\lambda a_i^2}{1 + \rho^2 a_i^2}\hat \m_i^2,
\end{equation}
and
\begin{align}
   \tilg_{dS}  & =  - W \frac{(\rho^2 - \lambda )}{\rho^2} \dif t^2 + \frac{\Xi}{\rho^2 - \lambda } \frac{\dif \rho^2}{\rho^2} + \delta_{p,q} \frac{\dif \hat\m_{p+1}^2}{\rho^2} + \sum_{i=1}^q \frac{1 + \rho^2 a_i^2}{\rho^2} \lr{\dif \hat\m_i^2 + \hat\m_i^2 \dif \phi_i^2}    \\ & ~~ + \frac{ (\rho^2- \lambda )}{ \lambda W \rho^2 }  \frac{\dif W^2}{4} , \label{eqgdSodd}\\
 \til k & =  W \dif t - \frac{\Xi}{\rho^2 - \lambda } \dif \rho - \sum\limits_{i=1}^q {a_i \hat\m_i^2} \dif \phi_i, \label{eqkodd}
\end{align}
where for \eqref{eqgdSodd} we have used the differential of \eqref{eqconsthatalpha}
\begin{align}
  & ~~ \sum_{i=1}^{p+1} (1 + \lambda a_i^2) \hat \alpha_i \dif\hat \alpha_i = 0 \\ \Longrightarrow & ~~ \sum_{i=1}^{p+1}  \lambda a_i^2 \hat \alpha_i \dif\hat \alpha_i = 
  - \sum_{i=1}^{p+1} \hat \alpha_i \dif\hat \alpha_i = -\frac{\dif W}{2}\\
  \Longrightarrow & ~~ \lr{\sum_{i= 1}^{p+1} \frac{\lr{1 + \rho^2 a_i^2} \m_i \dif \m_i}{1 + \lambda a_i^2}}^2 = \lr{\sum_{i= 1}^{p+1} {\lr{1 + \rho^2 a_i^2} \hat \m_i \dif \hat \m_i}}^2 = \lr{\frac{\rho^2-\lambda}{\lambda}}^2 \frac{\dif W^2}{4}.
\end{align}

The initial data of the Kerr-de Sitter family were calculated in \cite{marspeondata21} to be of the form $(\Sigma,\gamma,\kappa D_{\Y})$. so by definition (cf. Definition \ref{defKdSlike}) they belong to the (larger) class of Kerr-de Sitter-like initial data. The parameter $\kappa$ is given in terms of the mass parameter $M$ by 
\begin{equation}\label{eqkappa}
 \kappa := -n(n-2)M \lambda^ {-\frac{n}{2}}\in \mathbb{R}.
\end{equation}
The conformally flat boundary metric takes the explicit form
\begin{equation}
 \gamma = \tilg_{dS}\mid_\scri  = \lambda W  \dif t^2 + \delta_{p,q} {\dif \m_{p+1}^2} + \sum_{i=1}^q  \lr{\dif \hat \m_i^2 + \hat \m_i^2 \dif \phi_i^2} 
  - \frac{1}{W }  \frac{\dif W^2}{4} , \label{eqgamma}
\end{equation}
and the conformal Killing vector $\Yv$ is 
\begin{align}\label{eqYflat}
 \Yv =\frac{1}{\lambda} \partial_t - \sum^{q}_{i=1} a_i \partial_{\phi_i}. 
\end{align}
The conformal class of $\Y$ was determined in \cite{marspeon21} by finding explicitly a set of Cartesian coordinates for a flat metric $\gamma_E$ in the conformal class of the metric \eqref{eqgamma}. With the notation of Proposition \ref{defgammamu}, and after a suitable reordering of the rotational parameters $\{ a_i\}$, this conformal class is determined by the parameters $\{ \sigma = -\lambda^{-1}, \mu_i^2 = a_i^2\}$ for $n$ odd and $\{ -\mu_t^2 = -\lambda^{-1},~\mu_s^2 = a_1^2 , \mu_i^2 = a_{i+1}^2 \}$ for $n$ even. Observe that $\lambda$ is one of the parameters which determines the conformal class of $\Y$. This is a priori fixed by the Einstein equations, so it is not a freely specifiable parameter of the metric. However, under scalings of $\Y$,   $\sigma$ is also scaled with the same factor. From the structure of $D_\Y$ in \eqref{eqTTDY2}, we have the freedom of scaling $\Y$ and leave the data $\kappa D_\Y$ unaltered if we absorb the inverse (squared) scaling factor in $\con$, which is essentially the mass parameter of Kerr-de Sitter, therefore freely specifiable. In this way, we may cover the full domain defining the family $\mathcal{R}_-^{(n,0)}$.

Recall that from Lemma \ref{lemmpropKdSl}, each metric in the Kerr-de Sitter-like class is determined by the parameter $\kappa$ and
the conformal class of $\Y$. Thus, for a fixed value of $\kappa$, one can associate exactly one metric in the Kerr-de Sitter-like class to each point in $
\ckill(\mathbb{E}^n)/\confloc(\mathbb{E}^n)$. Moreover, the limits of regions in $
\ckill(\mathbb{E}^n)/\confloc(\mathbb{E}^n)$, must induce limits of data $(\Sigma, \gamma, \kappa D_\Y)$ which in turn, from the well-posedness of the Cauchy problem, also induce limit of spacetimes corresponding to such data. In this way, we can endow the space of metrics in the Kerr-de Sitter-like class with the topology of $
\ckill(\mathbb{E}^n)/\confloc(\mathbb{E}^n)$.
Now, the following result is immediate after Proposition \ref{proplimts} 
\begin{proposition}\label{propKdsopen}
 The conformal class of the Kerr-de Sitter family  with $m$ vanishing rotation parameters belongs to the region $\mathcal{R}_-^{(n,m)}$ with  $\sigma := -\lambda^{-1}$ and $\mu_i^2 := a_i^2$ for $n$ odd and $-\mu_t^2 := -\lambda^{-1},~\mu_s^2 := a_1^2$ and $\mu_i^2 := a_{i+1}^2$ for $n$ even. Thus, the Kerr-de Sitter family of metrics with all non-zero rotation parameters covers the whole $\mathcal{R}_-^{(n,0)}$. For $n$ even, Kerr-de Sitter family data and its limits cover all data in the Kerr-de Sitter-like class. 
\end{proposition}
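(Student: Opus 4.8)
The plan is to read the statement off the conformal-class invariants of the Kerr-de Sitter family recalled just above, and to reconcile them with the definitions of the regions $\mathcal{R}_\epsilon^{(n,m)}$. For the first assertion I would only match parameters. The invariants of $\Yv$ are $\{\sigma=-\lambda^{-1},\ \mu_i^2=a_i^2\}$ when $n$ is odd and $\{-\mu_t^2=-\lambda^{-1},\ \mu_s^2=a_1^2,\ \mu_i^2=a_{i+1}^2\}$ when $n$ is even. Because $\lambda>0$, the entry fixed by $\lambda$ is strictly negative in both cases, which by the sign criterion stated right after the definition of $\mathcal{R}_\epsilon^{(n,m)}$ places the class in the spacelike/zero-kernel branch, i.e. the ``$-$'' branch. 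After reordering the $\{a_i\}$ so that $a_1^2\geq a_2^2\geq\cdots$, the vanishing of exactly $m$ rotation parameters forces the last $m$ of the $\{\mu_i^2\}$ to vanish while $\mu_{p-m}^2>0$; this is precisely the defining condition of $\mathcal{R}_-^{(n,m)}$ inside $\mathcal{A}^{(odd)}$ (resp. $\mathcal{A}^{(even)}$), so the first claim follows.

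To show that the members with all $a_i\neq 0$ (the case $m=0$) cover the whole of $\mathcal{R}_-^{(n,0)}$, one must overcome the apparent obstruction that the entry $-\lambda^{-1}$ is fixed by the cosmological constant, so that the family seems to realize only the slice $\sigma=-\lambda^{-1}$ (resp. $\mu_t^2=\lambda^{-1}$) rather than all admissible negative values. I would resolve this with the scaling argument anticipated in the text. Under $\Yv\mapsto c\,\Yv$ with constant $c>0$ the endomorphism $F(\Yv)$ scales by $c$, hence every root of $\mathcal{Q}_{F^2}$, and in particular every invariant, scales by $c^2$; meanwhile \eqref{eqTTDY2} gives $D_{c\Yv}=c^{-n}D_{\Yv}$. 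Therefore the data $\kappa D_{\Yv}$ is left invariant by the simultaneous replacement $(\Yv,\kappa)\mapsto(c\,\Yv,\,c^{n}\kappa)$, and since $\kappa=-n(n-2)M\lambda^{-n/2}$ by \eqref{eqkappa} is proportional to the freely specifiable mass $M$, this replacement is realizable within the family. Concretely, given a target class in $\mathcal{R}_-^{(n,0)}$ I would set $c^2:=-\lambda\sigma>0$ ($n$ odd) or $c^2:=\lambda\mu_t^2>0$ ($n$ even), solve $a_i^2=\mu_i^2/c^2$ (together with $a_1^2=\mu_s^2/c^2$ in the even case), which are all strictly positive, and fix $M$ so that $\kappa$ matches; the resulting Kerr-de Sitter metric then carries exactly the prescribed data. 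Hence the image is all of $\mathcal{R}_-^{(n,0)}$.

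The last assertion follows by combining this with the topological input of Proposition \ref{proplimts}. For $n$ even that proposition asserts that $\mathcal{R}_-^{(n,0)}$ is dense in the quotient topology, so every point of $\mathcal{A}^{(even)}$---equivalently, by Proposition \ref{defgammamu}, every conformal class and hence every admissible Kerr-de Sitter-like data set---is a limit of classes in $\mathcal{R}_-^{(n,0)}$. Since those classes are exhausted by the Kerr-de Sitter family by the previous step, and since limits of data in the quotient topology induce limits of the associated spacetimes through the well-posedness of the asymptotic Cauchy problem (Theorem \ref{theoexistencedata} and Lemma \ref{lemmpropKdSl}), the Kerr-de Sitter data together with their limits exhaust the whole Kerr-de Sitter-like class when $n$ is even. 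I expect the genuinely delicate point to be the scaling step: one must verify that rescaling $\Yv$ is a legitimate operation at the level of \emph{data} (not of spacetimes), that the compensating factor $c^{n}$ falls on the free mass parameter rather than on the fixed $\lambda$, and that the resulting rotation parameters stay real and nonzero throughout $\mathcal{R}_-^{(n,0)}$; only once this is checked does the fixed value of $\lambda$ cease to restrict the reachable classes. (For $n$ odd the complementary region $\mathcal{R}_+^{(n,0)}$ is not reached by such limits, in accordance with Proposition \ref{proplimts}, and must be handled separately.)
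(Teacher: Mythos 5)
Your proposal is correct and follows essentially the same route as the paper: you read the invariants of the Kerr-de Sitter CKVF off the result of \cite{marspeon21} quoted just before the proposition, place them in $\mathcal{R}_-^{(n,m)}$ by the sign and ordering criteria of Proposition \ref{defgammamu}, use the scaling $(\Yv,\kappa)\mapsto(c\,\Yv,c^{n}\kappa)$ with $c^{n}$ absorbed into the freely specifiable mass to sweep the a priori fixed entry $-\lambda^{-1}$ over all admissible negative values, and then combine Proposition \ref{proplimts} with well-posedness of the asymptotic Cauchy problem for the $n$ even limits. The only difference is cosmetic: you make the scaling step explicit (the choice $c^{2}=-\lambda\sigma$, resp.\ $c^{2}=\lambda\mu_t^{2}$, and the reality and nonvanishing of the resulting $a_i$), which the paper treats as immediate in the paragraph preceding the proposition.
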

In the rest of the paper, we will construct all spacetime metrics in the Kerr-de Sitter-like class taking advantage of the topological structure given in Proposition \ref{proplimts}. 
The region $\mathcal{R}^{(n,0)}_+$ cannot be obtained as a limit from points in $\mathcal{R}^{(n,0)}_-$. Thus, the metrics corresponding to such data cannot be obtained as a limit of the Kerr-de Sitter family.
This family will be obtained by analytic extension of Kerr-de Sitter.

The spacetime limits will be inferred from limits of data as follows. Start with data corresponding to Kerr-de Sitter $(\Sigma,\gamma,\kappa D_\Y)$ in $\mathcal{R}^{(n,m)}_-$, and consider the uniparametric set of equivalent data $(\Sigma,\gamma_\param := \param^{-2} \gamma, \param^{n-2}\kappa D_{\Y})$ for a constant parameter $\param \in \mathbb{R}$. Scaling the following quantities as
\begin{equation}
 M_\param := M \param^{n},\quad\quad\Yv_\param := \param\Yv\quad\quad \Yf_\param:= \gamma_\param(\Y_\param,\cdot) = \param^{-1} \Yf,
\end{equation}
where recall $\Yf = \gamma(\Yv,\cdot)$, we have
\begin{equation}
 \param^{n-2}\kappa D_{\Y} =- \lambda^{\frac{2-n}{2}} \frac{n(n-2)}{|\Y|_\gamma^{n+2}} M \param^{n-2} \lr{\Yf \otimes \Yf - \frac{|\Y|_\gamma}{n} \gamma} = 
 -\lambda^{\frac{2-n}{2}} \frac{n(n-2)}{|\Y_\param|_{\gamma_\param}^{n+2}} M_\param  
 \lr{\Yf_\param \otimes \Yf_\param - \frac{|\Y_\param|^2_{\gamma_\param}}{n} \gamma_\param}.
\end{equation}
Thus, we obtain the uniparametric family of data $(\Sigma, \gamma_\param, \kappa_\param D_{\Y_\param})$, where  $\kappa_\param$ is given by \eqref{eqkappa} with mass parameter $M_\param$. 
As we shall describe, after a suitable rescaling of the coordinates and the rotation parameters, the data
$(\Sigma, \gamma_\param, \kappa_\param D_{\Y_\param})$ admits regular limits as $\param \rightarrow 0$, which are no longer equivalent to the original family, but are still in the Kerr-de Sitter-like class.

By Lemma \ref{lemmpropKdSl}, the limit data are uniquely determined by the limit mass $M' := \lim_{\param \rightarrow 0} M_\param$ and the conformal class of $\limY :=\lim_{\param \rightarrow 0} \Y_\param$.
In all cases, the scaling of the rotation parameters will be of the form
$a_i = \param^{-1} b_i$, where we still allow $b_i$ to smoothly depend on $\param$ (so they may vanish in the limit $\param \rightarrow 0$). For the CKVF itself, in the following subsections we distinguish the limits of the vector field 
\begin{equation}\label{eqxibeta}
 \Yv_\param = {\param} \lr{\frac{1}{\lambda} \partial_t - \sum^{q}_{i=1} \param^{-1} b_i \partial_{\phi_i} }
\end{equation}
 as $\param \rightarrow 0$ into two types, depending on whether or not the parameter $\param$ is absorbed in the $t$ coordinate by means of the change $t = \param t'$.  The limit performed with the coordinate $t'$ will be proven to correspond to the region $\mathcal{R}_0^{(n,m)}$, where $m$ is given by the number of vanishing $b_i$. 
  The limits with the $t$ coordinate unchanged will only be calculated in the $n$ even case and will be proven to lie in the region $\mathcal{R}_+^{(n,m)}$, where $m$ is given by the number of vanishing $b_i$. The reason why we calculate them only for $n$ even is because only in this case we may attain every point in every region $\mathcal{R}_+^{(n,m)}$ from $\mathcal{R}_-^{(n,0)}$ (cf. Proposition \ref{proplimts}). For the $n$ odd case we need to perform an analytic extension to obtain the spacetimes with data in $\mathcal{R}_+^{(n,m)}$.
 For any limit data at $\scri$, there is one corresponding spacetime, which from the well-possedness of the Cauchy problem, must be a limit of Kerr-de Sitter. In general, these limit spacetimes are obtained with the same changes than those performed at $\scri$ plus the redefinition $\rho ' = \param
\rho$, as we shall also explicitly demonstrate.

In all the situations, the term $\tilg_{dS}$ takes a well-defined limit independently of the term $\H k \otimes k$. Morever, we will show that, in all cases, $\tilg_{dS}$ and its derivatives up to second order depend continuously on $\param$. Consequently, the Riemann tensor of the limit metric $\tilg_{dS}' = \lim_{\param \rightarrow 0} \tilg_{dS}$ is the limit of the Riemann tensor of $\tilg$, i.e.
 \begin{align}\label{eqriemconstc}
 R'_{\alpha\beta\mu\nu} = \lim_{\param \rightarrow 0}  R_{\alpha\beta\mu\nu} & = \lambda \lim_{\param\rightarrow 0} \lr{ (\tilg_{dS})_{\alpha\mu}(\tilg_{dS})_{\beta\nu} - (\tilg_{dS})_{\alpha\nu} (\tilg_{dS})_{\beta\mu}}
 \\
 & =\lambda \lr{(\tilg'_{dS})_{\alpha\mu}(\tilg'_{dS})_{\beta\nu} - (\tilg'_{dS})_{\alpha\nu} (\tilg'_{dS})_{\beta\mu}}.
 \end{align}
Thus the background limit metric is still Einstein of constant curvature, therefore locally isometric to de Sitter.

As already mentioned, in the $n$ even case all spacetimes in the Kerr-de Sitter-like class are limits of the Kerr-de Sitter family. In the $n$ odd case,  the spacetimes corresponding to the regions $\mathcal{R}_+^{(n,m)}$ will be constructed by analytic continuation, and the rest of them as limits of Kerr-de Sitter. For given data, the corresponding spacetimes  will be assigned to a family depending on the region $\mathcal{R}_\epsilon^{(n,m)}$ to which the defining CKVF at $\scri$ belongs. In analogy with the $n=3$ case \cite{Kdslike}, these families will be called  {\it generalized $ \{ a_i  \rightarrow \infty\}$-limit Kerr-de Sitter} if $\Yv$ lies in $\mathcal{R}_0^ {(n,m)}$ (extending the definition \cite{limitkds}), or {\it generalized Wick-rotated Kerr-de Sitter} if $\Yv$ lies in $\mathcal{R}_+^ {(n,m+1)}$ (also by analogy with \cite{Kdslike}).

 \subsection{Limits $n$-even}\label{secnevenlims}

 We start by determining all limits of Kerr-de Sitter family in the $n$ even case.  In principle the limits can be performed in multiple ways. However, by the classification of conformal classes of CKVF described above it suffices to exhibit one limit for each case. 
 To obtain the spacetimes whose CKVF class at $\scri$ lies in $\mathcal{R}_+^{(n,m)}$, we will assume that the starting family has all its rotation parameters different from zero, i.e. that it belongs to the region $\mathcal{R}^{(n,0)}_-$. Similarly, to obtain those whose CKVF class lies in $\mathcal{R}^{(n,m)}_0$ we shall start from Kerr-de Sitter with exactly one rotation parameter equal to zero, i.e. whose CKVF is in $\mathcal{R}^{(n,1)}_-$. Obviously, all spacetimes in $\mathcal{R}^{(n,m)}_-$ are simply obtained by setting $m$ rotation parameters $a_i$ to zero, so there is no need to explicitly calculate any limit.

 \subsubsection{Generalized Wick-rotated}\label{subsecWReven}

 In this subsection we shall not absorb $\param$ in the coordinate $t$.  As mentioned in subsection \ref{seckdsandlims}, we will obtain in this way all spacetimes whose corresponding CKVF at $\scri$ lies in $\mathbb{R}_+^{(n,m)}$. We will call these {\it Wick-rotated-Kerr-de Sitter} family of spacetimes because in the $n$ odd case (cf. subsection \ref{secwicknodd}) they will actually be obtained by a Wick-rotation of Kerr-de Sitter.

 We start with a metric in the Kerr-de Sitter family, with every rotation parameter being non-zero and apply the redefinitions
  \begin{equation}
  \rho = \param \rho', \quad\quad \hat\m_i = \param \mlim_i \quad\quad a_i = \param^{-1} b_i, \quad \quad M = M' \param^{n}. 
 \end{equation}
 Observe that if any of the rotation parameters were zero, say $a_i=0$, then the scaling of $\hat{\alpha}_i = \param \mlim_i$ would not be allowed because \eqref{eqconsthatalpha} would imply that ${\mlim}_i$ is divergent in the limit $\param \rightarrow 0$. 
 The parameters $b_i$ are still allowed to depend smoothly\footnote{Sufficient differentiability is necessary in order to make sure that the background metric is de Sitter in the limit. W.l.o.g. we can assumme smooth dependence on $\param$ as we only want to allow vanishing values in the limit.} on $\param$, so that their limit at $\param$ may take the value zero. For notational simplicity we shall not include the dependence on $\param$. In particular, the limit at $\param \rightarrow 0$ will still be called $b_i$. The context will make clear the  intended meaning.

 In the limit $\param \rightarrow 0$, by \eqref{eqconsthatalpha} the coordinates $\{ \mlim_i \}$ satisfy
 \begin{equation}
 \sum_{i=1}^{p+1} \lambda b_i^2  \mlim^2_i = 1,
\end{equation} 
thus, at least one $b_i$ must be non-zero. Note that if all were zero, the limit vector field $\Y'  = \lim_{\param \rightarrow 0} \Y_\param$ would be identically zero, and we would at best fall outside the Kerr-de Sitter-like class.

 The function W goes to zero as $\param^2$ while $\Xi$ and $\Pi$ take finite and smooth limits (cf. \eqref{eqPi} and \eqref{eqWXihats}). We therefore introduce the following limit quantities
 \begin{equation}   
  W' := \lim_{\param \rightarrow 0}   \param^{-2} W = \sum\limits_{i=1}^{p+1} {{\mlim}_i^2},\quad \quad  \Xi' := \lim_{\param \rightarrow 0} \Xi = \sum\limits_{i=1}^{p+1} \frac{\lambda b_i^2\mlim^2_i}{1 + \rho'^2 b_i^2},\quad\quad  \Pi' = \lim_{\param \rightarrow 0} \Pi = \prod_{j=1}^q (1 + \rho'^2 b_j^2). 
 \end{equation}
On the other hand, by \eqref{eqkodd}, the terms of $\til k$ in $\dif \rho$ and $\dif \phi_i$ tend to zero with $\param$, while the term in $\dif t$ goes with $\param^{2}$. Hence we set
 \begin{equation}
  \til k' := \lim_{\param \rightarrow 0} \param^{-1} \til k =   \frac{\Xi'}{ \lambda } \dif \rho' - \sum\limits_{i=1}^q b_i {\mlim^2_i} \dif \phi_i,
 \end{equation}
and the redefinition of mass $M' = \param^{n} M$ absorbs the zero of $\til k \otimes \til k$ and that of $\rho^{n-2} = \param^{n-2} \rho'$ in $\til \H ~\tilde{k} \otimes \tilde{ k}$ (cf. \eqref{eqtilH}). Thus, the limit metric has the Kerr-Schild form  
\begin{equation}\label{eqWRKSform}
  \tilg' = \tilg'_{dS} + \til\H' \til k' \otimes \til k',\quad\quad \til\H' = \frac{2M' \rho'^{n-2}}{ \Pi' \Xi'}, \quad M'\in \mathbb{R}  
\end{equation}
with
\begin{align} \label{eqgdslimWR}
  \tilg'_{dS} & =   \frac{\lambda W'}{\rho'^2} \dif t^2 - \frac{\Xi'}{ \lambda } \frac{\dif \rho'^2}{\rho'^2}  + \sum_{i=1}^q \frac{1 + \rho'^2 b_i^2}{\rho'^2}{ \lr{\dif  \mlim^2_i +  \mlim^2_i \dif \phi_i^2} }
  -  \frac{ 1}{  W' \rho'^2 }  \frac{\dif W'^2}{4}.
\end{align}  
 One can easily check that the original de Sitter metric $\tilg_{dS}$ in \eqref{eqgdSodd}, written in primed coordinates is ${C}^2$ in $\param$. Hence, by the above argument $\tilg'_{dS}$ (cf. \eqref{eqriemconstc}), the limit metric \eqref{eqgdslimWR}, is (locally) isometric to de Sitter.

  Consider the conformal extension $g' = \rho'^2 \tilg'$. The boundary metric induced by $g'$ coincides with the one induced by $\tilg'_{dS}$, which is 
 \begin{equation}
  \gamma' = \rho'^2 \tilg_{dS}'|_\scri =\lambda W' \dif t^2  + \sum_{i=1}^q { \lr{\dif  \mlim^2_i +  \mlim^2_i \dif \phi_i^2} }{ } 
  - \frac{1}{W'  }  \frac{\dif W'^2}{4},
 \end{equation}
and coincides with the limit $\lim_{\param \rightarrow 0} \param^{-2} \gamma$ of $\gamma$ given by  \eqref{eqgamma} using the coordinates $\{ \mlim_i \}$. As $\tilg_{dS}$ is locally isometric to de Sitter, $\gamma'$ must be locally conformally flat. 

To calculate the electric part of the rescaled Weyl tensor, we use formula \eqref{eqexpweyl}, after which it follows
\begin{align}
 D = \rho'^{n-2} C_\perp |_\scri =  -\lambda n(n-2) M' \lr{\Y' \otimes \Y' - \frac{|\Y'|^2_{\gamma'}}{n} \gamma'},
\end{align}
where $\Y'$ is the projection of $\til k'$ onto $\scri$
\begin{align}
 \Yf' =  -\sum\limits_{i=1}^q {b_i   \mlim^2_i} \dif \phi_i ~\Longrightarrow~ \Yv' = -\sum\limits_{i=1}^q b_i \partial_{\phi_i}.
\end{align}
This is obviously a (conformal) Killing vector field of $\gamma'$. 
Therefore, the metric \eqref{eqWRKSform} is in the Kerr-de Sitter-like class. To calculate the conformal class of $\Yv$, we find an explicitly flat representative in $[\gamma']$. It is a matter of direct computation to check that the coordinate change\footnote{This coordinate change is inspired from the Kerr-de Sitter case, worked out in \cite{marspeondata21}.}
\begin{align}
 x_i = \frac{e^{\sqrt{\lambda} t}}{\sqrt{W'}}{\mlim_i} \cos \phi_i,
 \quad \quad 
 y_i = \frac{e^{\sqrt{\lambda} t}}{\sqrt{W'}} {\mlim_i} \sin \phi_i ,
\end{align}
brings the metric $\gamma'$ into the form
\begin{equation}
\gamma' = \frac{W'}{e^{2 \sqrt{\lambda} t}} \sum_{i=1}^q \lr{\dif x_i^2 + \dif y_i^2}.
\end{equation}
Hence $ \gamma_E :=e^{2 \sqrt{\lambda} t} W^{-1} \gamma$
is flat and $\Y'$ is in Cartesian coordinates $\{ x_i,y_i\}$:
\begin{equation}
 \Y' = -\sum\limits_{i=1}^q b_i (x_i \partial_{y_i} - y_i \partial_{x_i}). 
\end{equation}
Thus, $\Yv$ is the sum of generators of rotations within $q$ different orthogonal planes. Its corresponding skew-symmetric endomorphism of $\mathbb{M}^{1,n+1}$, with respect to an orthogonal unit basis $\{e_\alpha\}_{\alpha = 0}^{n+1}$ with $e_0$ timelike, can be directly calculated from \eqref{skwmatrix}:
\begin{align}
 F(\Yv) & = \left(
 \begin{array}{cc}
  0 & 0 \\ 0 & 0
 \end{array}
 \right)
  \bigoplus_{i=1}^{q} 
  \left(
 \begin{array}{cc}
  0 & -b_i \\ b_i & 0
 \end{array}
 \right) . \label{eqFWReven}
\end{align}
   The orthogonal sum of two-dimensional blocks is adapted to the decomposition 
   \begin{equation}
    \mathbb{M}^{1,n+1} = \Pi_0 \bigoplus_{i=1}^{q} \Pi_i
   \end{equation}
where $\Pi_0 = \spn{e_0,e_1}$ and $\Pi_i = \spn{e_{2i},e_{2i+1}}$ are $F$-invariant planes. The causal character of $\ker F(\Yv)$ is evidently timelike because $e_0 \in \ker F(\Yv)$ and the polynomial $\mathcal{Q}_{F^2}$ is also straightforwardly computable from the block form \eqref{eqFWReven}
\begin{equation}
\mathcal{Q}_{F^2}(x) = \prod_{i=1}^q (x - b_i^2).
\end{equation}
Then, permuting the indices $i$ so that the rotation parameters $b_i^2$ appear in decreasing order $b_1^2 \geq \cdots \geq b_q^2$, and applying Proposition \ref{defgammamu}, the conformal class of $\Yv$ is defined by the parameters 
\begin{equation}
 \{-\mu_t^2 = 0, \mu_s^2 = b_1^2; \mu_1^2 = b_2^2, \cdots, \mu_p^2 = b_q^2 \}.
\end{equation}
In consequence, for $b_i$s taking arbitrary values, this family covers every point in every region $\mathcal{R}^{(n,m)}_+$ of the quotient $\ckill(\mathbb{E}^n)/\confloc(\mathbb{E}^n)$, where $m$  is the number of vanishing $b_i$s.

\bigskip

\subsubsection{Generalized $\{a_i \rightarrow \infty \}$-limit Kerr-de Sitter.}\label{secainftyneven}
 
 In this subsection we perform the limits that cover the regions $\mathcal{R}^{(n,m)}_0$ of the quotient $\ckill(\mathbb{E}^n) / \newline \confloc(\mathbb{E}^n)$. In this case, the limits are achieved by absorbing $\param$ in the $t$ coordinate, i.e. defining $t' = \param^{-1} t$, so that the limit vector field $\Y' = \lim_{\param \rightarrow  0}\Yv_\param$ has a non-zero term in $\partial_{t'}$. It turns out that these limits lie in the Kerr-de Sitter-like class provided that the Kerr-de Sitter metric from which they are calculated have one rotation parameter vanishing. Otherwise the limit of the boundary metric is degenerate. Thus we will assume that $a_q = 0$. We name the limit spacetimes  obtained in this way $\{ a_i \rightarrow \infty\}$-{\it limit-Kerr-de Sitter} because the conformal class that characterizes them is similar to the $n=3$ case \cite{limitkds}.

 Consider the de Sitter metric \eqref{eqgdSodd} with the change of coordinates
 \begin{equation}
  \rho = \param \rho', \quad \quad t = \param t',\quad\quad \phi_{q} = \param \Phi, \quad\quad \hat\m_i = \param \mlim_i  \quad\quad  (i=1,\cdots,p),
 \end{equation}
  where note that the coordinate $\hat\m_q$ and the angles $\phi_i$ $(i=1, ...,p)$ remain unaltered. In addition, let us redefine the parameters 
  \begin{equation}
    M = M' \param^{n},\quad \quad a_i = \param^{-1} b_i\quad\quad(i=1,\cdots,p).
 \end{equation} 
 By \eqref{eqconsthatalpha}, the coordinates $\{\mlim_i, \hat\m_q \}$ satisfy in the limit $\param \rightarrow 0$: 
  \begin{equation}\label{eqredefalphaainfty}
 \hat\m_q^2  + \sum_{i=1}^p \lambda b_i^2  \mlim^2_i= 1.
\end{equation} 
The limits of $\Pi$ and $W$, $\Xi$ are obtained immediately from \eqref{eqPi} and  \eqref{eqWXihats} respectively. They are
 \begin{equation}\label{eqdefWXiainfty1}
 \Pi' = \lim_{\param \rightarrow 0} \Pi =\prod\limits_{j=1}^p (1 + \rho'^2 b_j^2),
\quad\quad
W' =\lim_{\param \rightarrow 0} W = \hat\m_{q}^2 ,\quad \quad  \Xi' = \lim_{\param \rightarrow 0} \Xi = \hat\alpha_{q}^2 + \sum\limits_{i=1}^{p} \frac{\lambda b_i^2 }{1 + \rho'^2 b_i^2}\mlim^2_i.
\end{equation}
In addition from \eqref{eqtilH} and \eqref{eqkodd} and the redefinitions above it follows
\begin{equation}\label{eqHkkainfty1}
 \til\H' \til k' \otimes \til k' :=\lim_{\param \rightarrow 0} \H \til k \otimes \til k = \underbrace{\frac{2 M'}{\Pi' \Xi'} \rho'^{n-2}}_{=:\til\H'} \Big( \underbrace{W' \dif t' + \frac{\Xi'}{ \lambda }  \dif \rho' - \sum\limits_{i=1}^p { b_i \mlim^2_i} \dif \phi_i}_{=:\til k'}\Big)^2.
\end{equation}
 
 Before taking the limit, we rewrite the de Sitter metric \eqref{eqgdSodd} in the new coordinates and separate
the terms multiplying $\dif \hat\m_{q}$
\begin{align}
  \tilg_{dS} & =  - W \frac{(\param^{2}\rho'^2 - \lambda )}{\rho'^2} \dif t'^2 + \frac{\Xi}{\param^{2}\rho'^2 - \lambda } \frac{\dif \rho'^2}{\rho'^2} + \frac{1}{ \param^{2} \rho'^2} \lr{\dif \hat\alpha_{q}^2 + \hat\alpha_q^2  \param^{2} \dif \Phi^2} + \sum_{i=1}^p \frac{1 + \rho'^2 b_i^2}{ \rho'^2}{ \lr{\dif \mlim^2_i + \mlim^2_i \dif \phi_i^2} }
   \\
  & 
   + \frac{(\param^{2}\rho'^2- \lambda )}{\lambda W \param^{2} \rho'^2 } \lr{\hat\m_q \dif\hat\m_{q} + \param^2 \sum_{i= 1}^{p}{ \mlim_i \dif  \mlim_i}}^2 , \label{eqgdSainfty1}
\end{align}
 with
 \begin{equation}\label{eqdefWXiainfty1}
 W = \hat\m_{q}^2 + \param^2\sum\limits_{i=1}^{p} {\mlim^2_i},\quad \quad  \Xi = \hat\alpha_{q}^2 + \sum\limits_{i=1}^{p} \frac{\param^2 + \lambda b_i^2 }{1 + \rho'^2 b_i^2}\mlim^2_i.
\end{equation}
Only the terms involving $\dif \hat \m_q$ are troublesome in the limit $\param \rightarrow 0$. Let us gather them to get 
\begin{align}
  g_{(\alpha_q)} & :=  \frac{1}{\rho'^2 \param^{2}}\dif \hat\m_{q}^2 +  \frac{(\param^{2}\rho'^2- \lambda )}{\lambda W \param^{2} \rho'^2 }\lr{\hat\m_q \dif\hat\m_{q} + \param^2 \sum_{i= 1}^{p}{ \mlim_i \dif \mlim_i}}^2 \\ 
   & = \frac{1}{\param^2\rho'^2} \lr{1 + \frac{ (\param^{2}\rho'^2- \lambda )\hat\m_q^2}{\lambda W } }\dif \hat\m_q^2\\ 
   & +  \frac{ (\param^{2}\rho'^2- \lambda ) }{\lambda W \param^2  \rho'^2 }\lr{ \param^4 \lr{ \sum_{i= 1}^{p}{\mlim_i \dif  \mlim_i}}^2 + 2 \param^2 \hat\m_q \dif \hat\m_q \lr{ \sum_{i= 1}^{p}{\mlim_i \dif \mlim_i}} }.\label{eqline2}
   \end{align}
Writting $W$ in coordinates $\{\beta_i,\hat \m_q \}$, the term in $\dif \hat\m_q^2$ takes the limit
\begin{align}
 \lim_{\param \rightarrow 0}\frac{1}{\param^2\rho'^2} \lr{1 + \frac{ (\param^{2}\rho'^2- \lambda )\hat\m_q^2}{\lambda W } }\dif \hat\m_q^2 = \lim_{\param \rightarrow 0} \frac{\lambda \param^2 \lr{\sum_{i=1}^p \beta_i^2} + \param^2 \rho'^2 \hat\m_q^2}{\param^2 \rho'^2 \lambda(\hat\m_q^2 + \param^2 \sum_{i=1}^p \beta_i^2)} \dif \hat \m_q^2 = \frac{1}{\lambda} + \frac{\sum_{i=1}^p \beta_i^2}{\rho'^2 \hat \m_q^2},
\end{align}
while the limit of the last two terms is direct
\begin{equation}
 \lim_{\param\rightarrow 0} \frac{ (\param^{2}\rho'^2- \lambda ) }{\lambda W \param^2  \rho'^2 }\lr{ \param^4 \lr{ \sum_{i= 1}^{p}{\mlim_i \dif  \mlim_i}}^2 + 2 \param^2 \hat\m_q \dif \hat\m_q \lr{ \sum_{i= 1}^{p}{\mlim_i \dif \mlim_i}} } = -\frac{ 2  \dif\hat\m_q}{\rho'^2 \hat\m_q} \lr{ \sum_{i= 1}^{p}{\mlim_i \dif  \mlim_i}}.
\end{equation}
Thus, the limit $\param \rightarrow 0$ of \eqref{eqgdSainfty1} is
\begin{align}
  \tilg'_{dS} & =    \frac{ \lambda\hat \m_q^2 }{\rho'^2} \dif t'^2 - \frac{\Xi'}{ \lambda } \frac{\dif \rho'^2}{\rho'^2} + \frac{\hat \m_q^2  \dif \Phi^2}{\rho'^2 }  + \sum_{i=1}^p \frac{1 + \rho'^2 b_i^2}{ \rho'^2}{ \lr{\dif \mlim^2_i + \mlim^2_i \dif \phi_i^2} }
  + \lr{\frac{1}{\lambda} + \frac{\sum_{i=1}^p \beta_i^2}{\rho'^2 \hat \m_q^2}} \dif\hat\m_q^2    -\frac{ 2  \dif\hat\m_q}{\rho'^2 \hat\m_q} \lr{ \sum_{i= 1}^{p}{\mlim_i \dif  \mlim_i}},
  \end{align}
where we have already substituted $W' = \hat\m_q ^2$. From the argument above (cf. \eqref{eqriemconstc}) $ \tilg'_{dS}$ is locally isometric to de Sitter. Thus, we have all the ingredients to build up the limit Kerr-Schild metric, namely
\begin{equation}
  \tilg' = \tilg'_{dS} + \til\H' \til k' \otimes \til k',\quad\quad \til\H' = \frac{2M' \rho'^{n-2}}{ \Pi' \Xi'}, \quad M'\in \mathbb{R}.
\end{equation}

\bigskip

We now calculate the asymptotic structure and verify that indeed, these spacetimes correspond to the regions $\mathcal{R}^{(n,m)}_0$ in the space of orbits. The boundary metric is 
\begin{align}
 \gamma' & = \rho'^2 \tilg'|_{\scri} = { \lambda\hat\m_q^2 } \dif t'^2  + {\hat\m_q^2  \dif \Phi^2}  + \sum_{i=1}^p { \lr{\dif \mlim^2_i + \mlim^2_i \dif \phi_i^2} } 
   +  \lr{ \sum_{i=1}^p  {\mlim^2_i}} \frac{\dif \hat\m_q^2 }{\hat\m_q^2}- \frac{2  \dif\hat\m_q} {\hat\m_q}  \lr{ \sum_{i= 1}^{p}{\mlim_i \dif  \mlim_i}}. \label{eqgammaainfty}
\end{align}
As usual,  the TT tensor $D_{\Y'}$ is directly calculated with equation  \eqref{eqexpweyl}
\begin{align}
 D = \rho'^{n-2} C_\perp |_\scri =  -\lambda n(n-2) M' \lr{\Y' \otimes \Y' - \frac{|\Y'|^2_{\gamma'}}{n} \gamma'},
\end{align}
where $\Y'$ is the projection of $\til k'$ onto $\scri$
\begin{align}
 \Yf' =   \hat\m_q^2 \dif t' - \sum\limits_{i=1}^p { b_i \mlim^2_i} \dif \phi_i ~\Longrightarrow~ \Y' =  \frac{1}{\lambda}\partial_{t'} - \sum\limits_{i=1}^p { b_i } \partial_{\phi_i}.
\end{align}
To calculate the conformal class of $\Y'$, we look for a flat representative in $[\gamma']$ written in Cartesian cordinates. It turns out to be useful to scale the coordinates $\{ \mlim_i \}_{i=1}^p$ as
\begin{equation}
 \til\mlim_i = \frac{\mlim_i}{\hat\alpha_q}.
\end{equation}
Replacing $\mlim_i = \til\mlim_i \hat\alpha_q$ and $\dif  \mlim_i = \hat\m_q \dif \til\mlim_i + \til\mlim_i \dif \hat\m_q$ in equation \eqref{eqgammaainfty}, all terms in $\dif \hat\m_q$ cancel out and we are left with the expression
\begin{align}
 \gamma'  = \hat\m^2_q \lr{ \lambda \dif t'^2  + { \dif \Phi^2}  + \sum_{i=1}^p { \lr{\dif {\til\mlim}_i^2 + {\til\mlim}_i^2 \dif \phi_i^2} } 
   }. \label{eqgammaainftyconflat}
\end{align}
This determines a flat representative $ \gamma_E := \hat\m^{-2}_q \gamma'$, where by  \eqref{eqredefalphaainfty}, $\hat\m_q$ is written explictly in terms of $\{\tilde \beta_i\}$ as $\hat\m_q^2 = (1 + \lambda \sum_{i=1}^p b_i^2 \tilde\beta_i^2)^{-1}$. The set $\{ \tcar := \lambda^{1/2} t', \Phi, x_i:=\til\mlim_i \cos \phi_i, y_i := \til\mlim_i \sin \phi_i \}$ define Cartesian coordinates for $\gamma'$, into  which vector field $\Y'$ reads
\begin{equation}
 \Y' =  \frac{1}{\lambda^{1/2}}\partial_\tcar - \sum\limits_{i=1}^p { b_i } (x_i \partial_{y_i} - y_i \partial_{x_i}),
\end{equation}
i.e. is the sum of translation along the coordinate $\tcar$ plus the sum of $p$ independent orthogonal rotations. Its correspoding skew-symmetric endomorphism  of $\mathbb{M}^{1,n+1}$ is by \eqref{skwmatrix} 
\begin{align}
 F(\Yv) & = \left(
 \begin{array}{cccc}
  0 & 0 & \frac{\lambda^{-1/2}}{2} & 0 \\ 
  0 & 0 & -\frac{\lambda^{-1/2}}{2} & 0 \\
   \frac{\lambda^{-1/2}}{2} & \frac{\lambda^{-1/2}}{2} & 0  & 0 \\
   0 & 0 & 0 & 0 
 \end{array}
 \right)
  \bigoplus_{i=1}^{p} 
  \left(
 \begin{array}{cc}
  0 & -b_i \\ b_i & 0
 \end{array}
 \right) \label{eqFainftyeven}
\end{align}
 in an orthogonal unit basis $\{e_\alpha\}_{\alpha = 0}^{n+1}$ with $e_0$ timelike. Similar to subsection \ref{subsecWReven}, the direct sum \eqref{eqFainftyeven} is adapted to the decomposition 
   \begin{equation}
    \mathbb{M}^{1,n+1} = \mink{1,3} \bigoplus_{i=1}^{p} \Pi_i,
   \end{equation}
where $\mink{1,3} = \spn{e_0,e_1,e_2,e_3}$ and $\Pi_i = \spn{e_{2(i+1)},e_{2(i+1)+1}}$ are $F$-invariant subspaces. The causal character of  $\ker F(\Y')$ is determined by the causal character of $\ker F(\Y')|_{\mink{1,3}}$, because every non-spacelike vector $v \in \ker F(\Y')$ must have non-zero projection $v_0 \in \mink{1,3}$ with $v_0 \in \ker F(\Y')|_{\mink{1,3}}$. It is immediate to calculate $\ker F(\Y')|_{\mink{1,3}} = \spn{e_0-e_1,e_3}$, where $e_0-e_1$ is a null vector in $\ker F(\Y')$, thus $\ker F(\Y')$ is degenerate. The polynomial $\mathcal{Q}_{F^2}$ is by direct calculation
\begin{equation}
 \mathcal{Q}_{F^2} = x^2 \prod_{i = 1}^p (x-b_i^2).
\end{equation}
This, by Proposition \ref{defgammamu}, gives the parameters for the conformal class of $\Y'$ 
\begin{equation}
 \{-\mu_t^2 = 0, \mu_s^2 = 0 ; \mu_1^2 = b_1^2, \cdots, \mu_p^2 = b_p^2 \}.
\end{equation}
This collection of conformal classes covers every point in every region $\mathcal{R}^{(n,m)}_0$, where $m$ is the number of zero $b_i$ parameters.

 \subsection{Limits $n$-odd}
 
  One major difference between the $n$ odd and even cases is that, only when $n$ is even the region 
 $\mathcal{R}_-^{(n,0)}$ (namely the portion corresponding to Kerr-de Sitter with none of the rotation parameters vanishing) 
admits limit in the whole of $\ckill(\mathbb{E}^n)/\confloc(\mathbb{E}^n)$ (cf. Proposition \ref{proplimts}). This is what allowed us to construct all spacetimes in the Kerr-de Sitter-like class directly as limits of Kerr-de Sitter in subsection \ref{secnevenlims}. In the $n$ odd case, no sequence in $\mathcal{R}^{(n,0)}_-$ takes limit at $\mathcal{R}^{(n,0)}_+$ and viceversa, because they are disjoint and open subspaces by Proposition \ref{proplimts}. In subsection \ref{secwicknodd} we deal with this issue by constructing, using analytic continuation of Kerr-de Sitter, the set spacetimes whose CKVF class  corresponds to $\mathcal{R}^{(n,0)}_+$. To do this, we define a Wick rotation in arbitrary $n+1$ even dimensions (generalizing the transformation in \cite{wickrot98}). We name the resulting family Wick-rotated-Kerr-de Sitter, in analogy with the $n=3$ case in \cite{Kdslike}. From these, all spacetimes in $\mathcal{R}^{(n,m)}_+$ can be obtained easily. Subection \ref{secainftynodd} is devoted to finding the spacetimes whose CKVF class corresponds to $\mathcal{R}^{(n,m)}_0$. These are obtained by performing limits to Kerr-de Sitter, similar to those in subsection \ref{secainftyneven}.

 \subsubsection{Generalized Wick-rotated}\label{secwicknodd}
 
 Let now $n$ be odd and
 let us consider the Kerr-de Sitter metric with none of the rotation parameters $a_i$ equal to zero. The generalization of the Wick rotation is given by the following complex coordinate transformation
 \begin{align}\label{eqwickrotcoord}
  t = i  t',\quad\quad \rho = i  \rho' \quad\quad \hat\m_i = i  \mlim_i \quad i= 1, \cdots, p,
 \end{align}
with $t', \rho', \mlim_i \in \mathbb{R}$, and the redefinition of parameters 
\begin{equation}\label{eqwickrotparam}
 a_i = - i b_i \quad \quad M =(-1)^{\frac{n+1}{2}} i M',\quad\quad M' \in  \mathbb{R}.
\end{equation}
Note that the only the first $p$ $\hat\m_i$ coordinates have been ``rotated''. Introducing  $\mlim_{p+1} := \hat\m_{p+1}$, \eqref{eqconsthatalpha} gives:
\begin{equation}\label{eqalphapWRodd}
 \mlim_{p+1}^2 - \sum_{i=1}^{p} (1 - \lambda b_i^2) \mlim^2_i = 1.
\end{equation}
By performing the Wick rotation \eqref{eqwickrotcoord}, the functions $W$, $\Xi$ in \eqref{eqWXihats} and $\Pi$ in \eqref{eqPi} are now redefined
 \begin{equation}\label{eqdefWXiWRodd}
  W' := \mlim^2_{p+1} - \sum\limits_{i=1}^{p}  \mlim^2_i \quad \quad \Xi' :=\mlim^2_{p+1} - \sum\limits_{i=1}^{p} \frac{1 - \lambda b_i^2}{1 +  \rho'^2 b_i^2}\mlim^2_i,
  \quad\quad \Pi'  =  \prod\limits_{j=1}^p (1 +  \rho'^2 b_j^2).
\end{equation}
 The spacetime metric is given by  
\begin{equation}\label{eqwickKSodd}
 \tilg' = \tilg'_{dS} + \til{\H}' \til k' \otimes \til k,\quad\quad \til{\H}' =\frac{2 M' \rho'^{n-2}}{\Pi' \Xi'}\quad M' \in \mathbb{R}
\end{equation}
with 
\begin{align}
  \tilg'_{dS} & =    W' \frac{(\rho'^2 + \lambda )}{\rho'^2} \dif t'^2 - \frac{\Xi'}{\rho'^2 + \lambda } \frac{\dif \rho'^2}{\rho'^2} -  \frac{\dif \mlim^2_{p+1}}{\rho'^2} \\ &  + \sum_{i=1}^p \frac{1 +  \rho'^2 b_i^2}{\rho'^2} \lr{\dif  \mlim^2_i + \mlim^2_i \dif \phi_i^2} 
  + \frac{ (\rho'^2+ \lambda )}{ \lambda W' \rho'^2 }  \frac{\dif W'^2}{4}, \label{eqgdSWRoddfin}\\
  \til k & : =    \lr{ W' \dif \til t' + \frac{\Xi'}{\rho'^2 + \lambda } \dif \rho' - \sum\limits_{i=1}^p {b_i \mlim^2_i} \dif \phi_i}.
\end{align}  
The domain of definition of the coordinates is $ t', \rho' \in \mathbb{R}$, and the $\phi_i \in [0, 2 \pi)$ are still angles. Moreover, $\beta^2_{p+1} >0$, and $\{\mlim_i \}_{i=1}^p$ are restricted to a sufficiently small neighbourhood of $\{ \mlim_i = 0 \}_{i=1}^p$ so that $ W', \Xi'$ are positive (see \eqref{eqdefWXiWRodd}). With this restriction of the coordinates, vanishing values of the $b_i$ parameters are allowed. 

\bigskip

 The signature is not necessarily preserved after a Wick rotation, so we still need to prove that the  Wick-rotated Kerr-de Sitter metrics are Lorentzian. They are obviously $\Lambda>0$-vacuum Einstein because we have only performed a (complex) change of coordinates. From the Einstein equations and positivity of the cosmological constant, it follows that the boundary metric is positive definite if and only if the spacetime metric is Lorentzian in a neighbourhood of $\scri$. In addition, note that the boundary metric induced by $\tilg'$ is the same as the one induced by $\tilg'_{dS}$. Moreover, $\tilg'_{dS}$ is clearly Einstein of constant curvature. Thus, proving that $\gamma'$ is positive definite, in turn, also proves that $\tilg'_{dS}$ is Lorentzian and therefore locally isometric to de Sitter.

The metric induced at $\scri$ is, directly from \eqref{eqgdSWRoddfin},
 \begin{align}
\gamma' 
  &  =   W'  \lambda  \dif t'^2  -  {\dif {\mlim}_{p+1}^2} + \sum_{i=1}^p { \lr{\dif  {\mlim}_i^2 +  {\mlim}_i^2 \dif \phi_i^2} }
  + \frac{1}{W'} \frac{\dif W'^2}{4}. \label{eqhatgamwr} 
 \end{align}
 The explicitly conformally flat form is obtained under the change of coordinates 
 \begin{equation}\label{eqtilal}
  \til\mlim_i = \frac{\mlim_i}{W'^{1/2}},\quad\quad i = 1, \cdots, p + 1
 \end{equation}
Observe that by redefining all the $p+1$ coordinates we now have
\begin{equation}
 W' = \mlim^2_{p+1} - \sum_{i=1}^p \mlim^2_i  = W' (\til\mlim^2_{p+1} - \sum_{i=1}^p \til\mlim^2_i ) ~ \Longrightarrow ~ \til\mlim^2_{p+1} - \sum_{i=1}^p \til\mlim^2_i = 1  
\end{equation}
and 
\begin{align}
 W' \til\mlim_{p+1}^2 & = \mlim^2_{p+1}  =   1 + \sum_{i=1}^p (1 - \lambda b_i^2) {\mlim}_i^2 =  1 + W' \sum_{i=1}^p (1 - \lambda b_i^2) {\til\mlim}_i^2 \\
  & \Longrightarrow ~ W' = \frac{1}{1 + \sum_{i=1}^p \lambda b_i^2 \til\mlim_i^2}.
\end{align}
Inserting the coordinate change \eqref{eqtilal} into \eqref{eqhatgamwr} gives
\begin{equation}
 \gamma' =  W' \lr{\lambda  \dif t'^2  -  {\dif {\til\mlim}_{p+1}^2} + \sum_{i=1}^p { \lr{\dif  {\til\mlim}_i^2 +  {\til\mlim}_i^2 \dif \phi_i^2} }}.
\end{equation}
From this expression it already follows that $\gamma'$ is Riemannian, because the restriction $\{t' = const.\}$ defines a spacelike hyperboloid in a $(p+1)$- dimensional Minkowski space. More specifically, let us introduce the parametrization 
\begin{equation}
 \til\mlim_{p+1} = \cosh\chi, \quad\quad \til\mlim_{i} = \nu_i \sinh\chi,\quad i = 1,\cdots, p,\quad\mbox{with}\quad \sum_{i=1}^p \nu_i^2 = 1,
\end{equation}
so that 
\begin{equation}
  \gamma' =  W' \lr{\lambda  \dif t'^2  +  {\dif \chi^2} + \sinh^2\chi~\gamma_{\mathbb{S}^{n-2}}},
\end{equation}
where
\begin{equation}
 \gamma_{\mathbb{S}^{n-2}} := \sum_{i=1}^p { \lr{\dif  {\nu}_i^2 +  {\nu}_i^2 \dif \phi_i^2} }|_{\sum_{i=1}^p \nu_i^2 = 1}
\end{equation}
is an $(n-2)$-dimensional spherical metric. 
 Finally, defining the coordinates
 \begin{equation}
  z := \frac{\sin\sqrt{\lambda} t'}{\cos\sqrt{\lambda} t' + \cosh\chi} \quad\quad x_i:=  \frac{\sinh\chi}{\cos\sqrt{\lambda} t' + \cosh\chi} \nu_i \cos\phi_i,\quad\quad y_i:=  \frac{\sinh\chi}{\cos\sqrt{\lambda} t' + \cosh\chi} \nu_i \sin\phi_i
 \end{equation}
 for $i = 1,\cdots, p$, one has
\begin{equation}
 \gamma_E := \frac{1}{ W'(\cos\sqrt{\lambda} t' + \cosh\chi)^2} \gamma' = \dif z^2 + \sum_{i=1}^p \lr{\dif x_i^2 + \dif y_i^2}  .
\end{equation}
Thus $\gamma_E$ is a flat representative $\gamma_E \in [\gamma']$ and $\{z, x_i,y_i \}$ are Cartesian coordinates of $\gamma_E$.

We continue by calculating the electric part of the rescaled Weyl tensor at $\scri$. As usual, the expression follows
from formula \eqref{eqexpweyl}. We give it first in coordinates $\{t',\rho', \mlim_i, \phi_i \}$ : 
\begin{align} 
 D_{\Y'} = \rho'^{n-2} C_\perp |_\scri =  - \lambda n(n-2)  M' \lr{ \Y' \otimes  \Y' - \frac{|\Y'|^2_{\gamma'}}{n}\gamma'},
\end{align}
where  $\Y'$ is the projection of $\til k'$ onto $\scri$
\begin{align}
  \Yf' =  W' \dif  t'  - \sum\limits_{i=1}^p {b_i  {\mlim}_i^2} \dif \phi_i ~\Longrightarrow~
  \Y' = \frac{1}{\lambda} \partial_{t'} - \sum\limits_{i=1}^q b_i \partial_{\phi_i}.
\end{align}
To express $\Y'$ in Cartesian coordinates $\{z, \{ x_i,y_i\}_{i=1}^p \}$, firstly observe
\begin{align}
 \frac{\partial z}{\partial t'} & = \sqrt{\lambda} ~\frac{\cos\sqrt{\lambda} t'(\cos\sqrt{\lambda} t'+ \cosh\chi) + \sin^2\sqrt{\lambda} t'}{(\cos\sqrt{\lambda} t'+ \cosh\chi)^2} = \sqrt{\lambda}\lr{\frac{1}{2} - \frac{1}{2} +  \frac{1 + \cos\sqrt{\lambda} t' \cosh\chi}{(\cos\sqrt{\lambda} t'+ \cosh\chi)^2}} \\
 & = 
 \frac{\sqrt{\lambda}}{2} + \frac{\sqrt{\lambda}}{2}\lr{ z^2 - \sum_{i= 1}^p (x_i^2 + y_i^2)},
\end{align}
and it is also straightforward that
\begin{equation}
 \frac{\partial x_i}{\partial t'} = \sqrt{\lambda} z x_i , \quad \quad\frac{\partial y_i}{\partial t'} =\sqrt{\lambda} z y_i .
\end{equation}
 Then 
\begin{align}
 \partial_t = \frac{\partial z}{\partial  t} \partial_z + \sum_{i=1}^p \lr{\frac{\partial x_i}{\partial  t} \partial_{x_i} + \frac{\partial y_i}{\partial  t} \partial_{y_i}} = \frac{\sqrt{\lambda}}{2}\lr{ 1 + z^2 - \sum_{i= 1}^p (x_i^2 + y_i^2)} \partial_z + \sqrt{\lambda} z \sum_{i=1}^p (x_i \partial_{x_i} + y_i \partial_{y_i})
\end{align}
and on the other hand
\begin{equation}
 \partial_{\phi_i} = \frac{\partial x_i}{\partial \phi_i} \partial_{x_i} + \frac{\partial y_i}{\partial \phi_i} \partial_{y_i} = x_i \partial_{y_i} - y_i \partial_{x_i}.
\end{equation}
Therefore 
\begin{equation}\label{eqCKVFWRodd}
 \Yv = \frac{1}{2\sqrt{\lambda}}\lr{ 1 + z^2 - \sum_{i= 1}^p (x_i^2 + y_i^2)} \partial_z +  \frac{z}{\sqrt{\lambda}} \sum_{i=1}^p (x_i \partial_{x_i} + y_i \partial_{y_i}) - \sum_{i=1}^p b_i(x_i \partial_{y_i} - y_i \partial_{x_i}).
\end{equation}
Denoting the coordinates as $\{ X^A \}_{A =1}^n := \{z, \{ x_i,y_i\}_{i=1}^p \}$, $\Y'$ is a CKVF with $\ar^A ={\delta^A}_1 \lambda^{-1/2} $, $\br^A =\ar^A/2$, plus a sum of orthogonal rotations with parameters $b_i$. The associated skew-symmetric endomorphism of $\mathbb{M}^{1,n+1}$ is directly computable from expression \eqref{eqCKVFWRodd}
and \eqref{skwmatrix}
\begin{align}
 F(\Yv) & = \left(
 \begin{array}{ccc}
   0 & 0 & -\frac{3\lambda^{-1/2}}{4}  \\
   0 & 0 & -\frac{5\lambda^{-1/2}}{4} \\
    -\frac{3\lambda^{-1/2}}{4}& \frac{5\lambda^{-1/2}}{4} & 0  
 \end{array}
 \right)
  \bigoplus_{i=1}^{p} 
  \left(
 \begin{array}{cc}
  0 & -b_i \\ b_i & 0
 \end{array}
 \right) . \label{eqFWRodd}
\end{align}
$F(\Yv)$ is referred to an orthogonal unit basis $\{e_\alpha\}_{\alpha = 0}^{n+1}$ with $e_0$ timelike and as in the previous sections the direct sum \eqref{eqFWRodd} is adapted to the decomposition 
   \begin{equation}
    \mathbb{M}^{1,n+1} = \mink{1,2} \bigoplus_{i=1}^{p} \Pi_i
   \end{equation}
where $\mink{1,2} = \spn{e_0,e_1,e_2}$ and $\Pi_i = \spn{e_{2 i+1},e_{2(i+1)}}$ are $F$-invariant subspaces. The causal character of $\ker F(\Yv)$ is straightforwardly determined by checking that $v : =  5 e_0 + 3 e_1 $ is timelike and that it belongs to $\ker F(\Yv)$. Thus $\ker F(\Yv)$ is timelike.  

On the other hand, the polynomial $\mathcal{Q}_{F^2}$ is 
\begin{equation}
 \mathcal{Q}_{F^2}(x) = (x- \frac{1}{\lambda}) \prod_{i = 1}^p (x-b_i^2) = \prod_{i = 0}^p (x-b_i^2)
\end{equation}
where for the last equality we have set $b_{0}^{2} := 1/\lambda$. Now let $\{ \til b_i\}_{i = 0}^{p}$ the parameters $b_i$ sorted in decreasing order $\til b_0^2 \geq \cdots \geq \til b_{p}^2$. 
Then  by Proposition \ref{defgammamu}, the conformal class of $\Yv$ is given by
$\{\sigma = \til b_0^2; \mu_1^2 = \til b_1^2, \cdots, \mu_p^2 = \til b_p^2 \}$. Note that the value of one of the parameters is $1/\sqrt{\lambda}$, so it is a priori fixed. To cover the whole the space of parameters $\mathcal{R}^{(n,m)}_+$, we must consider the scaling freedom of $\Y$, just like we explained in the case of Kerr-de Sitter. Taking this into account, this family of metrics covers every point in all the regions $\mathcal{R}^{(n,m)}_+$ in the space of conformal classes.

 \subsubsection{Generalized $\{a_i \rightarrow \infty \}$-limit Kerr-de Sitter.}\label{secainftynodd}
 
 In this subsection we calculate the remaining family of metrics which completes the Kerr-de Sitter-like class for $n$ odd, i.e. those corresponding to the regions $\mathcal{R}^{(n,m)}_0$ in the space of conformal classes.   Analogously  to the case of $n$ even (cf. subsection \ref{secainftyneven}), these are called generalized $\{a_i \rightarrow \infty \}$-limit Kerr-de Sitter, also extending the definition in \cite{limitkds}.

 Contrary to the $n$ even case, if $n$ is odd we obtain a good limit from Kerr-de Sitter with none of the rotation parameters initially vanishing. The reason is that having only $p$ non-vanishing rotation parameters $a_i = \param^{-1} b_i~(i=1,\cdots,p)$ (recall that $a_{p+1} = 0$ was defined for notational reasons) the function $W$ remains finite in the limit $\lim_{\param\rightarrow 0 } W = \m_{p+1}^2$ if we scale the first $p$ coordinates $\hat\alpha_i = \param \mlim_i$. Thus, $\gamma_\param = \param^{-2} \gamma$ and $\Y_\param$ both admit a finite limit $\param \rightarrow 0 $, as soon as the coordinate $t$ is rescaled to $t = \param t'$ (see subsection \ref{secainftyneven} for comparison).
 
  Consider the de Sitter metric \eqref{eqgdSodd} with the change of coordinates
 \begin{equation}\label{eqscale1}
  \rho = \param \rho', \quad \quad t = \param t', \quad\quad \hat\m_i = \param \mlim_i\quad\quad(i=1,\cdots,p),
 \end{equation}
 where notice that $\hat\m_{p+1}$ has not been scaled. Also consider the 
  redefinition of parameters 
  \begin{equation}\label{eqscale2}
    M = M' \param^{n},\quad \quad a_i = \param^{-1} b_i\quad\quad(i=1,\cdots,p).
 \end{equation}
Unlike in the $n$ even case, no $\phi$ angle is associated to $\hat\m_{p+1}$, so there is no need the rescale any of the $\phi_i$ coordinates. All calculations are analogous to those in subsection \ref{secainftyneven}, so we provide here less detail. 

First, the scaled coordinates $\{ \mlim_i \}_{i=1}^p$ and $\hat\m_{p+1}$ satisfy when $\param \rightarrow 0$
\begin{equation}\label{eqalphapWRodd2}
  \hat\m^2_{p + 1} + \sum_{i=1}^{p}  \lambda b_i^2 \mlim^2_i = 1.
\end{equation}
The functions $W$, $\Xi$ (cf. \eqref{eqWXihats}) and $\Pi$ (cf. \eqref{eqPi}) take the limit 
\begin{equation}
 W' := \lim_{\param \rightarrow 0} W
 =  \hat\m^2_{p+1},\quad \quad \Xi' := \lim_{\param \rightarrow 0 } \Xi :=\hat\m_{p+1}^2 + \sum\limits_{i=1}^{p} \frac{\lambda b_i^2 \mlim^2_i}{1 + \rho'^2 b_i^2},\quad \quad \Pi'  =  \prod\limits_{j=1}^p (1 +  \rho'^2 b_j^2). 
\end{equation}
The limit of the term $\til \H~ \til k \otimes \til k$ present no difficulties since the scalings defined in \eqref{eqscale1} and \eqref{eqscale2} compensate each other so that no divergences appear. Then
\begin{equation}\label{eqHkkainfty2}  
 \til\H' \til k' \otimes \til k' :=\lim_{\param \rightarrow 0} \H \til k \otimes \til k = \underbrace{\frac{2 M'}{\Pi' \Xi'} \rho'^{n-2}}_{=:\til\H'} \Big(\underbrace{ W' \dif t' + \frac{\Xi'}{ \lambda }  \dif \rho' - \sum\limits_{i=1}^p { b_i \mlim^2_i} \dif \phi_i}_{=:\til k'}\Big)^2.
\end{equation}
For the de Sitter background \eqref{eqgdSodd} a computation analogous to the case of $n$ even shows that the terms in $\dif \hat{\alpha}_{p+1}$ do not diverge. In fact, the limit of de Sitter as $\param \rightarrow 0$ is 
\begin{align}
 \tilg'_{dS} & =    \frac{ \lambda\hat \m_{p+1}^2 }{\rho'^2} \dif t'^2 - \frac{\Xi'}{ \lambda } \frac{\dif \rho'^2}{\rho'^2}  + \sum_{i=1}^p \frac{1 + \rho'^2 b_i^2}{ \rho'^2}{ \lr{\dif \mlim^2_i + \mlim^2_i \dif \phi_i^2} }
   + \lr{\frac{1}{\lambda} + \frac{\sum_{i=1}^p \beta_i^2}{\rho'^2 \hat \m_{p+1}^2}} \dif\hat\m_{p+1}^2     -\frac{2}{\rho'^2}\frac{  \dif\hat\m_{p+1}}{\hat\m_{p+1}}  \lr{ \sum_{i= 1}^{p}{\mlim_i \dif  \mlim_i}}. \label{eqgdSainfty2}
\end{align}
The limit metric is thus
\begin{equation}
  \tilg' = \tilg'_{dS} + \til\H' \til k' \otimes \til k',\quad\quad \til\H' = \frac{2M' \rho'^{n-2}}{ \Pi' \Xi'}, \quad M'\in \mathbb{R}. 
\end{equation}
In addition,  $\tilg_{dS}'$ must be locally isometric to de Sitter, because the metric $\tilg_{dS}$ is ${C}^2$ in $\param$ (up to and including $\zeta=0$) when written in the primed  coordinates.

\bigskip

We next analyze the asymptotic structure. First, the boundary metric
\begin{align}
 \gamma' &  = { \lambda\hat\m_{p+1}^2 } \dif t'^2    + \sum_{i=1}^p { \lr{\dif \mlim^2_i +  \mlim^2_i \dif \phi_i^2} } 
   +  \lr{ \sum_{i=1}^p  {\mlim_i^2}} \frac{\dif \hat\alpha_{p+1}^2 }{\hat\alpha_{p+1}^2}  -2\frac{\dif\hat\m_{p+1}}{\hat\m_{p+1}}  \lr{ \sum_{i= 1}^{p}{\mlim_i \dif  \mlim_i}} , \label{eqgammaainftyodd}
\end{align}
which is explicitly conformally flat in coordinates 
\begin{equation}
  \til\mlim_i = \frac{\mlim_i}{\hat \alpha_{p+1}},\quad\quad i = 1, \cdots, p 
\end{equation}
because
\begin{align}
 \gamma'   = \hat\alpha_{p+1}^2 \lr{ \lambda \dif t'^2    + \sum_{i=1}^p { \lr{\dif {\til\mlim}_i^2 + {\til\mlim}_i^2 \dif \phi_i^2} } 
   }. \label{eqgammaainftyconflat}
\end{align}
This also determines a flat representative $ \gamma_E := \hat\m_{p+1}^{-2} \gamma'$ with Cartesian coordinates $\{ \tcar := \sqrt{\lambda} t', x_i:=\til\mlim_i \cos \phi_i, y_i := \til\mlim_i \sin \phi_i \}$.

The electric part of the rescaled Weyl tensor $D$ follows from equation  \eqref{eqexpweyl}
\begin{align}
 D_{\Y'} = \rho^{n-2} C_\perp |_\scri =  -\lambda n(n-2) M' \lr{\Yf' \otimes \Yf' - \frac{|\Y'|^2_{\gamma'}}{n} \gamma'},
\end{align}
where
\begin{align}
 \Yf' =   \hat\m_{p+1}^2 \dif t' - \sum\limits_{i=1}^p { b_i \mlim_i^2} \dif \phi_i ~\Longrightarrow~ \Y' =  \frac{1}{\lambda}\partial_{t'} - \sum\limits_{i=1}^p { b_i } \partial_{\phi_i},
\end{align}
which in Cartesian coordinates is simply
\begin{equation}
 \Y' =  \frac{1}{\lambda^{1/2}}\partial_{\tcar} - \sum\limits_{i=1}^p { b_i } (x_i \partial_{y_i} - y_i \partial_{x_i}).
 \end{equation}
 Letting $\{ X^A\}_{A=1}^n := \{ \tcar, \{x_i,y_i\}_{i=1}^p \}$,  the skew-symmetric endomorphism of $\mathbb{M}^{1,n+1}$ associated to $\Y'$ is by \eqref{skwmatrix} 
\begin{align}
 F(\Yv) & = \left(
 \begin{array}{ccc}
  0 & 0 & \frac{\lambda^{-1/2}}{2}  \\ 
  0 & 0 & -\frac{\lambda^{-1/2}}{2}  \\
   \frac{\lambda^{-1/2}}{2} & \frac{\lambda^{-1/2}}{2} & 0 
 \end{array}
 \right)
  \bigoplus_{i=1}^{p} 
  \left(
 \begin{array}{cc}
  0 & -b_i \\ b_i & 0
 \end{array}
 \right) \label{eqFainftyodd}
\end{align}
 referred to an orthogonal unit basis $\{e_\alpha\}_{\alpha = 0}^{n+1}$ with $e_0$ timelike. The direct sum \eqref{eqFainftyodd} is adapted to the decomposition 
   \begin{equation}
    \mathbb{M}^{1,n+1} = \mink{1,2} \bigoplus_{i=1}^{p} \Pi_i
   \end{equation}
where $\mink{1,2} = \spn{e_0,e_1,e_2}$ and $\Pi_i = \spn{e_{2i+1},e_{2(i+1)}}$ are $F$-invariant subspaces. For analogous reasons than in the $n$ even case, $\ker F(\Y')$ is degenerate. The polynomial $\mathcal{Q}_{F^2}$ is 
\begin{equation}
 \mathcal{Q}_{F^2} =  x \prod_{i = 1}^p (x-b_i^2),
\end{equation}
and by Proposition \ref{defgammamu}, the parameters determining the conformal class of $\Y'$ are
\begin{equation}
 \{\sigma = 0 ; \mu_1^2 = b_1^2, \cdots, \mu_p^2 = b_p^2 \}.
\end{equation}
Hence, this set of conformal classes covers every point in every region $\mathcal{R}^{(n,m)}_0$.

 \newpage

 \appendix

 \section{Local conformal transformations and local conformal flatness}\label{seclocconflat}

 The conformal diffeomorphisms of a manifold $(\Sigma,\gamma)$ need not to be globally defined. This is well-known in the case of $\mathbb{E}^n$ (e.g. \cite{IntroCFTschBook}), where for every M\"obius transformation two points must be removed. This raises a difficulty for establishing conformal equivalences of global objects, such as global vector fields, because if $\phi$ is only defined in an open neighbourhood $\phi: \neigh \rightarrow \Sigma$, any conformal relation between vector fields must be restricted to $\neigh$ and $\phi(\neigh)$. 
 In the particular case of locally conformally flat manifolds we can use the conformal sphere as a reference to make these relations global.

 Following \cite{zhu94}, we define:
\begin{definition}\label{defconflat}
  A Riemannian $n$-manifold $(\Sigma,\gamma)$ is {\bf locally conformally flat} if there exists an open cover $\{ \neighp_\ia \}$ of $\Sigma$ and a collection of conformal maps $\{ \pc_\ia \}$ from $\neighp_\ia$ to the $n$-sphere, $\pc_ \ia:  \neighp_\ia \rightarrow \mathbb{S}^n$. The set of pairs $\{ \neighp_\ia , \pc_\ia \}$ is called a {\bf conformal cover}. A conformal cover  $\{ \neighp_\ia , \pc_\ia \}$ is said to be {\bf maximal} if every possible conformal map
   $\pc_b : \neighp_b \rightarrow \mathbb{S}^n$ of a domain $\neighp_b \subset \Sigma$ is contained in $\{ \neighp_\ia , \pc_\ia \}$.
\end{definition}

Observe that a maximal conformal cover $\{\neighp_a,\pc_a\}$ of $(\Sigma,\gamma)$ can always be constructed as the union of every conformal cover. It is also clear that the maximal cover is unique. Ee next prove that the maximal conformal cover provides a cover of the sphere:

\begin{lemma}\label{lemmaxcover}
Given the maximal conformal cover $\{\neighp_a,\pc_a\}$ of a locally conformally flat manifold $(\Sigma,\gamma)$, the images $\{\mathcal{W}_a := \neighp_a(\pc_a)\}$  are a cover of $\mathbb{S}^n$.
\end{lemma}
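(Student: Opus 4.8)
The plan is to exploit the maximality of the conformal cover together with the transitivity of the conformal group action on $\mathbb{S}^n$. The key structural observation I would establish first is that the family of images $\{\mathcal{W}_\ia\}$ is \emph{invariant} under the natural action of $\conf(\mathbb{S}^n)$ on the sphere. Indeed, let $(\neighp_\ia, \pc_\ia)$ be any member of the maximal cover and let $\Psi \in \conf(\mathbb{S}^n)$. Since $\pc_\ia : \neighp_\ia \to \mathbb{S}^n$ is conformal and $\Psi$ is a conformal diffeomorphism of $\mathbb{S}^n$, the composition $\Psi \circ \pc_\ia : \neighp_\ia \to \mathbb{S}^n$ is again a conformal map of a domain of $\Sigma$ into $\mathbb{S}^n$. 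By the maximality property of Definition \ref{defconflat}, this composition must itself belong to the cover, i.e.\ there is an index $\ib$ with $\neighp_\ib = \neighp_\ia$ and $\pc_\ib = \Psi \circ \pc_\ia$. Its image is then $\Psi(\mathcal{W}_\ia) = (\Psi \circ \pc_\ia)(\neighp_\ia) = \mathcal{W}_\ib$, so $\Psi(\mathcal{W}_\ia)$ reappears within the family $\{\mathcal{W}_\ia\}$.

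Granting this invariance, I would finish by transitivity. Since $\Sigma$ is nonempty and $\{\neighp_\ia\}$ covers it, at least one chart exists and hence at least one image $\mathcal{W}_\ia$ is nonempty, containing some point $P \in \mathbb{S}^n$. The rotation group $O(n+1)$ of the round sphere is a subgroup of $\conf(\mathbb{S}^n)$ and acts transitively on $\mathbb{S}^n$; therefore, given an arbitrary target $P' \in \mathbb{S}^n$, I can pick $\Psi \in O(n+1)$ with $\Psi(P) = P'$. By the invariance just proved, $\Psi(\mathcal{W}_\ia)$ equals some image $\mathcal{W}_\ib$ of the family, and it contains $P' = \Psi(P)$. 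As $P'$ was arbitrary, this shows that every point of $\mathbb{S}^n$ lies in some $\mathcal{W}_\ib$, which is precisely the claim $\bigcup_\ia \mathcal{W}_\ia = \mathbb{S}^n$.

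The hard part is not computational but conceptual: one must read the maximality clause of Definition \ref{defconflat} correctly, verifying that a conformal map obtained by post-composing a chart with a \emph{global} conformal diffeomorphism of $\mathbb{S}^n$ genuinely qualifies as a chart of the maximal cover, so that its image is guaranteed to lie back in $\{\mathcal{W}_\ia\}$. This is exactly what maximality is designed to ensure. Once this is accepted, the argument rests only on the two elementary facts that compositions of conformal maps are conformal and that $O(n+1)$ acts transitively on $\mathbb{S}^n$, so no further estimates or calculations are required.
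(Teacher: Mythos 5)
Your proof is correct and takes essentially the same route as the paper's: you post-compose a chart with an arbitrary $\Psi \in \conf(\mathbb{S}^n)$, invoke maximality to conclude that $\Psi \circ \pc_\ia$ is again a chart of the maximal cover, and then use transitivity of the conformal group (you via $O(n+1)$, the paper via its rotation subgroup) to sweep the image over all of $\mathbb{S}^n$. The only difference is presentational: you isolate the $\conf(\mathbb{S}^n)$-invariance of the family $\{\mathcal{W}_\ia\}$ as an explicit intermediate step, which the paper leaves implicit.
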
 
\begin{proof}
  The group of diffeomorphisms $\conf(\mathbb{S}^n)$ acts transitively on the sphere (note that it contains $SO(n)$). As a consequence,
  given any  $(\neighp_b,\pc_b ) \in\{\neighp_a,\pc_a \}$
  the set of all neighourboods $(\psi \circ \pc_b)(\neighp_b)$ generated with every $\psi \in \conf(\mathbb{S}^n)$ 
  covers $\mathbb{S}^n$. Now, since every $\pc_b' := \psi \circ \pc_b$ is a conformal map from $\neighp_b$ to $\mathbb{S}^n$, it must be contained in the maximal cover and the lemma follows.
\end{proof}

From now on, we shall assume that every locally conformally flat manifold $(\Sigma,\gamma)$ is endowed with its maximal conformal cover. Next, we define the local conformal transformations of $(\Sigma,\gamma)$ as follows
\begin{definition}\label{defconfloc}
  A map $\phi: \neigh \rightarrow \Sigma$, where $\neigh \subset \Sigma$
  is an open set,  is called a {\bf local diffeomorphism of $\Sigma$} if $\phi$ is a diffeomorphism of $\neigh$
 onto its image.
 The set $\bm{\confloc(\Sigma,\gamma)}$ is the set of local diffeomorphisms such that $\phi^\star(\restr{\gamma}{\phi(\neigh)}) = \restr{\omega^2 \gamma}{\neigh}$, for a positive smooth function on $\neigh$.  The set $\bm{\conf(\Sigma,\gamma)}$ is the set of global diffeomorphisms such that $\phi^\star(\restr{\gamma}{\phi(\Sigma)}) = \restr{\omega^2 \gamma}{\Sigma}$, for a positive smooth function on $\Sigma$
\end{definition}

\begin{remark}\label{remarkconflocS}
    In the following discussion, global extendability of the conformal transformations and CKVFs of the $n$-sphere will be key. This property is true for every conformal transformation and CKVF  of $\mathbb{S}^n$ and dimension $n>2$ \cite{blairconformal}. For $n=2$, $\mathbb{S}^2$ admits non-global conformal transformations, as an indirect consequence of its complex structure. In a locally conformally flat $2$-manifold $(\Sigma,\gamma)$, the non-global conformal transformations of $\mathbb{S}^2$ as well as the global conformal transformations $\conf(\mathbb{S}^2)$, induce transformations of $\confloc(\Sigma,\gamma)$ which are not a priori distinguishable. Nevertheless, this is a unique feature of $\mathbb{S}^2$ \cite{borelserre53} (see also \cite{blairconformal}), so to avoid this difficulty, we shall restrict ourselves to the $n > 2$ case here. 
 \end{remark}

\begin{figure}
  \begin{center}
   \psfrag{Cb}{$\chi_b$}
    \psfrag{Cbm}{$\chi_c^{-1}$}
     \psfrag{F}{$\phi$}
      \psfrag{S}{$\psi$}
    \includegraphics[scale=1]{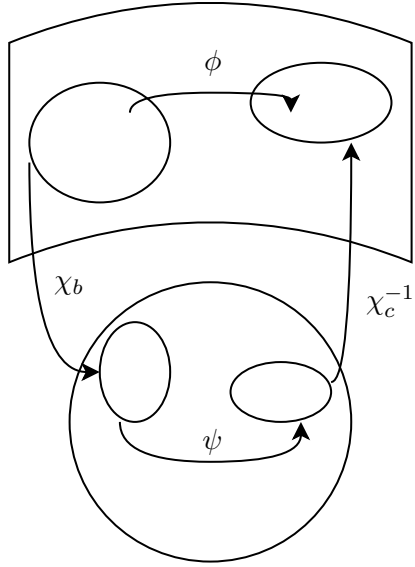}
    \caption{{\small \it Relation between elements $\phi \in \confloc(\Sigma,\gamma)$ and $\psi \in \conf(\mathbb{S}^n)$.}}
        \label{figconfloc}
  \end{center} 
  \end{figure}

Let $(\Sigma,\gamma)$ be a locally conformally flat manifold. We want to
establish a relationship between $\confloc(\Sigma,\gamma)$ and $\conf(\mathbb{S}^n)$.  We start by showing that 
to each transformation $\psi \in \conf(\mathbb{S}^n)$ one can associate maps $\phi \in \confloc(\Sigma,\gamma)$. Choose a conformal map $\pc_b: \neighp_b \rightarrow \mathbb{S}^n$. As a consequence of Lemma \ref{lemmaxcover} and restricting $\neighp_b$ if necessary, 
the image $\psf(\pc_b(\neighp_b))$ lies in the image of some map $\pc_c$ in the maximal cover. Then $\phi := \pc_c^{-1} \circ \psf \circ \pc_b$ is clearly an element of $\confloc(\Sigma,\gamma)$ (Figure \ref{figconfloc}). One can construct as many elements of $\confloc(\Sigma,\gamma)$ as conformal maps $\pc_c$ exist in the maximal cover satisfying the required condition. Also, observe that the transitivity property of $\conf(\mathbb{S}^n)$ induces a transitivity property  in $\confloc(\Sigma,\gamma)$ in the sense that the map $\phi$ can always be constructed so that $\phi(p) = q$ for any two given points $p,q \in \Sigma$. Indeed, such  $\phi$ can be constructed  from any $\psf  \in \conf(\mathbb{S}^n)$ satsifying $\psf(\pc_b(p)) = \pc_c(q)$.

Conversely, to each $\phi \in \confloc(\Sigma,\gamma)$ defined in a neighbourhood $\neigh \subset \Sigma$, one can locally associate a map $\psf$. Let $(\neighp_b,\pc_b)$ and $(\neighp_c,\pc_c)$ belong to the maximal conformal cover $\{\neighp_a,\pc_a \}$ of $\Sigma$ and satisfy that the intersections
$\neigh \cap \neighp_b$
and $\phi(\neigh) \cap \neighp_c$ are non-empty. The map $\psf := \pc_c \circ \phi \circ \pc_b^{-1}$ is well-defined 
on $\chi_b (\neigh \cap \neighp_b) \subset \mathbb{S}^n$ and it is obviously a conformal map. It is a fundamental property of the conformal group of the sphere \cite{IntroCFTschBook}, that there always exists a unique element
$\psf \in \conf(\mathbb{S}^n)$ extending the previous map to the whole sphere.
As before, the assingment of a given element $\phi \in \confloc(\Sigma,\gamma)$ to an element of $\conf(\mathbb{S}^n)$ is highly non-unique. Thus,  there is no one-to-one correspondence between
$\confloc(\Sigma,\gamma)$ and $\conf(\mathbb{S}^n)$. However, as we show next
this correspondence provides a useful notion of conformal class for (local)
conformal vector fields in $(\Sigma,\gamma)$.

Before doing this, let us discuss the case of  $\confloc(\mathbb{E}^n)$. Recall that a map $\phi \in \confloc(\mathbb{E}^n)$, constructed from a $\psi \in \conf(\mathbb{S}^n)$, defines a diffeomorphism in $\mathbb{E}^n$ minus two points (e.g. \cite{IntroCFTschBook}). This follows by relating $\mathbb{E}^n$ and $\mathbb{S}^n$ via the stereographic projection w.r.t. to a pole $N$ at a distance $d$, $St_N : \mathbb{S}^n \backslash \{ N \} \rightarrow \mathbb{E}^n$. 
 Then we define $\phi := St_N \circ \psi \circ St_N^{-1}$ for every $\psi \in \conf(\mathbb{S}^n)$ is a transformation of $\confloc(\mathbb{E}^n)$. When $\psi(N) \neq N$, the map $\phi$ is
a so-called M\"obius transformation \cite{blairconformal} and takes the explicit form
\begin{equation}\label{eqmob} 
 \phi(y) = K   \frac{R(y-p_1)}{|y-p_1|^2} + p_2,
\end{equation}
where $K \in \mathbb{R^+}$, $R$ is a rotation and $p_1,p_2$ are the points in $\mathbb{E}^n$ satisfying $N =  (\psi \circ St_N^{-1})(p_1)$ and $N =  (\psi^{-1} \circ St_N^{-1})(p_2)$. 
This defines a map $\mathbb{E}^n\backslash\{p_1\} \rightarrow \mathbb{E}^n\backslash\{p_2\}$. When $\psi(N) = N$, $\phi$ is an affine transformation of $\mathbb{E}^n$, hence a global diffeomorphism. Given an open set $\neigh \subset \mathbb{E}^n$ the elements of $\confloc(\Sigma,\gamma_{\mathbb{E}^n})$ whose domain
is  $\neigh$ are precisely  the collection of M\"obius transformations \eqref{eqmob} satisfying $p_1 , p_2 \in \mathbb{E}^{n} \setminus \neigh$, together with the set of all affine transformations.

We have now the necessary tools to define the notion of conformal class of CKVFs. 
We define:

\begin{definition}\label{defconfclass}
  Let $\Y$ be a CKVF of a Riemannian manifold $(\Sigma,\gamma)$. The {\bf conformal class} of $\Y$ is the set of all CKVFs $\Y'$ defined in some non-empty
  open neighbourhood $\neigh$ and generated by an element
  $\phi \in \confloc(\Sigma,\gamma)$ whose domain is $\neigh$. Specifically, it consists of all fields $\phi_\star(\restr{\Y}{\neigh}) = \restr{\Y'}{\phi(\neigh)}$.
A conformal class is said to be {\bf global} if $\neigh = \phi(\neigh) = \Sigma$.
\end{definition}

This definition is local, and nothing guarantees that $\xi'$ can be extended to
a global CKVF in $\Sigma$. However, when $\Sigma$ is locally conformally flat, we can show that there is a precise sense in which this local conformal class can be put in a one-to-one correspondance with a global conformal class in the sphere. We do this next.

Let $(\Sigma,\gamma)$ be a locally conformally flat manifold and a global CKVF $\Y$. Let $\Y'$ be an element of the conformal class of $\Y$ and let $\phi\in \confloc(\Sigma,\gamma)$ be the map relating them, defined in a neighbourhood $\neigh \subset  \Sigma$. Let also $(\neighp_b, \pc_b)$ and $(\neighp_c,\pc_c)$ be pairs in the maximal conformal cover of $(\Sigma,\gamma)$ with non-empty intersections $\neigh \cap \neighp_b$ and $\phi(\neigh) \cap \neighp_c$. Denote their images as $\mathcal{W}_b = \pc_b(\neigh \cap \neighp_b)$ and $\mathcal{W}_c = \pc_c(\phi(\neigh) \cap \neighp_c)$. One can locally assign CKVFs of $\mathbb{S}^n$ in $\mathcal{W}_b$ and $\mathcal{W}_c$, through the maps  $\pc_b$ and $\pc_c$, i.e. $\zeta := \pc_{b \star}(\Y)$ and $\zeta':=\pc_{c \star}(\Y')$.
The sphere being simply connected, it follows easily that 
$\zeta, \zeta'$ extend uniquely to global CKVFs in the sphere (as each one of them is the generator of a unique $\psi\in \conf(\mathbb{S}^n)$ \cite{IntroCFTschBook}).
The vector fields $\zeta, \zeta'$ are locally related by the map $\psf := \pc_c \circ \phi \circ \pc_b^{-1}$, which  obviously satisfies $\psf \in \conf(\mathbb{S}^n)$ and we have already mentioned that $\psf$ extends to an element in
$\conf(\mathbb{S}^n)$. 
The relation $\psf_\star(\zeta) = \zeta'$ is global because  $\psf_\star(\zeta)$
is a CKVF that equals $\zeta'$ in $\mathcal{W}_c$, so it must equal $\zeta'$ everywhere, by the uniqueness of extensions of CKVFs on the sphere.

The vector field $\zeta$, associated to a given global CKVF $\Y$ of $(\Sigma,\gamma)$, depends on the element $(\neighp_b,\pc_b)$ of the maximal cover used to define it. However, let $(\neighp_b, \pc_b)$ and $(\neighp_c,\pc_c)$ in the maximal cover have domains with non-empty intersection, i.e. $\neighp_b\cap \neighp_c \neq \emptyset$. In $\mathbb{S}^n$, define the CKVFs of $\zeta_b := \pc_{b\star}(\Y)$ and $\zeta_c := \pc_{c\star}(\Y)$. Then, the map $\psf :=  \pc_{c} \circ  \pc_{b}^{-1}$ restricted to $\pc_b(\neighp_b\cap \neighp_c)$ satisfies 
$\psf_\star(\zeta_b) = \zeta_c$. But $\psi$ extends to a global map in $\conf (\mathbb{S}^n)$ and, by the argument above, this relation also extends globally to $\mathbb{S}^n$. Therefore, the vector fields  $\zeta_b, \zeta_c$ associated to $\Y$ are in the same global conformal class of $\mathbb{S}^n$ if $\neighp_b$ and $\neighp_c$ intersect. Moreover, if $\Sigma$ is connected, this is true even if $\neighp_b\cap \neighp_c = \emptyset$, because $\neighp_b$ and $\neighp_c$ can be joined through a finite sequence of neighbourhoods\footnote{Connected manifolds are path connected so there exists  a continuous
  curve $\alpha: [0,1] \rightarrow \Sigma$ joining a point $p \in \neighp_b$
  with a point $q \in \neighp_c$. The set of points $\alpha([0,1])$ is compact, so from any cover one can extract a finite subcover.  It suffices to start with the full cover $\{ \neighp_a \}$ associated to the maximal conformal cover, extract a finite subcover and, in necessary, supplement with $\neighp_b$, $\neighp_c$  to fulfill all the properties that we require.}
$\{\neighp_k\}_{i=1}^K$ in the maximal cover  $\{\neighp_k,\pc_k \}_{k=1}^K \subset \{ \neighp_a,\pc_a \}$ such that $\neighp_k \cap \neighp_{k+1} \neq \emptyset$ and $\neighp_1= \neighp_b$ and $\neighp_K = \neighp_c$.  In $\neighp_k \cap \neighp_{k+1}$,  the map $\psf_k = \pc_{k+1}^{-1} \circ \pc_k$ establishes a conformal map. All such maps, extended globally in $\mathbb{S}^n$ and combined $\psf := \psf_1 \circ \cdots \circ \psf_{K-1}$, determine a conformal relation  $\zeta_b = \psf_\star(\zeta_c)$.  
 
Thus, the above discussion shows that all CKVFs in the conformal class of $\Y$ and $\Y'$ of a connected, locally conformally flat manifold $\Sigma$, determine a unique global conformal class of  CKVF in $\mathbb{S}^n$.
The converse is also true because of the following argument. Let $(\neighp_b,\pc_b)$ belong to the maximal conformal cover and consider $\zeta = \pc_{b\star}(\Y)$ and $\zeta' = \psf_\star(\zeta)$ for any $\psf \in \conf(\mathbb{S}^n)$. Then, as a consequence of Lemma \ref{lemmaxcover}, there exists a pair $(\neighp_c,\pc_c)$ in the maximal conformal cover such that $\pc_c(\neighp_c) \cap \psf(\pc_b(\neighp_b)) \neq \emptyset $. Hence, in $\pc_c(\neighp_c) \cap \psf(\pc_b(\neighp_b)) $ the vector field $\zeta'$ induces, via $\pc_c^{-1}$, a CKVF $\Y'$ of
$\gamma$. By construction, the map $\phi :=\pc_{c}^{-1} \circ \psf \circ \pc_b$ belongs to $\confloc(\Sigma,\gamma)$ and
satisfies $\phi_\star(\Y) = \Y'$ on a non-empty domain. Thus, $\Y'$ is in the conformal class of $\Y$. Summarizing

\begin{proposition}\label{propconfclass}
  Let $(\Sigma,\gamma)$ be a Riemannian, connected and locally conformally flat $n$-manifold  with $n >2$. Then, the conformal classes of CKVF in $(\Sigma,\gamma)$ as given in
  Definition \ref{defconfclass} are in one-to-one correspondence with
  global conformal classes of CKVFs of $\mathbb{S}^n$.
\end{proposition}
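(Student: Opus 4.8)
The plan is to exhibit an explicit map from conformal classes of CKVFs on $(\Sigma,\gamma)$ to global conformal classes of CKVFs on $\mathbb{S}^n$ and to check that it is a well-defined bijection. I would fix once and for all the maximal conformal cover $\{\neighp_a,\pc_a\}$ provided by Definition \ref{defconflat}. Given a CKVF $\Y$ of $(\Sigma,\gamma)$, I would pick any pair $(\neighp_b,\pc_b)$ whose domain meets the domain of $\Y$ and form the push-forward $\zeta_b:=\pc_{b\star}(\Y)$, a CKVF defined on the open set $\pc_b(\neighp_b)\subset\mathbb{S}^n$. Because $n>2$, Remark \ref{remarkconflocS} guarantees that $\zeta_b$ extends uniquely to a global CKVF of $\mathbb{S}^n$, so I may attach to $[\Y]$ the global conformal class $[\zeta_b]$. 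The object to be studied is then the assignment $[\Y]\mapsto[\zeta_b]$.

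The core of the argument is well-definedness, where there are two independent ambiguities to remove. The first is the choice of chart: for two pairs $(\neighp_b,\pc_b)$, $(\neighp_c,\pc_c)$ with $\neighp_b\cap\neighp_c\neq\emptyset$, the transition $\psf:=\pc_c\circ\pc_b^{-1}$ is a local conformal map of $\mathbb{S}^n$, hence extends (by the $n>2$ hypothesis) to a unique $\psf\in\conf(\mathbb{S}^n)$, and since $\psf_\star(\zeta_b)=\zeta_c$ holds on the overlap it holds globally by uniqueness of CKVF extensions, so $[\zeta_b]=[\zeta_c]$. For charts with empty intersection I would invoke connectedness of $\Sigma$ to join $\neighp_b$ and $\neighp_c$ by a finite chain of pairwise-overlapping cover elements (the continuous path between a point of $\neighp_b$ and one of $\neighp_c$ has compact image, so a finite subcover exists), and compose the resulting sphere maps. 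The second ambiguity is the representative of the class: if $\Y'=\phi_\star(\Y)$ with $\phi\in\confloc(\Sigma,\gamma)$ defined on $\neigh$, then choosing charts meeting $\neigh$ and $\phi(\neigh)$ the map $\psf:=\pc_c\circ\phi\circ\pc_b^{-1}$ again lies in $\conf(\mathbb{S}^n)$ after global extension and intertwines the two associated CKVFs, so their global classes coincide.

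With well-definedness in hand, surjectivity rests on Lemma \ref{lemmaxcover}: given any global CKVF $\zeta$ on $\mathbb{S}^n$ and any $(\neighp_b,\pc_b)$, the field $(\pc_b^{-1})_\star(\zeta)$ is a CKVF on $\neighp_b\subset\Sigma$ whose associated global class is $[\zeta]$, and passing to another representative $\psf_\star(\zeta)$ of the sphere class produces, via a cover element overlapping $\psf(\pc_b(\neighp_b))$, a CKVF in the same conformal class on $\Sigma$. Injectivity follows by reversing the well-definedness computation: if $[\Y_1]$ and $[\Y_2]$ both map to $[\zeta]$, the sphere transformation realizing $\zeta_2=\psf_\star(\zeta_1)$ descends through suitable charts to a $\phi\in\confloc(\Sigma,\gamma)$ with $\phi_\star(\Y_1)=\Y_2$ on a nonempty open set, placing $\Y_1,\Y_2$ in the same conformal class. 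The step I expect to be most delicate is the global extension of locally-defined conformal maps and CKVFs on $\mathbb{S}^n$, which is precisely where the hypothesis $n>2$ is indispensable (Remark \ref{remarkconflocS}); the connectedness and compactness chaining needed to compare non-overlapping charts is the other point requiring care, since without it the assignment could a priori depend on the chart.
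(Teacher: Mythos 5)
Your proposal is correct and takes essentially the same route as the paper's own argument: push the CKVF forward through a chart of the maximal conformal cover, use the unique global extension of conformal maps and CKVFs on $\mathbb{S}^n$ (where the hypothesis $n>2$ enters via Remark \ref{remarkconflocS}), eliminate the chart dependence through globally extended transition maps together with a finite chain of overlapping cover elements furnished by connectedness and compactness, and recover the converse direction from Lemma \ref{lemmaxcover}. The only difference is organizational — you package the discussion as well-definedness plus injectivity and surjectivity of an explicit map, while the paper presents the identical steps as a running argument culminating in the proposition.
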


%
%
%
%

\section*{Acknowledgements}

The authors acknowledge financial support under the projects
PGC2018-096038-B-I00
(Spanish Ministerio de Ciencia, Innovaci\'on y Universidades and FEDER)
and SA096P20 (JCyL). C. Pe\'on-Nieto also acknowledges the Ph.D. grant BES-2016-078094 (Spanish Ministerio de Ciencia, Innovaci\'on y Universidades).

\newpage

\end{document}